\documentclass{article}
\usepackage{preprint}
\usepackage[utf8]{inputenc}\usepackage[T1]{fontenc}\usepackage{lmodern}
\usepackage{amsmath,amssymb,amsthm,mathtools,bm}
\usepackage{natbib}\usepackage[hidelinks]{hyperref}\usepackage{booktabs}\usepackage{graphicx}\usepackage{xcolor}\usepackage{enumitem}

\newtheorem{theorem}{Theorem}\newtheorem{proposition}[theorem]{Proposition}
\newtheorem{lemma}[theorem]{Lemma}\newtheorem{corollary}[theorem]{Corollary}
\theoremstyle{definition}\newtheorem{definition}[theorem]{Definition}
\theoremstyle{remark}\newtheorem{remark}[theorem]{Remark}
\newcommand{\indep}{\perp\!\!\!\perp}\newcommand{\doop}{\mathrm{do}}
\newcommand{\Tp}{\mathbb{T}^{+}}\newcommand{\Sel}{\mathsf{Sel}}
\newcommand{\Wg}{W_{\mathrm{gauge}}}\newcommand{\Wcw}{W_{\mathrm{cw}}}
\newcommand{\Dset}{\mathfrak{D}}\newcommand{\Sep}{\textsc{(sep)}}
\newcommand{\Sw}{\mathbb{S}}\newcommand{\Kt}{K^{\tau}_{\min}}
\newcommand{\sep}{\perp_\sigma}\newcommand{\An}{\mathrm{An}}
\newcommand{\GL}{\mathrm{GL}}

\title{Equilibrium Causal Digital Twins:\\ Validation, Transport,\\ and Identification Limits}
\author{
Faraz Dadgostari\\
Department of Mechanical \& Industrial Engineering, Montana State University
\and
Neda Nazemi\\
Gianforte School of Computing, Montana State University
}
\date{}
\begin{document}\maketitle

\begin{abstract}
Digital twins are often used to predict how a system would respond to an intervention. In systems with feedback, a
twin must reproduce an equilibrium counterfactual, and a twin developed in one domain may fail after mechanisms
change. We study when these predictions can be validated and transported. For equilibrium causal games, we give
conditions on the mechanisms, equilibrium selection, and intervention design under which agreement with
experimental distributions identifies the counterfactual of interest. We show why agreement of means and
covariances is insufficient for distributional queries. We then introduce cyclic selection diagrams and derive
criteria for direct reuse and for hybrid models that combine invariant source mechanisms with target information.
An impossibility result constructs systems that agree under every experiment in a finite design but disagree on the
target counterfactual, showing that validation requires structural assumptions. For linear models, we derive
intervention requirements that depend on the mechanisms that changed, the observation model, and graph support.
When point identification fails, we characterize the remaining range of query values. We also provide statistical
tests for reconstructed means and covariances and illustrate the theory in synthetic feedback systems.
\end{abstract}

\section{Introduction}
Digital twins are used to answer intervention questions: what would the system do if a mechanism were changed?
For a system with feedback, the answer is an equilibrium rather than a one-way propagation through an acyclic
graph. A simple two-agent feedback loop already shows the difference. Removing one feedback arrow can erase the
path that carries both the intervention effect and its sensitivity to a change of domain
\citep{Chin2021nonrobust}.

This paper separates three questions. \emph{Validation} asks whether a twin gives the correct counterfactual in
the domain where it was developed. \emph{Transport} asks whether that answer remains correct after some mechanisms
change. \emph{Identification} asks whether the available observational and experimental distributions determine
the relevant mechanisms or, more modestly, the query itself. These questions are related, but none can be replaced
by the others.

The distinction is especially important for counterfactuals. Interventions in cyclic latent models can be
identified under graphical and solvability conditions \citep{ForreMooij2019calculus,Bongers2021cyclic}, but a
counterfactual links factual evidence to a hypothetical intervention and therefore lies at a higher level of the
causal hierarchy \citep{BareinboimCII2022pch,Bongers2021cyclic}. In acyclic models, prior work characterizes
testable counterfactuals \citep{ShpitserPearl2007testable,TianPearl2002testable} and develops transportability
through selection diagrams and do-calculus
\citep{PearlBareinboim2014external,BareinboimPearl2016fusion,BareinboimPearl2013limited,
BareinboimPearl2012completeness,CorreaLeeBareinboim2022ctf}. Existing digital-twin approaches likewise use
acyclic potential-outcome models \citep{Laudy2026dtcf}, while time-unrolled approaches remain acyclic within each
time slice \citep{BlondelAriasGavalda2016dcn}. The equilibrium setting requires a different treatment because the
feedback solution itself carries the causal response.

\paragraph{Main results.}
First, we give conditions under which agreement with a collection of interventional distributions validates a
counterfactual within one domain. The conditions separate monotonicity of the mechanisms, stability of equilibrium
selection, and identification of the query-relevant response. They also distinguish full factual information from
partial factual information and distributional agreement from agreement of only means and covariances.

Second, we develop transport rules for cyclic selection diagrams. Direct reuse is valid when the relevant
post-intervention ancestors are invariant and aligned across domains. When changed mechanisms are relevant, a
hybrid model combines invariant source mechanisms with mechanisms re-identified in the target.

Third, we show why structural assumptions cannot be avoided. Under explicit witness conditions, two systems can
agree under every experiment in a finite design while assigning different values to the same counterfactual.
Thus experimental agreement alone does not validate a cross-world prediction.

Fourth, for linear models we derive target-intervention requirements that depend on the set of changed
mechanisms, the observation model, and graph support. We characterize the remaining set of query values when point
identification fails, show when query-specific experiments can require less information than full model recovery,
and provide statistical procedures for comparing reconstructed means and covariances.

\paragraph{Scope.}
The positive results require the structural and design conditions stated with each theorem. Intervention counts
are not universal: they depend on what is observed, which mechanisms may differ, and which support restrictions are
known. The finite-sample and asymptotic procedures likewise apply only under their stated sampling, rank, and
regularity assumptions. The numerical examples illustrate the constructions but do not establish general
identification or empirical performance.

\paragraph{Organization.}
Section~\ref{sec:obj} introduces equilibrium causal twins, counterfactual queries, and domain changes.
Sections~\ref{sec:lemma} and~\ref{sec:head} develop validation within one domain. Section~\ref{sec:transport}
gives direct and hybrid transport results, and Section~\ref{sec:semantics} explains why acyclic reductions can
fail. Section~\ref{sec:untest} establishes the finite-design impossibility result. Sections~\ref{sec:count},
\ref{sec:bounds}, and~\ref{sec:vmin} study target interventions, partial identification, alignment, and
query-specific design. Section~\ref{sec:instr} presents the statistical procedures, followed by numerical
illustrations and limitations.

\section{Equilibrium causal twins and counterfactual queries}\label{sec:obj}
\subsection{Model and equilibrium selection}
An equilibrium causal game (ECG) consists of structural equations
$V_i=f_i(V_{\mathrm{pa}(i)},U_i)$. The scalar noises are mutually independent and have continuous,
strictly increasing distribution functions. Each strongly connected component has a unique solution under a
declared measurable selection rule $\Sel$, and that rule is stable across the interventions under study. In the
linear specialization, $V=BV+c+U$, $\rho(B)<1$, and the observed variables satisfy $X=HV$, where $H$ has full
column rank and is invariant across environments.

\subsection{Counterfactual queries and interventions}
A per-unit equilibrium counterfactual $\chi(\mathcal S;\mathfrak f,I)$ is obtained by recovering the exogenous
state from the factual observation, applying intervention $I$, solving the modified system under $\Sel$, and
reading $e_Y^\top V_{cf}$. The query must be invariant under the residual gauge of the identified object (the
sign-absorbed orbit $\simeq_R$). Otherwise the gauge width in Section~\ref{sec:bounds} remains irreducible.

Unless stated otherwise, $I$ changes a mechanism. A payoff or curvature intervention is covered only after the
reward layer has been transported and re-evaluated at target moments. In particular, target omitted-variable
bias can reintroduce bias into a reward that was unbiased in the source, and the transported reward's
$\lambda$-scale gauge must be restored once per domain. These obligations are established in the companion
reward paper \citep{ECG3_2026}. The results below extend to payoff interventions only when those target-domain
conditions hold.

\subsection{Conditions for validation}
Validation proceeds through four logically separate steps. The first recovers factual noise ranks, the second
determines what partial factual information implies, the third fixes the equilibrium branch, and the fourth
asks whether the experiment identifies the mechanisms relevant to the query. Table~\ref{tab:boundary}
summarizes their roles; the formal conditions follow immediately below.
\begin{table}[t]
\centering\small
\begin{tabular}{@{}p{0.11\textwidth}p{0.27\textwidth}p{0.50\textwidth}@{}}
\toprule
Condition & Object controlled & Consequence for the query \\
\midrule
T1 & strict noise monotonicity & factual values recover the private-noise ranks \\
T2 & complete legal fibre & every legal member induces the same post-intervention query kernel \\
T3 & SCC-local selection & ranks and the solution set select the same equilibrium branch \\
T4 & declared response object & response equality forces query-kernel equality throughout the fibre \\
\bottomrule
\end{tabular}
\caption{The four steps in the validation argument. None substitutes for another: T1 concerns abduction, T2
partial factual information, T3 equilibrium choice, and T4 the information supplied by the experiment.}
\label{tab:boundary}
\end{table}

\paragraph{T1: noise ranks.}
For every node and parent value, $f_i(v,\cdot)$ is strictly increasing. A factual value therefore recovers the
rank of its private noise.

\begin{definition}[Partial-factual query condition (T2)]
For retained evidence $y$, let $\mathcal F_{\mathcal D}(y)$ be the complete legal fibre, let
$W$ be a common labelled, gauge-invariant standard-Borel factual variable with evidence-implied law $P_W^y$,
and let the query output lie in a common aligned or quotiented space $Y_I$. Every
$\theta\in\mathcal F_{\mathcal D}(y)$ supplies a regular conditional law $\Lambda_w^\theta$ of all private
exogenous blocks and selection seeds used by the query and a jointly measurable, well-posed post-intervention
solve $S_I^\theta$; write $K_I^\theta(w,\cdot)=(S_I^\theta)_\#\Lambda_w^\theta$. A partial-factual query is
point-identified exactly when
$\{[K_I^\theta]_{P_W^y}:\theta\in\mathcal F_{\mathcal D}(y)\}$ is a singleton, and only
$P_W^y$-almost everywhere. For a fixed finite block partition, conditional independence instead requires
$\Lambda_w^\theta=\bigotimes_b\Lambda_{w,b}^\theta$ almost everywhere. Membership in that null and calibration
of a particular test are separate questions. Loop-gain measurability is only an optional Jacobian diagnostic:
by itself it is neither necessary nor sufficient for either conclusion.
\end{definition}

\paragraph{T3: equilibrium selection.}
Within each strongly connected component, the shared selection rule is a measurable function of the private-noise
rank vector and the solution set. It is invariant under strictly increasing reparameterizations of each noise
coordinate. Thus the same ranks and the same solution set select the same branch.

\begin{definition}[Design sufficiency for a query (T4)]
Let $R_{\mathcal D}(\theta)$ be the full declared response object supplied by the experimental design, with all
labels, intervention operators, signed doses, units, and response laws retained. The design is sufficient for
the query if, for every $\theta,\theta'$ in the complete legal fibre,
\[
R_{\mathcal D}(\theta)=R_{\mathcal D}(\theta')
\quad\Longrightarrow\quad
[K_I^\theta]_{P_W^y}=[K_I^{\theta'}]_{P_W^y}.
\]
For full factuals the same definition uses the corresponding point-mass factual law. This is an observable
response condition, not an assumption that the mechanisms have already been identified.
\end{definition}

The empirical evidence has three components. (C-a) compares predicted and observed responses under staged
interventions. Agreement of means and free-block covariances is a moment condition; agreement of the complete
interventional kernels is a distributional condition. The former supports the latter only for the specific
query classes described in Theorem~\ref{thm:sound}. Even in a linear model, matched first and second moments do
not identify the noise distribution. (C-b) examines invariance of structural residuals across regimes, using
declared separators and disciplined conditional tests. (C-c) checks the structural class itself, including
monotonicity, noise assignment, and the selection rule.

For a candidate twin $\mathcal T$, write $\mathsf{Accept}_{\mathcal D}(\mathcal T)$ for the event that every
predeclared, nonempty block procedure associated with design $\mathcal D$ accepts. This notation records only
the conjunction of those procedures; it does not assert that their implications identify an arbitrary query.

\subsection{Domain changes and alignment}
A source $\mathcal S^\pi$ and target $\mathcal S^\tau$ are ECGs on a shared
graph $G$ over $V$ ($|V|=d$), each uniquely solvable per SCC under a shared $\Sel$. Define the
\textbf{discrepancy set} by
\[
\Dset=\{i:(f_i,P_{U_i})\text{ differ across domains}\}.
\]
For $i\notin\Dset$, both the mechanism and its noise law are invariant.
\emph{Exogenous independence (standing):} in \emph{both} domains the $U_i$ are
mutually independent (diagonal $\Omega$), so all cross-domain difference is localized to the per-node
$\{(f_i,P_{U_i})\}$; this excludes a shared-marginal copula shift (an invariant $\Dset$---even
$\Dset=\varnothing$---with a distinct exogenous joint law, which would reproduce every node-marginal yet move the
equilibrium law). The \textbf{cyclic selection diagram} is $D=G+\{\Sw_i\to i:i\in\Dset\}$ with
\emph{mechanism-level} semantics (Definition~\ref{def:diag}). \textbf{Linear/LQ instance:}
$V=B^\omega V+c^\omega+U^\omega$, zero-diagonal $B^\omega$, $\rho(B^\omega)<1$; rows $i\notin\Dset$ share
$(B_{i,\cdot},c_i,\Omega_{ii})$; sensor map $X^\omega=H^\omega V^\omega$, $H^\omega$ full column rank.
\textbf{Alignment $\mathcal A$:} latent frames are matched either by shared sensors ($H^\tau=H^\pi=:H$,
``anchoring'') or by the invariance-labeled frame; Proposition~\ref{prop:align} quantifies the residual
ambiguity when neither suffices.

\begin{definition}[Cyclic selection diagram; mechanism-level switches]\label{def:diag}
$\Sw_i$ annotates the \emph{mechanism} of $i$, never a solution map. The augmented model is an ioSCM whose
per-regime unique solvability holds because each fixed \emph{pure} $\Sw$-value reproduces one domain's model
(under the exogenous-independence assumption above, so all cross-domain difference is $\Sw$-annotated); for
\emph{mixed} $\Sw$-profiles (some coordinates $\pi$, others $\tau$) we carry the \citet{ForreMooij2019calculus}
SCC/loop-solvability hypothesis on the augmented model --- or, w.l.o.g.\ for Rule-D, take $\Sw$ as a
\emph{single global switch node}, which admits only the two pure profiles. The generalized directed global
Markov property applies to it.
\end{definition}

\section{Counterfactual agreement under monotone mechanisms}\label{sec:lemma}
The next lemma connects interventional distributions to unit-level counterfactuals. It provides the common
argument for validation within a domain (Theorem~\ref{thm:sound}) and transport across domains
(Theorem~\ref{thm:cf}).
\begin{lemma}[Quantile abduction under feedback]\label{lem:qa}
Let $\mathcal S,\mathcal S'\in\Tp$ share the graph, $\Sel$, the intervention semantics of $I$, and the
query-relevant \emph{interventional} kernels
\[
P\!\big(V_i\mid \operatorname{do}(V_{\mathrm{pa}(i)}{=}v_{\mathrm{pa}(i)})\big)
\]
for every $i$ on the post-surgery ancestral SCC-set of the query (with $\Sel$ rank-measurable per (T3)). Then
every full-factual per-unit counterfactual coincides: $\chi(\mathcal S;v,I)=\chi(\mathcal S';v,I)$ for a.e.\ $v$
(w.r.t.\ the factual law).
\end{lemma}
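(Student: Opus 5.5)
The plan is to prove Lemma~\ref{lem:qa} with a canonical quantile reparameterization. Under (T1), for each node $i$ and parent value $v$, the map $u\mapsto f_i(v,u)$ is strictly increasing. The noise $U_i$ has a continuous, strictly increasing distribution function $F_{U_i}$, so we may write $U_i=F_{U_i}^{-1}(R_i)$ with $R_i\sim\mathrm{Unif}(0,1)$. The composite $g_i(v,r):=f_i(v,F_{U_i}^{-1}(r))$ is then strictly increasing in $r$.

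The key observation is that $g_i(v,\cdot)$ is exactly the conditional quantile function of the interventional kernel. Under $\operatorname{do}(V_{\mathrm{pa}(i)}=v)$ we have $V_i=f_i(v,U_i)$, since node $i$'s own equation is untouched and $U_i$ is independent of the intervened parents. Hence $P(V_i\le g_i(v,r)\mid\operatorname{do}(V_{\mathrm{pa}(i)}=v))=r$, and
\[
g_i(v,r)=F^{-1}_{V_i\mid \operatorname{do}(v)}(r).
\]
Equality of these kernels between $\mathcal S$ and $\mathcal S'$ for every $i$ in the post-surgery ancestral SCC-set $A$ therefore gives $g_i=g_i'$ for all $v$ and a.e.\ $r$. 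Continuity of $F_{U_i}$ makes the kernel atomless, so $g_i(v,\cdot)$ is determined everywhere on $(0,1)$ up to a null set. Thus both models have the same canonical (rank) representation on $A$. Nodes outside $A$ are irrelevant: the post-intervention value of $Y$ depends only on mechanisms and noises in its post-surgery ancestral set, because SCC solvability lets us solve in topological order of the condensation.

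Next comes abduction. Given a factual value $v$, strict monotonicity lets us invert each coordinate, $r_i=g_i(v_{\mathrm{pa}(i)},\cdot)^{-1}(v_i)$. Because the factual $v$ is a full observation, this recovery is coordinatewise and needs no solving of the cycle. It gives the same rank vector $r$ in $\mathcal S$ and $\mathcal S'$ wherever $g_i=g_i'$. The exceptional set on which $g_i$ and $g_i'$ disagree has measure zero in $r$, so its preimage is null under the factual law. This is the source of the ``a.e.\ $v$'' qualifier. One subtlety is that the factual $v$ must itself be the selected equilibrium; we use the factual law to restrict to such $v$.

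Then we perform surgery and the re-solve. For each SCC $C$ in $A$, processed in topological order of the post-intervention condensation, apply $I$ (shared semantics) and form the system $v_C=g_C(v_{\mathrm{pa}(C)\setminus C},v_C,r_C)$. The inputs coming from upstream SCCs agree by induction. The equations agree because $g=g'$ pointwise on the recovered ranks. Hence the solution sets coincide, and by (T3) the selection rule, being a measurable function of the rank vector and the solution set, picks the same branch. Reading $e_Y^\top V_{cf}$ then gives $\chi(\mathcal S;v,I)=\chi(\mathcal S';v,I)$.

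The main obstacle is this last step. The ranks are recovered only a.e., and the selection rule is rank-measurable under reparameterizations. Equality of solution sets requires $g_C=g'_C$ at the specific ranks $r_C$ but for all candidate values $v_C$, including off-equilibrium parent values. Kernel equality gives equality for a.e.\ $r$ at each fixed $v$, not jointly. My first attempt would be to use monotonicity in $r$: two monotone functions that agree a.e.\ agree at all continuity points, and strictly increasing continuous-law quantiles can be taken left-continuous. That upgrades the a.e.\ equality to equality for every $r$ and every $v$, after which the rest of the argument is exact.
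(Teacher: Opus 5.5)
Your proposal is correct and follows the paper's own argument. Each mechanism is written as the quantile of the parent-fixing interventional kernel evaluated at the noise rank, the ranks are abducted coordinatewise from the full factual, and the post-surgery ancestral system is re-solved in condensation order with the rank-measurable $\Sel$, so every ingredient is a functional of the shared interventional kernels. The paper never treats your ``main obstacle'' separately: it takes $f_i(v,u)=Q_i^{\doop}(v,F_i(u))$, with $Q_i^{\doop}$ the canonical left-continuous quantile of the shared kernel, as an exact identity (and abducts via $\tau_i=F_{V_i\mid\doop(\mathrm{pa})}(v_i)$, which is exact for every $v$), which is your proposed fix; note, however, that in general $Q_i^{\doop}(v,F_i(u))=f_i(v,u^-)$, so this identity, and hence your ``upgrade to every $r$'', tacitly requires $f_i(v,\cdot)$ to be left-continuous or normalized to that version, an assumption the paper also leaves implicit.
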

\emph{Proof.} By (T1) each mechanism is $f_i(v,u)=Q_i^{\operatorname{do}}(v,F_i(u))$, where $Q_i^{\operatorname{do}}$
is the conditional quantile of the \emph{interventional} kernel $P(V_i\mid \operatorname{do}(V_{\mathrm{pa}(i)}))$
--- the structural response with the parents held fixed --- and \emph{not} the observational
$P(V_i\mid V_{\mathrm{pa}(i)})$, which under feedback confounds $U_i$ with its own realized parents and is
therefore not the abduction object. Abduction fixes the rank $\tau_i=F_{V_i\mid \operatorname{do}(\mathrm{pa})}(v_i)$;
the post-surgery rank-coupled system is re-solved by the shared $\Sel$, which by the strengthened (T3) is
rank-measurable and so reads only the shared ranks, not the per-node noise gauge. Every ingredient is a
functional of shared \emph{interventional} objects. \hfill$\square$ The argument is the equilibrium instance of
BGM monotone-determinism \citep{NasrEsfahany2023bijective}, which is stated for structural/interventional
mechanisms rather than cyclic observational conditionals.

\paragraph{Interpretation.}
Strict monotonicity lets the factual observation recover a rank for each noise coordinate. Once the relevant
interventional kernels and the equilibrium selection rule agree, the same ranks generate the same
post-intervention outcome. Observational conditionals alone do not supply this information in a feedback system.

\noindent\textbf{Sharpness (observational conditionals do not suffice).} Two systems in $\Tp$ sharing
\emph{every} observational conditional but differing on a query-relevant interventional kernel can disagree on
$\chi$: the linear two-node witness $V_1{=}aV_2{+}U_1,\ V_2{=}bV_1{+}U_2$ with common
$\Sigma{=}\left(\begin{smallmatrix}1&0.5\\0.5&1\end{smallmatrix}\right)$ realises $\operatorname{do}(V_2)$-slopes
$a{=}0.2$ and $a{=}0.4$, while observation-only abduction returns the wrong common slope $0.5$; hence the
hypothesis cannot be weakened to observational conditionals. When only observational laws and invariance
(C-a,C-b \emph{without} staged interventions) are available, $\chi$ is set-identified with the sharp interval of
\S\ref{sec:bounds}, not a point.

\section{Validation within one domain}\label{sec:head}
We now state the conditions under which experimental agreement determines the counterfactual of interest. The
result separates the population identification requirement from the finite-sample procedures used to compare a
fixed candidate twin with the reference system.
\begin{theorem}[When experimental agreement identifies the query]\label{thm:sound}
If $\mathcal S,\mathcal T\in\Tp(\mathcal D)$ (verified by C-c) with shared graph/$\Sel$ and the staged set meets
a \emph{sufficiency} design of Theorem~\ref{thm:vmin} for $\chi$ --- one achieving $\Delta_\chi(\mathcal D)=0$
(e.g.\ any (T4) design, or the conditional-affine $V_{\min}^{\rm qry}(\chi)$ block cover; \emph{not} the bare
first-order $V_{\min}^{\rm FO}$, which the $y{+}x^2$ counterexample shows insufficient off the affine case;
\S\ref{sec:vmin}), and $\mathcal T$ passes \emph{distributional} (C-a) and (C-b)
population-exactly, then $\chi(\mathcal T;v,I)=\chi(\mathcal S;v,I)$ a.e.\ for full factuals.  For a
partial factual, equality instead requires the exact (T2) boundary: the complete legal fibre has a singleton
$P_W^y$-a.e.\ class of query pushforward kernels $[K_I^\theta]_{P_W^y}$.  Identification of the full
$(\Lambda^\theta,S_I^\theta)$ pair is sufficient but not necessary; product factorization supplies the declared
conditional independence, not equality of partial-factual query posteriors. The moment (C-a) back-tests deliver \emph{distributional} (C-a) only \emph{query-by-query}, never
blanketly: even in the \emph{linear} identified class, matching means and covariances does \emph{not} identify
the noise \emph{distribution} ($B{=}0,H{=}1$: $U\!\sim\!N(0,1)$ and $U\!\sim\!\mathrm{Exp}(1){-}1$ share every
additive-shift moment yet $P(U{>}2){=}0.023\neq0.050$). Under identified linear structure the transfer is
therefore query-specific: \emph{full-factual state-functional} queries transfer from structure and the full
factual state alone (\emph{no} noise-distribution identification); \emph{interventional moment} queries transfer
when they factor through the identified first two moments; \emph{interventional distributional} queries transfer
only under a declared \emph{moment-determined} noise family (or the distributional (C-b)/PIT route); and
\emph{partial-factual} queries only when the complete-fibre pushforward class is a singleton as in (T2). A general
\emph{nonlinear} mechanism fails even the interventional-moment step (moment-matched twins differ on the
interventional kernel and on $\chi$). So distributional (C-a) is an added, query-specific
hypothesis, \emph{not} delivered by moment matching alone.
\end{theorem}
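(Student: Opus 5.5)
The proof splits along the statement's own seams: full-factual equality, partial-factual equality, and the query-by-query transfer claims together with their counterexamples.

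\textbf{Full factuals.} The plan is to reduce to Lemma~\ref{lem:qa}. The sufficiency hypothesis ($\Delta_\chi(\mathcal D)=0$, for instance any (T4) design) means that the population equality $R_{\mathcal D}(\mathcal T)=R_{\mathcal D}(\mathcal S)$ forces equality of the query kernel. To get there, I would first show that distributional (C-a) passed population-exactly gives equality of the complete declared response object. Adding (C-b) gives invariance of the structural residuals, which pins the per-node noise assignment used in abduction. Next I would check that the query-relevant interventional kernels $P(V_i\mid\doop(V_{\mathrm{pa}(i)}))$ on the post-surgery ancestral SCC-set coincide. For a (T4) design this follows directly from the definition applied to point-mass factual laws. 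For the conditional-affine block cover it should follow because each block of the cover stages the parents of an ancestral node, so the interventional conditional quantile $Q_i^{\doop}$ can be read off from the staged response law. Membership in $\Tp(\mathcal D)$ (verified by C-c) supplies T1 and T3: strict monotonicity and rank-measurable selection. Lemma~\ref{lem:qa} then gives $\chi(\mathcal T;v,I)=\chi(\mathcal S;v,I)$ for a.e.\ $v$.

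\textbf{Partial factuals.} Here the conclusion is essentially definitional. The query value is the $P_W^y$-class $[K_I^\theta]_{P_W^y}$. If the complete legal fibre yields a singleton class, then $\mathcal S$ and $\mathcal T$, both in the fibre, share it. For the converse, I would exhibit two fibre members with distinct classes, observe that both pass (C-a)/(C-b) because they lie in the same fibre, and conclude that equality cannot be certified. Sufficiency of full $(\Lambda^\theta,S_I^\theta)$ identification is immediate because the kernel is the pushforward $(S_I^\theta)_\#\Lambda^\theta_w$. Non-necessity needs one example in which distinct pairs have the same pushforward, for instance a relabelling of an exogenous block that the solve ignores. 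Product factorization constrains only $\Lambda$, not the composite, so it gives conditional independence without posterior equality.

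\textbf{Query-by-query transfer and counterexamples.} In the linear class with identified $(B,H)$, the argument proceeds case by case.
\begin{itemize}
\item Full-factual state functionals: $U=(I-B)V-c$ is recovered exactly, and the counterfactual is an explicit linear solve, so no noise law enters.
\item Interventional moment queries: these are affine images of the first two moments of $U$.
\item Interventional distributional queries: these need the law of $U$. They transfer when that law is fixed by its moments within a declared family, or by the PIT route.
\end{itemize}
The separating example is $B=0$, $H=1$, with the normal and centred exponential sharing mean and variance. Since $\Phi$ gives $P(U>2)\approx 0.023$ while the centred exponential gives $e^{-3}\approx 0.050$, matched moments do not transfer tail queries. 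For the nonlinear failure I would use the $y+x^2$ mechanism from \S\ref{sec:vmin}: two noise laws with equal mean and variance but different fourth moments give different interventional means of $V_Y$.

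\textbf{Main obstacle.} The step I expect to be hardest is showing that the conditional-affine block cover achieves $\Delta_\chi=0$, that is, that staged responses really determine every needed do-kernel rather than only their affine part. If that direct argument fails, I would invoke Theorem~\ref{thm:vmin} as a black box, since the statement explicitly defers sufficiency to it.
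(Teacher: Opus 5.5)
Your overall architecture follows the paper's proof: fibre membership from distributional (C-a), design sufficiency, Lemma~\ref{lem:qa} for full factuals, the (T2) singleton for partial factuals, and the linear moment cases with the Gaussian/centred-exponential example. Two steps, however, fail as written.

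\textbf{The block-cover step.} A $V_{\min}^{\rm qry}(\chi)$ cover need not stage the parents of every post-surgery ancestral node, so it need not pin every $Q_i^{\doop}$. The pinned-response illustration of Corollary~\ref{cor:vmin-rowspace} shows this: $B_{01}$ is identified with two blocks, while full $B$ needs three. Your ``main obstacle'' disappears once you use what the hypothesis actually gives. Population-exact distributional (C-a) means $\mathcal L_{\mathcal D}(\mathcal T)=\mathcal L_{\mathcal D}(\mathcal S)$, so $\mathcal T\in\mathcal K_{\mathcal D}$, and $\Delta_\chi(\mathcal D)=0$ then forces the query to agree. The paper's proof starts from exactly this fibre membership and gets what Lemma~\ref{lem:qa} needs from constancy on the fibre, not from which nodes are staged. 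It uses (C-b) to extend kernel equality from the staged regimes to the regimes used by the query; noise assignment is a (C-c) item. Likewise, (T4) with a point-mass factual law gives equality of the query kernel itself, which is already the per-unit counterfactual. It does not give the per-node do-kernels you attribute to it.

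\textbf{The nonlinear counterexample.} The $y+x^2$ of Section~\ref{sec:vmin} is a query on parameter space that illustrates the failure of first-order design criteria; it is not a structural mechanism. Read as $V_Y=V_X^2+U_Y$, it does not do what you claim, for three reasons:
\begin{itemize}
\item $E[V_X^2]=\mathrm{Var}(V_X)+(EV_X)^2$ depends only on the first two moments, so different fourth moments change $\mathrm{Var}(V_Y)$, not the interventional mean.
\item That variance is itself a tested covariance, so the pair is not moment-matched.
\item Under $\doop(V_X)$ the parent noise is excised altogether.
\end{itemize}
You need two twins that agree on every tested mean and covariance yet differ off-design. For instance, take $V_1=U_1\sim N(0,1)$ with either $V_2=V_1+U_2$, $U_2\sim N(0,3)$, or $V_2=V_1+V_1^2-1+U_2'$, $U_2'\sim N(0,1)$. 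Both have passive means $(0,0)$ and $\mathrm{Var}(V_1)=1$, $\mathrm{Cov}(V_1,V_2)=1$, $\mathrm{Var}(V_2)=4$. Yet $E[V_2\mid\doop(V_1=2)]$ is $2$ versus $5$, and the full-factual counterfactuals differ by $4-v_1^2$. The paper's own proof argues only the linear moment cases and the tail example, so this clause is where your proposal has to supply a correct construction of its own.
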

\emph{Proof.} See Appendix~\ref{app:validation-proofs}. \hfill$\square$

\begin{proposition}[Selected-summary modulus]\label{prop:selected-modulus}
For finite samples, keep the population object and the finite-sample summary distinct. Let
$\mathcal L_{\mathcal D}(\theta)$ be the complete indexed observational and
staged laws used by Theorem~\ref{thm:vmin}.  Separately, let $\mathsf M_{\mathcal D}^{\rm mc}(\theta)$ be the
fixed, predeclared mean--covariance summary actually tested: for each of the $E$ environments it has one
reconstructed-mean block and one Frobenius-isometric $\mathrm{svec}_{\rm free}(\Sigma_{FF}^{(e)})$ block. Zero-dimensional
blocks with $F_e=\varnothing$ are omitted; let $\mathcal B_{\mathcal D}$ index the remaining blocks and put
$B=|\mathcal B_{\mathcal D}|\le2E$. After fixed,
data-independent block scalings (identity here), write
\[
 \|R_{\mathcal D}^{\rm mc}(\theta)\|_\oplus^2
 :=\|\mathsf M_{\mathcal D}^{\rm mc}(\theta)-\mathsf M_{\mathcal D}^{\rm mc}(\theta_0)\|_\oplus^2
 =\sum_{e=1}^E\bigl(\|r_e^m(\theta)\|_2^2+\|r_e^\Sigma(\theta)\|_2^2\bigr).
\]
For a fixed, validation-data-independent twin in a deterministic localization $K_{\eta,R}$, assume the
\emph{selected-summary modulus}
\[
 \|\chi(\theta)-\chi(\theta_0)\|_{\mathsf Q}
 \le C_L\|R_{\mathcal D}^{\rm mc}(\theta)\|_\oplus^s
 \qquad\forall\theta\in K_{\eta,R}, \tag{$\star_{\rm mc}$}
\]
where $\|\cdot\|_{\mathsf Q}$ is the declared query norm and $C_L,s>0$ are truth-specific instance constants.
This is a moment-to-query sufficiency and separation premise; it is \emph{not} implied by constancy on the
complete-law fibre.  A data-selected twin needs a uniform power statement for its selection rule, and a twin
outside $K_{\eta,R}$ contributes $\Pr[\mathrm{localization\ fails}]$ unless another modulus is supplied.

The modulus may be derived by {\L}ojasiewicz only when $K_{\eta,R}/{\sim}$ is a compact legal quotient that is
closed and Hausdorff, and is definable in a common \textbf{polynomially bounded} o-minimal expansion. Moreover,
$\mathsf M_{\mathcal D}^{\rm mc}$,
$\chi$, and the declared norms descend to continuous definable maps on the entire quotient (or on a finite
closed-branch decomposition with common exponent $s=\min_i s_i$ and an adjusted maximum constant); and the separate
truth-centred moment-fibre condition
\[
 \mathsf M_{\mathcal D}^{\rm mc}(\theta)=\mathsf M_{\mathcal D}^{\rm mc}(\theta_0)
 \Longrightarrow \chi(\theta)=\chi(\theta_0) \qquad(\theta\in K_{\eta,R}) \tag{Suff-M}
\]
holds.  These hypotheses give existential, not numerically computed, instance constants by the definable
{\L}ojasiewicz inequality \citep{vandenDriesMiller1996}.  Merely o-minimal is insufficient: by Miller's
dichotomy \citep{Miller1994exp}, $\chi(t)=t$ and $M(t)=e^{-1/t}$ on $[0,1]$ defeat every power modulus.
The linear/LQ class is semialgebraic (with restricted-analytic tail queries) on
$K_{\eta,R}=\{\|B\|_F\le R,\ \eta I\preceq\Omega\preceq RI,\ \sigma_{\min}(H)\ge\eta,\ \|H\|\le R,
\ \|c\|\le R,\ \eta\le c_\Sigma\le R,\ \phi\in\Xi,\ |\lambda_i(B^{(e)})|\le1-\eta\ \forall e\in
\mathcal D\cup\mathrm{patterns}(\chi)\}$, where $\Xi$ is closed, bounded and definable.  The per-pattern
spectral margins keep every relevant resolvent finite; a $\rho(B)$-only truncation does not.  The constants can
diverge at an unprobed-pattern pole, covariance floor or rank boundary.  The exponent $s$ is the exponent of
$(\chi,R_{\mathcal D}^{\rm mc})$; it equals a ratio of vanishing orders only after a uniform arc-wise proof.
The equal-moment Gaussian/centred-exponential nonlinear pair violates (Suff-M), so this moment route is
unavailable; complete-law separation cannot be substituted while retaining a moment-based rate.
\end{proposition}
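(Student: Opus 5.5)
The proposition has three parts, and I will prove them in order. First, the modulus $(\star_{\rm mc})$ is a premise, and the finite-sample statement follows from it by a union bound. Second, the {\L}ojasiewicz route derives $(\star_{\rm mc})$ under the stated definability, compactness, continuity and (Suff-M) hypotheses. Third, the two negative remarks need counterexamples: mere o-minimality is not enough, and the Gaussian/exponential pair breaks (Suff-M).

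For the first part I take $(\star_{\rm mc})$ as given on $K_{\eta,R}$. The twin is fixed and independent of the validation data, so the block procedures behind $\mathsf{Accept}_{\mathcal D}$ act on one $\theta$. On the event that localization holds and the blocks accept, $\|R^{\rm mc}_{\mathcal D}(\theta)\|_\oplus$ is controlled by the acceptance radius. This radius sums at most $B\le 2E$ block radii, since empty blocks are dropped and the $\mathrm{svec}_{\rm free}$ map is a Frobenius isometry. It therefore transfers to a query bound $C_L(\cdot)^s$. The failure probability is at most the block level plus $\Pr[\text{localization fails}]$.

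For the second part, pass to the quotient $Q=K_{\eta,R}/{\sim}$, which is compact, Hausdorff and definable in a polynomially bounded o-minimal structure. Define continuous definable functions $f=\|\chi-\chi(\theta_0)\|_{\mathsf Q}$ and $g=\|R^{\rm mc}_{\mathcal D}\|_\oplus$ on $Q$. By (Suff-M), $g^{-1}(0)\subseteq f^{-1}(0)$. The definable {\L}ojasiewicz inequality of \citet{vandenDriesMiller1996} for continuous definable functions on a compact set in a polynomially bounded structure then gives $f\le C g^{s}$ for some $C,s>0$. If only a finite closed-branch decomposition is available, I apply the inequality on each branch. I then take $s=\min_i s_i$ and $C=\max_i C_i\cdot \sup g^{s_i-s}$, which is finite because $g$ is bounded on a compact set.

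For the linear/LQ class, the entries of $\mathsf M^{\rm mc}$ are rational in $(B,c,\Omega,H)$ through the resolvents $(I-B^{(e)})^{-1}$. The per-pattern spectral margins $|\lambda_i|\le1-\eta$ keep every pattern, including the query patterns, away from poles. The set $K_{\eta,R}$ is a closed bounded semialgebraic set, so these maps are continuous. Tail queries involve the Gaussian cdf on a compact range, which is restricted-analytic, so everything lives in $\mathbb R_{\rm an}$, which is polynomially bounded. I would also note that a $\rho(B)$-only bound fails, because an unprobed pattern can still hit a pole.

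For the third part I give two explicit counterexamples. On $[0,1]$ take $\chi(t)=t$ and $M(t)=e^{-1/t}$. These are definable in $\mathbb R_{\exp}$, and both functions vanish only at $0$, so the zero-set inclusion holds. But $t\le C e^{-s/t}$ fails as $t\to0$ for every $C,s>0$. \citet{Miller1994exp} explains why this is not accidental: a non-polynomially-bounded structure defines $\exp$. For the moment route, the $N(0,1)$ versus $\mathrm{Exp}(1)-1$ pair from Theorem~\ref{thm:sound} has identical mean and covariance blocks but different tail query values. So (Suff-M) fails at $\theta_0$, and no moment-based modulus can exist.

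The main obstacle is checking that $\mathsf M^{\rm mc}$, $\chi$ and the norms genuinely descend to continuous definable maps on a Hausdorff quotient, since a gauge orbit may fail to be closed. To handle this, I would first choose a definable closed slice of gauge representatives, for example sign-normalized rows. Failing that, I would fall back to the finite closed-branch decomposition.
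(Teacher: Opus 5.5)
Your second and third parts are correct and match the paper's proof. On the compact Hausdorff definable quotient, (Suff-M) gives $g^{-1}(0)\subseteq f^{-1}(0)$, and the definable {\L}ojasiewicz inequality in a polynomially bounded expansion yields $f\le Cg^s$. Branches are merged with $s=\min_is_i$ and the adjusted constant $\max_iC_i\sup g^{s_i-s}$. The per-pattern spectral margins keep every relevant resolvent continuous. The pair $t$ versus $e^{-1/t}$ and the Gaussian/centred-exponential pair supply the two negative remarks.

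The ``main obstacle'' you name is not yours to overcome. That $K_{\eta,R}/{\sim}$ is closed and Hausdorff, and that $\mathsf M_{\mathcal D}^{\rm mc}$, $\chi$ and the norms descend to continuous definable maps, are explicit hypotheses of the {\L}ojasiewicz clause. No slice construction is required.

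Your first part should be removed or rewritten. This proposition proves no finite-sample statement: $(\star_{\rm mc})$ is only posited here, and the finite-sample consequence is Corollary~\ref{cor:query-tolerance}. The argument you sketch for it also does not work, for two reasons.
\begin{enumerate}
\item Acceptance by the blocks is a random event. It does not bound the fixed population residual $\|R_{\mathcal D}^{\rm mc}(\theta)\|_\oplus$ by the acceptance radius.
\item The relevant error probability is not the block level. Levels control false rejection of the correct twin, not false acceptance of a wrong one.
\end{enumerate}
The valid route is contrapositive. Suppose $\|\chi_{\mathcal T}-\chi_{\mathcal S}\|_{\mathsf Q}>C_LQ_n^s$ with $Q_n=(\sum_bq_{b,n}^2)^{1/2}$. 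Then $(\star_{\rm mc})$ gives $\|R_{\mathcal D}^{\rm mc}\|_\oplus>Q_n$, so some single block $b^\ast$ has $\|r_{b^\ast}\|_2>q_{b^\ast,n}$. Aggregate acceptance is contained in acceptance by $b^\ast$. The power condition $(\mathrm{BP}_{b^\ast})$ therefore bounds the joint event by the Type-II error $\beta_{b^\ast}$ plus $\Pr[\text{localization fails}]$. This is single-block containment, not a union bound.
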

\emph{Proof.} See Appendix~\ref{app:validation-proofs}. \hfill$\square$

\begin{corollary}[Finite-sample query tolerance]\label{cor:query-tolerance}
Fix the candidate independently of the validation sample, and fix all block scales before seeing that sample.
Use the Euclidean norm for reconstructed means and the Frobenius-isometric $\mathrm{svec}$ norm for covariance
blocks; their direct-sum norm is the norm in Proposition~\ref{prop:selected-modulus}. For each actual block
$b\in\mathcal B_{\mathcal D}$ assume the block-specific power condition
\[
 (\mathrm{BP}_b):\qquad \|r_b\|_b>q_{b,n}
 \quad\Longrightarrow\quad
 \Pr\{\text{block $b$ accepts}\}\le\beta_b.
\]
The aggregate decision accepts only when every block accepts. Put
\[
 Q_n:=\Bigl(\sum_{b\in\mathcal B_{\mathcal D}}q_{b,n}^2\Bigr)^{1/2},\qquad
 \varepsilon_n:=C_LQ_n^s. \tag{$\dagger$}
\]
For $B\ge1$, $\varepsilon_n\le C_L[\sqrt{B}\max_bq_{b,n}]^s$.  If $B=0$, use the empty-sum
convention $Q_n=\varepsilon_n=0$ and omit the maximum display; (Suff-M) then permits no query-changing
candidate in the localization, so the wrong-query event below is empty. For $B\ge1$ and a fixed predeclared localized twin,
$\|\chi_{\mathcal T}-\chi_{\mathcal S}\|_{\mathsf Q}>\varepsilon_n$ forces some block
$\|r_{b^\ast}\|_2>q_{b^\ast,n}$, and therefore
\[
 \begin{aligned}
 &\Pr[\mathsf{Accept}_{\mathcal D}(\mathcal T)\ \wedge\
 \|\chi_{\mathcal T}-\chi_{\mathcal S}\|_{\mathsf Q}>\varepsilon_n]\\
 &\qquad\le\beta_{b^\ast}+\Pr[\mathrm{localization\ fails}]
 \le\max_b\beta_b+\Pr[\mathrm{localization\ fails}].
 \end{aligned}
\]
This is single-test containment, not a union bound; test levels control false rejection of the correct twin.
For the nonasymptotic sub-Gaussian route with known $H$, Proposition~\ref{prop:mean-concentration} gives the mean
block $q_{e,n}^m=t_{e,n}^m(\alpha_e)+t_{e,n}^m(\beta_e)$; with estimated $H$, use
$\widetilde q_{e,n}^m=\widetilde t_{e,n}^m(\alpha_e)+\widetilde t_{e,n}^m(\beta_e)$ instead.  The covariance block order is
$q_{e,n}^{\Sigma}=O\{\bar\kappa_{\Sigma,e}[\sqrt{k_eu_e/n_e}+\sqrt{k_e}\,u_e/n_e]\}$,
$u_e=\log[4k_e/(\alpha_e\wedge\beta_e)]$, with
$\bar\kappa_{\Sigma,e}=C_BK_{Z,e}^2\bar\lambda_e$ and
$\bar\lambda_e\ge\lambda_{\max}(\Sigma_{FF}^{(e)})$, as in Proposition~\ref{prop:covariance-concentration}.
Thus both Bernstein branches and the
$\sqrt{2k_e}$ conversion from the raw-entry threshold to the $\mathrm{svec}$ norm are retained. The bound is existential/order-level:
neither $(C_L,s)$ nor a numerical $\varepsilon_n$ is estimated in the numerical illustrations, which do not
establish ($\star_{\rm mc}$) or the near-threshold theorem.

The fixed-$d$ ADF-Wald route is separate: it gives pointwise asymptotic power for a fixed nonzero residual in
the declared estimable range.  It enters no finite-$n$ sum in ($\dagger$).  A singular branch additionally needs
fixed-rank/eigengap control and the proved kernel companion (including deterministic coordinates), or it
refuses; an estimated random Wald seminorm does not replace the deterministic norm above.  The
displayed sub-Gaussian orders use independent probe observations $n_e$; clustered data require their own
cluster-sum concentration theorem, rather than a raw-row or generic effective-size substitution.  This aggregate
does not bound the $d{\times}d$ parameter error.  The identification
backbone below is the \emph{linear/LQ} class (together with the nonlinear \emph{source-block} positive result
of the companion representation paper, \citet{dadgostari_ECG1_2026}); general diffeomorphic mixing inherits that paper's
retraction and is not claimed here.
\end{corollary}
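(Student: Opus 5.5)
The argument is a contrapositive containment built on the selected-summary modulus ($\star_{\rm mc}$) of Proposition~\ref{prop:selected-modulus}, followed by a single-block probability bound. First I handle the bound $\varepsilon_n\le C_L[\sqrt B\max_b q_{b,n}]^s$ for $B\ge1$. It follows from $Q_n^2=\sum_b q_{b,n}^2\le B\max_b q_{b,n}^2$ and the monotonicity of $t\mapsto C_Lt^s$ on $[0,\infty)$, since $s>0$.

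The degenerate case $B=0$ comes next. There the direct-sum residual $\|R^{\rm mc}_{\mathcal D}(\theta)\|_\oplus$ is an empty sum, so every localized $\theta$ has the same (empty) moment summary as $\theta_0$. Then (Suff-M), or equivalently ($\star_{\rm mc}$) with right side $0$, forces $\chi(\theta)=\chi(\theta_0)$. Hence the event $\|\chi_{\mathcal T}-\chi_{\mathcal S}\|_{\mathsf Q}>0$ is impossible inside $K_{\eta,R}$, and only the localization-failure term can remain.

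For $B\ge1$ I work on the event that the fixed twin lies in $K_{\eta,R}$. Because the twin is fixed and independent of the validation data, its residual vector $(r_b)_b$ is deterministic. Suppose $\|\chi_{\mathcal T}-\chi_{\mathcal S}\|_{\mathsf Q}>\varepsilon_n=C_LQ_n^s$. By ($\star_{\rm mc}$), $C_L\|R\|_\oplus^s>C_LQ_n^s$, so $\|R\|_\oplus>Q_n$, that is, $\sum_b\|r_b\|_b^2>\sum_b q_{b,n}^2$. Some summand must then satisfy $\|r_{b^\ast}\|_{b^\ast}>q_{b^\ast,n}$. Since the residuals are deterministic, this index $b^\ast$ is deterministic too, which is the point of the remark that the bound is single-test containment and not a union bound.

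The event $\mathsf{Accept}_{\mathcal D}$ requires every block to accept, so it is contained in the event that block $b^\ast$ accepts. By $(\mathrm{BP}_{b^\ast})$ that event has probability at most $\beta_{b^\ast}\le\max_b\beta_b$. Off the localization event I bound the probability trivially by $\Pr[\text{localization fails}]$, and adding the two pieces gives the display.

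The remaining claims are bookkeeping. The block thresholds for the mean blocks are quoted from Proposition~\ref{prop:mean-concentration}, using $t(\alpha)+t(\beta)$ so that the test has level $\alpha$ and power $\beta$ beyond that separation. The covariance-block thresholds are quoted from Proposition~\ref{prop:covariance-concentration}, including the $\sqrt{2k_e}$ entrywise-to-svec conversion. Both are used here only as sufficient instances of $(\mathrm{BP}_b)$.

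The main obstacle is checking that the norms align. The norm in ($\star_{\rm mc}$) must be exactly the direct sum of the Euclidean mean norm and the Frobenius-isometric svec norm used in each $(\mathrm{BP}_b)$, so that the step from $\|R\|_\oplus>Q_n$ to a single block exceeding its threshold is valid with identity block scalings. If the scalings were data-dependent, $b^\ast$ would become random and the containment would need a union bound. I will therefore state explicitly that all scales are fixed in advance. I will also state that the asymptotic Wald route contributes nothing to $(\dagger)$.
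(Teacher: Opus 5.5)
Your proposal is correct and follows the paper's argument. The modulus ($\star_{\rm mc}$) turns a query error above $\varepsilon_n$ into $\|R^{\rm mc}_{\mathcal D}\|_\oplus>Q_n$, the sum-of-squares comparison yields a single deterministic separated block $b^\ast$, containment of the all-blocks acceptance event in block $b^\ast$'s acceptance gives the $\beta_{b^\ast}$ bound, and (Suff-M) empties the wrong-query event when $B=0$; your added checks of the $\sqrt{B}\max_b q_{b,n}$ display and of block-norm alignment make explicit what the paper leaves implicit.
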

\emph{Proof.} See Appendix~\ref{app:validation-proofs}. \hfill$\square$

\paragraph{Interpretation.}
Validation is query-specific. Full interventional distributions support the general counterfactual conclusion,
whereas mean and covariance agreement is enough only when the query factors through those moments or when the
noise family is determined by them. With partial factual information, the entire legal fibre must induce the
same query distribution.

\begin{theorem}[Why the boundary conditions are needed]\label{thm:nec}
The four conditions exclude different failures.

\textbf{Without T1,} a \emph{parent-dependent} non-monotone rank-band reflection
$T_i(pa,u)=F_i^{-1}\!\circ\rho(pa,\cdot)\circ F_i(u)$ at a node whose band-driving parent is not a descendant
of $i$ is measure-preserving for every $pa$, so \emph{all} laws in every regime coincide while the
counterfactual changes on the positive-measure symmetric difference of the factual and counterfactual bands.
In the symmetric-noise linear specialization this is $u\mapsto-u$ on the band and the gap is
$2\,|e_Y^\top(I-B^I)^{-1}e_i|\,|u_i|>0$; without symmetry the gap remains positive but has no such closed form.
(A \emph{parent-independent} annulus reflection is instead a measure-preserving involution ---
an SCM isomorphism with counterfactual gap $0$ --- so the parent dependence is essential to the T1 argument.)

\textbf{Without T2,} the failure is already present without a cycle. Let
$V_1=U_1$, $V_2=U_2$, and $V_3=V_1+V_2+U_3$ with independent standard Gaussian noises, and condition on
$W=V_3$. Every loop-gain condition is vacuous, yet
$\operatorname{Cov}(U_1,U_2\mid V_3)=-1/3$, so the tensor-product identity fails. In the explicit Gaussian
difference test, calibrating $D=U_1-U_2$ with the false product variance $4/3$ instead of its true conditional
variance $2$ makes nominal level $0.05$ equal $0.109531$. Conversely, whenever the declared regular conditional
law equals the tensor product of its block marginals, the conditional independence follows regardless of loop
gain; level validity of any named test still requires its own sampling and calibration theorem. The
two-node feedback example gives a second analytic control: its conditional log-density has mixed partial
$-1/(1-v_0v_1)^2\neq0$. The same calculation has an acyclic-collider negative control and an independent-noise
positive control. Thus T2 is exactly the singleton query-kernel condition, whereas its tensor-product clause is
exactly the declared conditional-independence null; full conditional-law identification is only sufficient.

\textbf{Without T3,} take the primitive $Z\sim N(0,1)$, $f_0(Z)=\tanh Z=:u$, state space $[-2,2]$, and
best-response update
\[
 \Phi(v,u)=v-0.1\{v^3-3v-u\}.
\]
Its equilibria solve $v^3-3v-u=0$. The two outer roots are stable and are separated by more than $17/5$ for
every $u\in(-1,1)$. Two models with the same mechanism and noise, one using the minimum-stable-root rule and
the other the maximum-stable-root rule, therefore differ only in $\Sel$ and disagree by more than $17/5$ on
the corresponding query. Each model uses its declared rule in every regime. This witness is selection-specific;
it does not claim that a design which directly observes the selected branch is uninformative.

\textbf{Without T4,} let
$B_0=\left(\begin{smallmatrix}0&3/5\\1/2&0\end{smallmatrix}\right)$,
$\Omega_0=I$, $H_0=I$, $A_0=(I-B_0)^{-1}$, and let $Q_\theta$ be the planar rotation at
$\theta=0.7$. Put $A_\theta=A_0Q_\theta$, $W_\theta=A_\theta^{-1}$,
$D_\theta=\operatorname{diag}((W_\theta)_{11},(W_\theta)_{22})$,
$B_\theta=I-D_\theta^{-1}W_\theta$, $\Omega_\theta=D_\theta^{-2}$, and $H_\theta=I$.
Both models are stable Gaussian ECGs and have the same passive law because
$A_\theta D_\theta\Omega_\theta D_\theta A_\theta^\top=A_0A_0^\top$.
For factual state $(1,1/2)$ and the query obtained by setting $V_0=2$ and reading $V_1$, their values are
$1$ and $1.3916649596$. No probing-stage equality is asserted: the example shows precisely why passive
Gaussian evidence alone does not satisfy T4.
\end{theorem}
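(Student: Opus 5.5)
The plan is to treat the four parts of Theorem~\ref{thm:nec} as four independent witness constructions. For each one we verify two things: the three remaining conditions (and every stated observational or interventional equality) hold, while the counterfactual query differs on a set of positive measure. Lemma~\ref{lem:qa} and Theorem~\ref{thm:sound} are not invoked positively. They only fix which objects must coincide, namely all regime laws for T1, the passive law for T4, and the mechanism together with the noise for T3. All four parts are checked directly.

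\textbf{T1.} Put $\tilde f_i(pa,u)=f_i(pa,T_i(pa,u))$. Since $\rho(pa,\cdot)$ is a measure-preserving reflection of $[0,1]$ for each $pa$, $T_i(pa,\cdot)$ preserves $P_{U_i}$. Conditioning on $pa$ then gives the same kernel $P(V_i\mid \doop(pa))$ for every $pa$. Because the band-driving parent is not a descendant of $i$, its value is determined independently of $U_i$. Hence the joint law is unchanged in every regime, and we argue this SCC by SCC through the selection rule. For the counterfactual, abduction under $\tilde f$ recovers $T_i(pa,u)$ rather than $u$. After surgery the parent value changes, so the reflection band moves, and on the symmetric difference of the bands the noise fed to the re-solve differs. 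In the linear symmetric case the output difference equals the column $(I-B^I)^{-1}e_i$ times $2|u_i|$. Finally, the parent-independent remark follows because the reflection then commutes with surgery.

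\textbf{T2.} This is Gaussian conditioning. We have $\operatorname{Cov}(U_{1},U_{2},V_3)$ with $V_3$-variance $3$, so $\operatorname{Cov}(U_1,U_2\mid V_3)=0-1\cdot1/3$. The conditional variances are $2/3$ each, which gives $\operatorname{Var}(D\mid V_3)=4/3+2/3=2$, while the product law predicts $4/3$. The size is $2\{1-\Phi(1.96\sqrt{(4/3)/2})\}$, which we evaluate numerically. The converse is the definition of conditional independence. The mixed-partial claim follows by differentiating the two-node conditional log-density.

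\textbf{T3.} For $|u|<1$ the cubic $v^3-3v-u$ has three real roots in $(-2,2)$. Stability of $\Phi$ requires $|1-0.1(3v^2-3)|<1$, which holds for the outer roots with $|v|>1$ and fails for the middle one. The outer roots lie near $\pm\sqrt3$ and beyond $\pm1$. A root-location bound shows that the smallest root is at most $-1.7$ and the largest is at least $1.7$ for $|u|<1$, which gives a gap exceeding $17/5$. To get this, we evaluate $v^3-3v$ at $\pm1.7$ and check its sign relative to $u\in(-1,1)$.

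\textbf{T4.} The passive-law identity is immediate: $A_\theta D_\theta D_\theta^{-2}D_\theta A_\theta^\top=A_0Q_\theta Q_\theta^\top A_0^\top$. We need a zero diagonal for $B_\theta$, which holds by construction. We also need $\rho(B_\theta)<1$, checked numerically at $\theta=0.7$. For the query, we abduct $u=(I-B)v$, set $V_0=2$, and read $V_1=b_{10}\cdot2+u_1$. In model $0$ this is $1/2-1/2+1=1$. In model $\theta$ it is $1.39166\ldots$, from the explicit entries.

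The main obstacle is the T1 regime-invariance argument under feedback. Measure preservation of each kernel does not immediately give equality of the joint equilibrium law, because the reflection's parent may lie in the same SCC. The non-descendant hypothesis is exactly what I would use to condition on the parent first and push the measure-preserving map through the solve.
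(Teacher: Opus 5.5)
Your T1, T2 and T4 arguments match the paper's. In T1 you use measure preservation of the rank-space reflection for each fixed non-descendant parent value, and the non-cancellation of the factual and counterfactual reflections on the band symmetric difference. In T2 you use Gaussian conditioning, giving $-1/3$, conditional variance $2$, and size $2\{1-\Phi(1.96\sqrt{2/3})\}\approx0.1095$. In T4 you use $(I-B_\theta)^{-1}=A_\theta D_\theta$ and $V_1^{\rm cf}=B_{\theta,10}+\tfrac12$.

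The gap is in T3. Your key inequality is false. You claim that for every $u\in(-1,1)$ the smallest root of $g(v)=v^3-3v=u$ is at most $-1.7$ and the largest is at least $1.7$. But $g$ is increasing on $(1,2)$ and $g(1.7)=-0.187$. So the largest root lies below $1.7$ for every $u<-0.187$, and it tends to $2\cos(2\pi/9)\approx1.532$ as $u\to-1$. Symmetrically, the smallest root exceeds $-1.7$ for every $u>0.187$. Checking the sign of $g(\pm1.7)$ against $u$ therefore breaks at least one of your two bounds for every $|u|>0.187$.

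No separate bound on the two outer roots can rescue this. Individually they are only guaranteed $|r_\pm|>1.532$, which yields a gap above about $3.06$. Meanwhile $17/5$ is nearly sharp: the gap decreases to about $3.411$ as $|u|\to1$. The inequality holds only because, as $u$ varies, one outer root moves toward $\pm1.53$ exactly when the other moves toward $\pm1.88$.

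The missing idea is to control the difference jointly. Write the roots as $r_-<r_0<r_+$. Vieta gives $r_-+r_0+r_+=0$ and $r_-r_0+r_-r_++r_0r_+=-3$, so $r_+r_-=r_0^2-3$ and
\[
(r_+-r_-)^2=(r_++r_-)^2-4r_+r_-=r_0^2-4(r_0^2-3)=12-3r_0^2 .
\]
The function $g$ is decreasing on $(-1,1)$, with $g(-3/8)=549/512>1$ and $g(3/8)=-549/512<-1$. Hence the middle root satisfies $|r_0|<3/8$ for all $|u|<1$. This gives $(r_+-r_-)^2>12-27/64=741/64>289/25$, i.e. $r_+-r_->17/5$.

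This is exactly the paper's route: the gap is even in $u$, decreases in $|u|$, and is bounded at the endpoint through the middle root $-x$ with $0<x<3/8$. You can keep your stability computation $\partial_v\Phi=1.3-0.3v^2$ and your localization of the roots in $(-2,-1)$, $(-1,1)$, $(1,2)$. Only the gap step must be replaced by this coupled bound.
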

\emph{Proof.} See Appendix~\ref{app:validation-proofs}. \hfill$\square$

\paragraph{Interpretation.}
The assumptions rule out distinct failure modes rather than repeating the same restriction. Monotonicity fixes
the noise ranks, the conditional condition controls partial-factual abduction, the selection condition controls
which equilibrium is chosen, and the design condition removes query-relevant observational equivalences.

\section{Transport across domains}\label{sec:transport}
Write $\An_{G_I}(\cdot)$ for the post-surgery ancestor set. The engine is that ancestral sub-models are
autonomous.
\begin{lemma}[SCC-closed ancestral autonomy]\label{lem:auto}
For any surgery $I$ and target set $Y\cup Z$, $A:=\An_{G_I}(Y\cup Z)$ is closed under parents and under
SCC-membership, the restricted SCM on $A$ is autonomous and inherits unique solvability per SCC, and the
post-surgery law of $(V_Y,V_Z)$ is a functional of $\{f_j^\omega,P_{U_j}^\omega:j\in A\}$ alone. In the linear
class, every entry $e_w^\top(I-B_I)^{-1}$ is a walk-sum over ancestors of $w$; the identity persists by rational
continuation to all $\det(I-B_I)\neq0$ (the trek/$\perp_t$ argument). \hfill$\square$
\end{lemma}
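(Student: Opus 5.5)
The plan has three parts, taken in the order of the statement: closure of $A$, autonomy with unique solvability, and the functional dependence of the law. The linear walk-sum claim is handled separately at the end.

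\emph{Closure.} In the surgered graph $G_I$ every $w\in A$ has a directed path to $Y\cup Z$. Any parent of $w$ in $G_I$ then also has such a path, so $A$ is parent-closed. If $w'$ lies in the same SCC as $w$, there is a directed path $w'\to\cdots\to w$, and hence a path from $w'$ to $Y\cup Z$. So $A$ is a union of whole SCCs of $G_I$. Nodes fixed by the surgery have no parents in $G_I$ and are trivially handled.

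\emph{Autonomy and solvability.} The structural equations for $j\in A$ involve only $V_{\mathrm{pa}_{G_I}(j)}\subseteq V_A$ and $U_j$. The restricted system therefore reads only $(V_A,U_A)$ and is a well-defined SCM on its own.

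For unique solvability, order the SCCs contained in $A$ topologically in the condensation DAG, which is possible because $A$ is SCC-closed. Proceed by induction along this order. At each SCC, the inputs from upstream SCCs are already determined. The SCC-local equations, together with the shared rule $\Sel$ (rank-measurable per (T3)), then give the same unique selected solution as in the full surgered model, because the SCC and its parent set are literally the same objects.

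Hence the $A$-block of the full model's solution equals the solution of the restricted model, almost surely. Since the $U_j$ are mutually independent (the standing exogenous-independence assumption), the joint law of $(V_Y,V_Z)$ is the pushforward of $\bigotimes_{j\in A}P^\omega_{U_j}$ under a solution map built only from $\{f^\omega_j:j\in A\}$ and $\Sel$. This gives the claimed functional dependence.

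\emph{Linear walk-sum claim.} First take $\rho(B_I)<1$. The Neumann series $(I-B_I)^{-1}=\sum_k B_I^k$ converges. The entry $(B_I^k)_{wv}$ sums products of edge weights over length-$k$ walks from $v$ to $w$ in $G_I$, with sign and orientation fixed by the convention $V=BV+\cdots$. Every node on such a walk is an ancestor of $w$, so $e_w^\top(I-B_I)^{-1}$ depends only on the entries of $B_I$ indexed within $\An_{G_I}(w)$.

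To remove the spectral-radius assumption, use the adjugate formula. Each entry of $(I-B_I)^{-1}$ is a rational function of the edge weights on the connected open set $\{\det(I-B_I)\neq0\}$ within the support pattern. This set contains the neighbourhood $\{\rho<1\}$ of $0$.

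A rational function is determined by its values on a nonempty open set. Independence from the non-ancestral coordinates therefore propagates by the identity theorem. Concretely, the partial derivatives with respect to non-ancestral weights vanish on an open set, so they vanish identically.

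I expect this continuation step to be the main obstacle. The domain $\{\det\neq0\}$ may be disconnected in real parameter space. I would sidestep this by arguing polynomial identities over $\mathbb{C}$, where the complement of a hypersurface is connected. Alternatively, I would use the block-triangular structure directly: $I-B_I$ is block lower-triangular in the SCC order with $A$ first, so $e_w^\top(I-B_I)^{-1}$ involves only the inverse of the $A$-block.
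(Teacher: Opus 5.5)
Your proposal is correct and follows essentially the paper's route: path-closure of the post-surgery ancestor set, solving SCCs in condensation order (with the standing noise independence giving the product law on $A$), and the convergent walk series under $\rho(B_I)<1$ extended by rational continuation. Your connectivity worry does not arise. After clearing the denominator $\det(I-B_I)$, as the paper does, the claim is a polynomial identity that vanishes on the nonempty open set $\{\rho(B_I)<1\}$ of the support subspace and hence everywhere, with no connectedness needed; you should therefore drop the incorrect word ``connected'' in your first continuation paragraph. Your block-triangular observation $(I-B_I)_{A,A^c}=0$ gives the same conclusion directly on the whole nonsingular locus.
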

The lemma isolates the part of the system that can affect the query after intervention. Mechanisms outside this
ancestral subsystem do not enter the post-intervention law.
\begin{theorem}[Direct transport under ancestral separation]\label{thm:direct}
Assume source identification, alignment secured on $A=\An_{G_I}(Y\cup Z)$, and the cross-domain invariance
condition (the (C-b) comparison, run with source$\,\cup\,$target regimes and conditional discipline) holds on the
mechanisms of $A$. If
\[\Sep:\qquad \An_{G_I}(Y\cup Z)\cap\Dset=\varnothing\quad(\text{no }\Sw\text{-node is an ancestor of }Y\cup Z\text{ in }D_I),\]
then $P^\tau(Y\mid\doop(I),Z)=P^\pi(Y\mid\doop(I),Z)$; in the linear class this is an all-parameter identity
(Lemma~\ref{lem:auto}).
\end{theorem}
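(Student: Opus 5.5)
The plan is to reduce the transport identity to Lemma~\ref{lem:auto} applied once in each domain, and then match the resulting functionals term by term. Fix the surgery $I$ and put $A=\An_{G_I}(Y\cup Z)$. Both domains share the graph $G$, so the post-surgery graph $G_I$ and hence $A$ are the same in source and target. By Lemma~\ref{lem:auto}, in each domain $\omega\in\{\pi,\tau\}$ the set $A$ is closed under parents and SCC membership. The restricted SCM on $A$ is autonomous and uniquely solvable per SCC under the shared $\Sel$. The post-surgery joint law of $(V_Y,V_Z)$ is therefore a fixed functional $\Phi_{A,I}$ of $\{f_j^\omega,P^\omega_{U_j}:j\in A\}$. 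This functional is the same map in both domains because it depends only on $G_I$, $\Sel$, and the intervention semantics. The exogenous-independence standing assumption is used here: the joint law of $\{U_j\}_{j\in A}$ is the product of its marginals, so no copula term enters $\Phi_{A,I}$.

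Next, \Sep{} gives $A\cap\Dset=\varnothing$. By definition of $\Dset$, each $j\in A$ then has $(f_j^\pi,P_{U_j}^\pi)=(f_j^\tau,P_{U_j}^\tau)$. I will state carefully that this equality is meant in the aligned frame. The alignment hypothesis on $A$ (shared sensors $H$, or the invariance-labelled frame) fixes the latent coordinates of $V_A$ identically in both domains, so "the same mechanism" is a well-defined statement rather than one holding up to a gauge. The assumed (C-b) invariance on the mechanisms of $A$, together with source identification, certifies that the source-identified objects are the correct inputs to $\Phi_{A,I}$ in the target. Consequently $P^\tau(V_Y,V_Z\mid\doop(I))=P^\pi(V_Y,V_Z\mid\doop(I))$. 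Conditioning on $Z$ is the same measurable operation applied to equal joint laws, which gives $P^\tau(Y\mid\doop(I),Z)=P^\pi(Y\mid\doop(I),Z)$ almost everywhere with respect to the common $Z$-marginal. The selection-diagram reading follows directly: no $\Sw_i$ is an ancestor of $Y\cup Z$ in $D_I$. I would also note the route through the generalized directed global Markov property of Definition~\ref{def:diag}, under which $\Sw$ is $\sigma$-separated from $Y\cup Z$ given the do-set. For pure profiles, however, the direct functional argument suffices.

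For the linear class, I will write $(I-B_I^\omega)^{-1}$ restricted to the rows indexed by $Y\cup Z$. By Lemma~\ref{lem:auto} these rows are walk-sums, extended by rational continuation, that involve only the entries $B_{j,\cdot}^\omega$, $c_j^\omega$, and $\Omega^\omega_{jj}$ with $j\in A$. These entries coincide across domains under \Sep{}. The post-intervention mean and covariance of $(V_Y,V_Z)$, and indeed the full law, are therefore the same rational expression evaluated at identical arguments. This makes the identity hold for all parameter values with $\det(I-B_I)\neq0$, which is the all-parameter claim.

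The main obstacle is the alignment step. The equality of mechanisms must hold in a common latent frame and not merely up to the residual gauge, and the observed conditional must be read through the shared $H$ restricted to $A$. I would handle this by invoking the anchoring case $H^\tau=H^\pi$ first, since it leaves the latent coordinates of $V_A$ identically labelled. I would then argue that the invariance-labelled case reduces to it on $A$. The remaining gauge ambiguity, quantified in Proposition~\ref{prop:align}, is excluded on $A$ by hypothesis.
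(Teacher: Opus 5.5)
Your proposal is correct and follows essentially the paper's own argument. In both, Lemma~\ref{lem:auto} reduces the post-surgery law of $(Y,Z)$ to a fixed functional of the $A$-indexed mechanism--noise pairs, \Sep{} together with the alignment and invariance hypotheses makes those pairs coincide across domains, source identification makes the source evaluation computable, and the linear case follows from the walk-sum/rational-continuation clause of that lemma; your extra remarks on exogenous independence and on conditioning almost everywhere with respect to the common $Z$-marginal are refinements of the same route.
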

\emph{Proof.} By \Sep{} and Lemma~\ref{lem:auto} the target law of $(Y,Z)$ under $\doop(I)$ is a functional of
$A$-indexed mechanisms, all invariant by \Sep{}$\,+\,$the cross-domain invariance condition; the same functional at the same
mechanisms is the source law, computable from $[\mathcal S^\pi]$ since identification makes the class a point on
$A$ (or $\chi$ is gauge-invariant). \hfill$\square$

\paragraph{Interpretation.}
Direct reuse requires four ingredients: the source mechanisms must be identified, the relevant latent frames
must be aligned, the retained mechanisms must satisfy the cross-domain invariance condition, and \Sep{} must
exclude every changed mechanism from the post-intervention ancestral subsystem of the query. These conditions
are sufficient; failure of \Sep{} does not by itself show that reuse is impossible for a particular query.
\begin{lemma}[Domain changes within a feedback component]\label{lem:contam}
If $i\in\Dset$ lies in an SCC $C$ with $|C|\ge2$, then every $j\in C$ is graphically downstream of $\Sw_i$.
A generic distributional conclusion needs an additional regularity premise. On any connected finite-dimensional
real-analytic stratum, suppose a finite distribution-separating signature $s_C(\theta)$ is analytic and there is
an admissible witness at which the domain change moves $s_C$. Then the parameter values for which the law of
$C$ does not change lie in a proper analytic zero set. In the linear model, changing intercept $c_i$ by $t$
moves $E(V_j)$ by $t[(I-B)^{-1}]_{ji}$ whenever that entry is nonzero, which supplies such a witness.
For an unrestricted nonlinear stratum no analytic-generic claim is made; the corresponding conclusion requires
an explicit distributional-faithfulness assumption. In every case, nodes in $C\setminus\Dset$ retain their
mechanisms even when their equilibrium laws change.
\end{lemma}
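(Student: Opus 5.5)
The lemma has three parts: a graphical statement, a generic-distributional statement, and a linear witness. I will prove them in that order, and finish with the remark that mechanisms are retained.

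\textbf{Graphical part.} In $D=G+\{\Sw_i\to i\}$ we have the edge $\Sw_i\to i$. Since $C$ is strongly connected and $|C|\ge2$, every $j\in C\setminus\{i\}$ has a directed path $i\to\cdots\to j$ inside $C$. Prepending $\Sw_i\to i$ gives a directed path from $\Sw_i$ to $j$, and for $j=i$ the single edge suffices. Hence every $j\in C$ is a descendant of $\Sw_i$. By the generalized directed global Markov property on the augmented ioSCM (Definition~\ref{def:diag}), no $\sigma$-separation of $\Sw_i$ from $V_j$ is then available. This only makes distributional change possible; it does not force it. That gap is why the additional premise is needed.

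\textbf{Analytic-genericity part.} On a connected real-analytic stratum $\Theta$, consider the map $g(\theta):=s_C(\theta^{\tau})-s_C(\theta^{\pi})$, where $\theta^\tau$ and $\theta^\pi$ are the two pure switch profiles. Each profile gives a uniquely solvable model, and $s_C$ is assumed analytic, so $g$ is a real-analytic vector function. By hypothesis $g$ is nonzero at an admissible witness point. A real-analytic function on a connected domain that is not identically zero has a zero set that is a proper analytic subset, which is closed, nowhere dense, and Lebesgue-null. Because $s_C$ is distribution-separating, equality of the law of $C$ across domains implies $g(\theta)=0$. The invariance set is therefore contained in the zero set of $g$, as claimed.

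The main obstacle is making sure $g$ really is analytic on the whole stratum. This requires that the equilibrium solution of $C$ depend analytically on $\theta$. I would obtain this from the implicit function theorem: unique solvability together with a nonsingular Jacobian $I-\partial f_C/\partial v_C$ on the stratum gives analytic dependence of the solution. Where that fails, I would simply build it into the assumption that $s_C$ is analytic.

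\textbf{Linear witness.} With $V=BV+c+U$ and $\rho(B)<1$, we have $E V=(I-B)^{-1}(c+EU)$. Replacing $c$ by $c+te_i$ changes $E V_j$ by exactly $t[(I-B)^{-1}]_{ji}$. Take $s_C$ to contain the mean of $V_C$; together with higher moments or the characteristic function at finitely many points, this keeps $s_C$ separating. Then $s_C$ is rational, hence analytic off $\det(I-B)=0$. Whenever the entry $[(I-B)^{-1}]_{ji}\ne0$, the perturbation is an admissible witness. By Lemma~\ref{lem:auto}'s walk-sum expansion, within an SCC this entry is a nontrivial rational function of the cycle weights, so it is generically nonzero.

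\textbf{Remaining statements.} For unrestricted nonlinear strata no analyticity is available, so I state only that the conclusion requires a faithfulness assumption; nothing needs proof there. Finally, $j\in C\setminus\Dset$ means by definition of $\Dset$ that $(f_j,P_{U_j})$ is invariant across domains. Any change in the law of $V_j$ therefore enters only through its parents' equilibrium values, not through its own mechanism.
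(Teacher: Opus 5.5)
Your proof is correct and follows essentially the same route as the paper. An analytic law functional that is nonzero at a witness confines the invariance set to a proper analytic zero set, and the intercept shift moves $E V_j$ by $t[(I-B)^{-1}]_{ji}$ through the resolvent.

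Two of your asides are wrong but not needed. First, the containment step uses no separation: any law functional, such as the paper's single scalar $\int h\,dP$, agrees when the laws agree; separation is only what turns a mean shift into motion of a given $s_C$. Second, finitely many characteristic-function values do not in general separate laws, and the cofactor can vanish identically on constrained strata, which is why the paper claims only ``whenever that entry is nonzero.''
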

\emph{Proof.} See Appendix~\ref{app:transport-proofs}. \hfill$\square$
\begin{theorem}[Transport by replacing changed mechanisms]\label{thm:hybrid}
Fix the surgery, labelled alignment, mutually independent private noises, SCC-local selection rule, and
well-posedness conditions. Form a hybrid by taking a complete pair $(f_i,P_{U_i})$ either from the aligned source
or from target re-identification.

\textbf{Full-law transport.} If the hybrid pair equals the target pair for every nonintervened node, then the
hybrid and target have the same post-surgery joint law.

\textbf{Query-law transport.} For a query on $Y$, it is enough that those complete pairs agree on the
post-surgery ancestral SCC closure $\An_{G_I}(Y)$. Then the hybrid and target have the same law of $Y$.
Structural row recovery by itself identifies only the corresponding linear coefficients and intercepts; it
does not identify an unrestricted noise law.
\end{theorem}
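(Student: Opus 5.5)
The full-law part is a statement about the post-surgery model being the same object. I would argue both parts by showing that, under the fixed surgery, the hybrid and the target define the same post-surgery structural causal model, at least on the relevant node set. Then unique solvability per SCC under the shared $\Sel$ forces their solution laws to coincide.

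For \textbf{full-law transport}, first write the post-surgery model for $\omega\in\{\text{hyb},\tau\}$. Its ingredients are the intervened nodes, which receive the mechanism dictated by $I$ with the same intervention semantics in both models. The nonintervened nodes carry their pairs $(f_i^\omega,P_{U_i}^\omega)$. By hypothesis those pairs agree, so the two models have identical mechanisms node by node. Because the private noises are mutually independent in both models, the joint exogenous law is the product of the marginals $P_{U_i}$, and the products therefore coincide; any intervention-injected noise is shared as well. The shared SCC-local $\Sel$ and the well-posedness assumption give a measurable solution map $S_I$ that is a function only of the mechanisms, the selection rule and the noise vector. Proceeding SCC by SCC in a topological order of the condensation of $G_I$, each SCC's solution is the same measurable function of the same inputs. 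Induction over the condensation DAG then gives identical solution maps, and pushing forward the common product noise law yields the same post-surgery joint law.

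For \textbf{query-law transport}, I would invoke Lemma~\ref{lem:auto} with target set $Y$ (and $Z=\varnothing$). The set $A=\An_{G_I}(Y)$ is closed under parents and SCC membership, and the restricted model on $A$ is autonomous and uniquely solvable per SCC. Moreover the law of $V_Y$ is a functional of $\{f_j^\omega,P_{U_j}^\omega:j\in A\}$ alone. Since these pairs agree on $A$, where intervened nodes are again fixed by $I$, the argument of the previous paragraph restricted to $A$ gives equal laws of $V_A$, hence of $V_Y$. Independence of noises is used here to ensure that marginalizing over $V\setminus A$ does not change the law of $U_A$.

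The final sentence about structural row recovery I would handle by a short counterexample rather than a proof. Take a linear row $V_i=B_{i\cdot}V+c_i+U_i$ whose coefficients are recovered, and use two noise laws with the same mean and variance, for instance the Gaussian versus centred-exponential pair already used in Theorem~\ref{thm:sound}. These give the same recovered row but different pairs, so completeness of the pair is a genuine requirement.

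I expect the main obstacle to be the measurability and well-posedness bookkeeping. One must check that $\Sel$, being SCC-local and a function of ranks and the solution set by (T3), yields the same branch when the hybrid mixes source and target pairs. This is exactly why the theorem requires complete pairs: with identical $(f_i,P_{U_i})$ the ranks and solution sets coincide pointwise. I would also note that well-posedness of the mixed-profile hybrid is assumed, as in Definition~\ref{def:diag}, rather than derived.
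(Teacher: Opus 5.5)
Your proposal is correct and follows essentially the same route as the paper. Identical complete pairs, identical surgery, independent noises and the shared SCC-local selector give the same selected solve block by block along the condensation graph, and for the query clause only the parent- and SCC-closed ancestral set of Lemma~\ref{lem:auto} matters; the moment-matched Gaussian/centred-exponential pair correctly shows that row recovery alone does not fix the noise law. One small correction: independence is needed because it makes the per-node laws on $A$ determine the joint law of $U_A$ (marginalization never changes that law anyway), and only the nonintervened factors need to agree, since excised noises do not enter the post-surgery solve.
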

\emph{Proof.} See Appendix~\ref{app:transport-proofs}. \hfill$\square$
\begin{corollary}[Moment-class hybrid transport]
If only the target structural rows and first two noise moments are re-identified, the hybrid conclusion holds
for queries that factor through those moments. A full distributional conclusion additionally requires a
moment-determined noise family or direct identification of the target noise laws on the relevant closure.
\end{corollary}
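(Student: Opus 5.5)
The plan is to reduce the corollary to the query-law clause of Theorem~\ref{thm:hybrid} by factoring the query through a moment summary. I would split the argument into a positive part (moment-factoring queries) and a negative part (why distributional transport needs more), and treat them separately.

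\textbf{Positive part.} Fix the surgery $I$ and the closure $A=\An_{G_I}(Y)$. Build the \emph{moment hybrid}: for $i\in A\setminus\Dset$, take the aligned source pair $(f_i,P_{U_i})$, which is invariant by definition of $\Dset$. For $i\in A\cap\Dset$, take the re-identified target row $(B_{i,\cdot},c_i)$ together with the target noise mean and variance, and pick an arbitrary noise law with those two moments.

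By Lemma~\ref{lem:auto}, the post-surgery $Y$-block is a functional of the $A$-indexed pairs only. In the linear class it equals $e_Y^\top(I-B_I)^{-1}(c_I+U)$ restricted to $A$. Its mean and covariance are therefore
\[
\mu_Y=e_Y^\top(I-B_I)^{-1}c_I,\qquad
\Sigma_Y=e_Y^\top(I-B_I)^{-1}\Omega_I(I-B_I)^{-\top}e_Y .
\]
Here $\Omega_I$ is diagonal by the standing exogenous-independence assumption. Both $\mu_Y$ and $\Sigma_Y$ depend only on the rows, intercepts and noise variances on $A$.

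Hence hybrid and target share $(\mu_Y,\Sigma_Y)$, and any query of the form $\chi=g(\mu_Y,\Sigma_Y)$ agrees. For full-factual state-functional queries, abduction gives $U=(I-B)V-c$ from the identified structure alone, so they also transfer. This mirrors the query-specific transfer clause of Theorem~\ref{thm:sound}.

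\textbf{Distributional part.} If the noise family is moment-determined, the two moments fix each $P_{U_i}$ on $A\cap\Dset$. The complete pairs then equal the target pairs on $A$, and the query-law clause of Theorem~\ref{thm:hybrid} applies directly. The same holds if the target noise laws are identified directly on the closure.

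\textbf{Necessity of the extra hypothesis.} I would reuse the Gaussian versus $\mathrm{Exp}(1)-1$ witness from Theorem~\ref{thm:sound}. Take a single changed node with $B=0$ and $H=1$. The two candidate hybrids match the target rows and both noise moments, yet give $P(Y>2)$ values of $0.023$ and $0.050$. So at most one of them equals the target, and a distributional conclusion is not delivered by the moment data alone.

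\textbf{Main obstacle.} The delicate step is making precise that ``factors through those moments'' refers to the post-surgery moments on $A$, not the passive ones. The hybrid must reproduce the intervened first and second moments. This uses the diagonal $\Omega$ and the fact that the surgery replaces rows rather than noise laws, so intervened rows carry no noise-moment obligation.

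I would check that nonintervened rows in $A\cap\Dset$ enter $\Sigma_Y$ only through $\Omega_{ii}$ and $B_{i,\cdot}$. Alignment on $A$, secured as in Theorem~\ref{thm:direct}, ensures that source and target coordinates are comparable when the hybrid is spliced.
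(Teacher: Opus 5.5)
Your proposal is correct and follows essentially the paper's route: the paper reads the corollary off the query-law clause of Theorem~\ref{thm:hybrid}, observing that equality of a specified moment class transports only queries determined by that class and that complete-law equality needs a separate moment-determinacy premise. The Gaussian versus centred-exponential pair of Theorem~\ref{thm:sound} shows that premise cannot be dropped, and your explicit linear moment computation is a faithful expansion of that argument. One small slip: if the noises are not centred, the mean should read $\mu_Y=e_Y^\top(I-B_I)^{-1}(c_I+\mathbb{E}\,U_I)$, which still depends only on the re-identified quantities.
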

\paragraph{Interpretation.}
When a changed mechanism lies on a relevant feedback path, source reuse can be repaired by estimating that
mechanism in the target while retaining invariant mechanisms from the source. The resulting hybrid is evaluated
as one equilibrium system rather than as a patchwork of acyclic effects.
\begin{theorem}[A graphical transport rule]\label{thm:ruleD}
Treating $\Sw$ as exogenous roots, \textbf{Rule-D:} if $\Sw\sep Y\mid Z\cup X$ in $D_I$ then
$P^\pi(y\mid\doop(I),z,x)=P^\tau(y\mid\doop(I),z,x)$ for the source separator law. To use this kernel at target,
assume the regime-specific domination
$\mu_\tau^{ZX}\ll\mu_\pi^{ZX}$ for the post-surgery law of the full separator. The target integral then uses
the common kernel with the \emph{target} separator law. The conditioning set of the conclusion must contain the
full separating set $Z\cup X$, not $Z$ alone. Dropping $X$ is unsound: in $\Sw\!\to\!X\!\to\!Y$ we have
$\Sw\sep Y\mid X$ yet $P^\pi(y\mid\doop(I),z)\neq P^\tau(y\mid\doop(I),z)$ once $\Sw$ shifts $X$'s law; to
transport the $X$-marginalized $P(y\mid\doop(I),z)$ one needs $X$ intervened, the target law
$P^\tau(x\mid\doop(I),z)$, equality of the source and target $X\mid Z$ laws, or an $x$-constant kernel.
Composed with the \citet{ForreMooij2019calculus} rules on $D_I$ this licenses selection-diagram
transport formulas. Soundness holds under that paper's SCC/loop solvability hypothesis (carried verbatim); the
linear route (Lemma~\ref{lem:auto}) needs no such hypothesis. Completeness is open. \emph{Proof:} the augmented
model's generalized directed global Markov property gives $Y\indep\Sw\mid Z,X$; conditioning on the two $\Sw$
point masses gives equality $\mu_\pi^{ZX}$-a.e.; domination transports that equality to the target separator law.
\hfill$\square$
\end{theorem}
\paragraph{Interpretation.}
The graphical rule transports a conditional distribution only with the complete separating set retained. If a
separator is subsequently marginalized, its distribution must itself be transported or fixed by intervention.
\begin{theorem}[Transporting unit-level counterfactuals]\label{thm:cf}
For a per-unit counterfactual at target factuals, suppose the target hybrid lies in $\Tp$: it has monotone noise,
the complete-fibre pushforward-kernel singleton of (T2), and shared $\Sel$. Then a partial-factual $\chi$ transports
$P_W^y$-a.e.\ because every member of the complete target fibre induces the same query-pushforward kernel
by (T2); for full-factual queries, Lemma~\ref{lem:qa} applies to the hybrid, whose query-relevant \emph{interventional} kernels
are pinned by Theorem~\ref{thm:direct} or the query-law clause of Theorem~\ref{thm:hybrid} through equality of the complete
mechanism--noise pairs on the post-surgery ancestral SCC closure. Below the boundary only bounded transfer holds,
with $W=\operatorname{diam}\chi(\mathcal A)$ and, \emph{under
recombination closure and a shared $\Sel$} (else the selection-branch gap of Theorem~\ref{thm:wcw} is
added), the cross-world width bound $W\le\bar W^{\mathrm{comp}}+\bar W^{\mathrm{mono}}+
\bar W^{\mathrm{rot}}$ (Theorem~\ref{thm:wcw}; additively exact only in the decoupled
orbit-invariant-coefficient subclass; without closure only the joint diameter) at
\emph{target} parameters. \hfill$\square$
\emph{(For partial-factual abduction and transport, the query-pushforward class over the complete fibre is the
relevant object; loop gain alone does not determine it.)}
\end{theorem}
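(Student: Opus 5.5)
The proof has three parts, one per clause of Theorem~\ref{thm:cf}: the partial-factual clause, the full-factual clause, and the bounded-transfer clause below the boundary.

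\emph{Partial-factual clause.} I would argue directly from (T2) applied to the target. By hypothesis the target hybrid lies in $\Tp$, and the complete target fibre $\mathcal F_{\mathcal D}(y)$ has a singleton class $\{[K_I^\theta]_{P_W^y}\}$. The per-unit query $\chi$ at target evidence $y$ is the pushforward of $P_W^y$-indexed kernels. Any two members of the fibre, in particular the hybrid and the true target parameter, therefore give the same query kernel $P_W^y$-a.e. No mechanism identification is used here; the work is a definitional unfolding together with a measurability check. The check is that $K_I^\theta$ is jointly measurable, and this follows from the well-posed solve $S_I^\theta$ and the regular conditional law $\Lambda_w^\theta$. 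Equality is then only $P_W^y$-a.e., which matches the statement.

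\emph{Full-factual clause.} I would reduce to Lemma~\ref{lem:qa}, taking $\mathcal S=$ hybrid and $\mathcal S'=$ target. The hypotheses of that lemma are: shared graph, shared $\Sel$ (rank-measurable by (T3)), shared intervention semantics, and agreement of the interventional kernels $P(V_i\mid\doop(V_{\mathrm{pa}(i)}))$ for $i$ in the post-surgery ancestral SCC-set. The first three are assumed. For the fourth, the key observation is that each such kernel is the law of $f_i(v_{\mathrm{pa}},U_i)$. It is therefore a functional of the pair $(f_i,P_{U_i})$ alone.
\begin{itemize}[nosep]
\item Under \Sep{}, Theorem~\ref{thm:direct} (via Lemma~\ref{lem:auto}) gives invariance of these pairs on $\An_{G_I}(Y)$.
\item Otherwise, the query-law clause of Theorem~\ref{thm:hybrid} gives equality of the complete pairs on that closure by construction.
\end{itemize}
In either case the kernels coincide, and Lemma~\ref{lem:qa} yields $\chi(\text{hybrid};v,I)=\chi(\mathcal S^\tau;v,I)$ for a.e.\ $v$ with respect to the target factual law. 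I must also check that the a.e.\ qualifier is taken under the target factual law and not the source one. This is automatic, because the lemma is applied entirely at target parameters.

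\emph{Bounded-transfer clause.} This is where I expect the main obstacle. When neither route applies, the target parameter is only known to lie in the admissible set $\mathcal A$. The transferred value differs from the truth by at most $W=\operatorname{diam}\chi(\mathcal A)$, since both hybrid and truth lie in $\mathcal A$. The refined bound $W\le\bar W^{\mathrm{comp}}+\bar W^{\mathrm{mono}}+\bar W^{\mathrm{rot}}$ I would simply import from Theorem~\ref{thm:wcw}, evaluated at target parameters. The plan is to decompose any path between two members of $\mathcal A$ into three moves and apply the triangle inequality:
\begin{enumerate}[nosep]
\item a component-recombination move,
\item a monotone-reparameterization move,
\item a rotation-gauge move.
\end{enumerate}
This decomposition requires recombination closure of $\mathcal A$, so the closure premise must be stated explicitly. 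Without it I fall back to the joint diameter. If $\Sel$ is not shared, I add the selection-branch gap. The care needed is to keep each hypothesis attached to its sub-bound; I would not attempt to prove additive exactness beyond the decoupled orbit-invariant subclass cited in the statement.
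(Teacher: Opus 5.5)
Your proposal follows essentially the same route as the paper. The (T2) complete-fibre singleton gives the partial-factual clause. Lemma~\ref{lem:qa}, applied to the hybrid, gives the full-factual clause, with its interventional kernels pinned by equality of the mechanism--noise pairs on the post-surgery ancestral SCC closure (Theorem~\ref{thm:direct} or the query-law clause of Theorem~\ref{thm:hybrid}). The width statement is the image of the target fibre plus the once-per-block corner-path telescoping of Corollary~\ref{cor:wcw-upper}.

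One labelling slip: the ``mono'' block is the parent-dependent \emph{non}-monotone reshuffle ambiguity. Strictly increasing reparameterizations leave the ranks, and hence $\chi$, unchanged. Likewise ``comp'' is the transverse completion floor, not a recombination move. Neither slip affects the triangle-inequality argument you import.
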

\paragraph{Interpretation.}
Transport of an interventional distribution and transport of a unit-level counterfactual are not the same
claim. The latter additionally requires the target factual evidence to select a unique query distribution over
the complete legal fibre; otherwise the appropriate answer is a range of possible values.
\begin{definition}[Complete transport fibre]\label{def:transport-fibre}
Let $\Theta$ be a complete admissible joint source--target class. It includes every profiled alignment and
nuisance parameter, all mechanisms, the joint noise law, and the full equilibrium-selection kernel for every
evidence and query regime. The underlying kernel spaces are standard Borel, and the common query-output space
has already been quotiented by the declared alignment and gauge equivalence. Fix a pre-query design $d$, a
reference $\theta_\star\in\Theta$, and the exact evidence map $E_d$, comprising all permitted source and target
design laws, including the observed target factual law but excluding the query-bearing law and unavailable
experiments. Define
\[
 \mathcal F_d(\theta_\star)=\{\theta\in\Theta:E_d(\theta)=E_d(\theta_\star)\}.
\]
Let $\mu_\star$ be the fixed factual-input law contained in $E_d(\theta_\star)$ and hence common to this fibre.
For query surgery $I$, put $Q_I(\theta)=[K_I^\theta]_{\mu_\star}$, or use the corresponding unconditional law,
in a proper quotient equipped with a separating metric $\varrho$.
\end{definition}

\begin{proposition}[Pointwise transport identification]\label{prop:pointwise-transport}
The query is pointwise transport-identified at $(d,\theta_\star)$ if and only if
$Q_I(\mathcal F_d(\theta_\star))$ is a singleton. It fails to be pointwise transport-identified if and only if
the actual fibre contains $\theta_0,\theta_1$ with
$\varrho\{Q_I(\theta_0),Q_I(\theta_1)\}>0$.
\end{proposition}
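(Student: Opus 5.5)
The plan is to treat the statement as essentially definitional and make the definition of ``pointwise transport-identified'' explicit. The query is pointwise transport-identified at $(d,\theta_\star)$ when the exact evidence $E_d(\theta_\star)$ determines the query object. In other words, there is a map $\Psi$ from the range of $E_d$ to the quotient query space with $Q_I(\theta)=\Psi(E_d(\theta))$ for every $\theta$ in the actual fibre $\mathcal F_d(\theta_\star)$, or, pointwise, for the fibre at $\theta_\star$ alone. The first equivalence then comes from unwinding the fibre definition in Definition~\ref{def:transport-fibre}.

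For the ``if'' direction, suppose $Q_I(\mathcal F_d(\theta_\star))=\{q\}$. Set $\Psi(E_d(\theta_\star)):=q$. Every $\theta$ with $E_d(\theta)=E_d(\theta_\star)$ lies in the fibre, so $Q_I(\theta)=q$, and the evidence determines the query. For the ``only if'' direction, suppose the evidence determines the query at this point. Any two members $\theta,\theta'$ of the fibre share $E_d(\theta)=E_d(\theta')=E_d(\theta_\star)$, hence $Q_I(\theta)=Q_I(\theta')$, and the image is a singleton. It is also nonempty because $\theta_\star\in\mathcal F_d(\theta_\star)$. One point needs a remark here. $Q_I(\theta)=[K_I^\theta]_{\mu_\star}$ is an equivalence class taken modulo $\mu_\star$-null sets. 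This is well defined on the fibre precisely because $\mu_\star$ is part of $E_d(\theta_\star)$ and is therefore common to all fibre members, so the comparison of classes uses a single reference law. The same holds for the unconditional-law variant.

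For the second equivalence, I would use that $\varrho$ is a separating metric on the proper quotient: $\varrho(q,q')=0$ if and only if $q=q'$. Failure of the singleton property means the fibre, which is nonempty, contains $\theta_0,\theta_1$ with $Q_I(\theta_0)\neq Q_I(\theta_1)$, equivalently $\varrho\{Q_I(\theta_0),Q_I(\theta_1)\}>0$. Conversely, such a pair obviously prevents a singleton. Combined with the first equivalence, this gives the stated characterization of failure.

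The only real obstacle is bookkeeping, namely ensuring that the objects compared are genuinely well defined. This requires three checks. First, the quotient by alignment and gauge equivalence must be proper, so that $\varrho$ is a metric and not merely a pseudometric; otherwise positive distance and distinctness could come apart. Second, the a.e.\ classes must be taken against the fibre-common $\mu_\star$ rather than against $\theta$-dependent factual laws. Third, because $\Theta$ includes the full selection kernel and all nuisance and alignment parameters, the fibre must be the \emph{actual} complete fibre and not a sub-model. Otherwise the ``if'' direction would only show identification within a restricted class. I would state these three checks explicitly and rely on the standard-Borel hypothesis only to guarantee that $K_I^\theta$ and its class exist.
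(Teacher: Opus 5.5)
Your proposal is correct and follows essentially the same route as the paper. The paper likewise treats the proposition as definitional: identification is constancy of $Q_I$ on the complete evidence fibre, and the separating metric $\varrho$ converts a non-singleton image into a positively separated pair. Your $\Psi$ is fixed only at the single value $E_d(\theta_\star)$, which matches the paper's remark that no global measurable factorization through $E_d$ is required.
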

\emph{Proof.} This is the definition of identification on the complete evidence fibre; no global measurable
factorization through $E_d$ is required. \hfill$\square$

\begin{lemma}[Exact-evidence curves]\label{lem:exact-curve}
Let $\gamma:(-\varepsilon,\varepsilon)\to\Theta$, $\varepsilon>0$, satisfy
$\gamma(0)=\theta_\star$ and $E_d\{\gamma(t)\}=E_d(\theta_\star)$ throughout its domain. If some admissible
$t\ne0$ satisfies
\[
\varrho[Q_I\{\gamma(t)\},Q_I(\theta_\star)]>0,
\]
then the query is not pointwise
transport-identified.
\end{lemma}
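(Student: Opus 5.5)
The plan is to reduce the statement directly to Proposition~\ref{prop:pointwise-transport}. That proposition says non-identification is equivalent to the actual fibre $\mathcal F_d(\theta_\star)$ containing two parameters whose query classes are separated under $\varrho$. The argument therefore needs only to exhibit such a pair from the curve $\gamma$.

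\textbf{Step 1: fibre membership.} Fix the admissible $t\neq0$ supplied by the hypothesis and set $\theta_1:=\gamma(t)$ and $\theta_0:=\gamma(0)=\theta_\star$. The curve maps into $\Theta$, and $E_d\{\gamma(t)\}=E_d(\theta_\star)$ on the whole domain. By Definition~\ref{def:transport-fibre} this gives $\theta_1\in\mathcal F_d(\theta_\star)$. Trivially $\theta_0$ lies in the fibre as well.

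\textbf{Step 2: common factual-input law.} I will check that $Q_I(\theta_1)$ and $Q_I(\theta_\star)$ are computed as classes modulo the same null sets. The factual-input law $\mu_\star$ is a component of $E_d(\theta_\star)$, and equality of $E_d$ forces $\theta_1$ to carry the same $\mu_\star$. Hence both values are $\mu_\star$-a.e.\ classes in the same quotient space. The hypothesis $\varrho[Q_I\{\gamma(t)\},Q_I(\theta_\star)]>0$ is then a legitimate comparison in that space. Because $\varrho$ is separating, the hypothesis gives $Q_I(\theta_1)\neq Q_I(\theta_0)$.

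\textbf{Step 3: conclusion.} Steps~1 and~2 show that the image $Q_I(\mathcal F_d(\theta_\star))$ contains two distinct points, so it is not a singleton. The ``fails'' direction of Proposition~\ref{prop:pointwise-transport} then yields that the query is not pointwise transport-identified at $(d,\theta_\star)$.

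\textbf{Main obstacle.} The only delicate point is the well-definedness in Step~2. One must ensure that ``admissible'' $t$ means $\gamma(t)\in\Theta$, the complete class including alignment, nuisance and selection kernels. Otherwise the comparison point might lie outside the fibre on which Proposition~\ref{prop:pointwise-transport} is stated. I will state this reading explicitly. No continuity, smoothness or measurability of $\gamma$ is used, since only a single point $\gamma(t)$ enters the argument.
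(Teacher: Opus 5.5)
Your proposal is correct and is essentially the paper's proof: $\gamma(t)$ and $\theta_\star$ both lie in $\mathcal F_d(\theta_\star)$, and their $\varrho$-separation triggers the failure clause of Proposition~\ref{prop:pointwise-transport}. One small correction: the step from $\varrho>0$ to distinct classes uses only $\varrho(x,x)=0$, not the separating property, and you can skip it by citing the proposition's second clause (a $\varrho$-separated pair in the fibre) directly.
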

\emph{Proof.} Both endpoints belong to the same actual evidence fibre and are separated by the query metric;
apply Proposition~\ref{prop:pointwise-transport}. \hfill$\square$

\begin{theorem}[Query-active complete-block collision]\label{thm:fail}
Let $\varepsilon>0$, let $\mathcal B$ be a complete parameter block, and consider
$\theta(t)=\iota_{\mathcal B}(c_{\mathcal B}^\star+tv;\theta_{-\mathcal B}^\star)$ for
$|t|<\varepsilon$, where $\iota_{\mathcal B}$ replaces the indicated block and leaves its complement fixed.
Assume: (i) every $\theta(t)$ belongs to $\Theta$ for every evidence and query regime,
including cross-block noise compatibility, positive definiteness, and the full selection specification;
(ii) $E_d\{\theta(t)\}=E_d(\theta_\star)$ exactly; and (iii) a quotient-well-defined scalar functional $\ell$
that is insensitive to null representatives---for example, integration against a bounded test
function---satisfies
\[
 \ell\{Q_I(\theta(t))\}=\alpha+\beta t,\qquad \beta\ne0.
\]
Then every admissible nonzero $t\in(-\varepsilon,\varepsilon)$ gives an actual-fibre collision separated by the
query, and the query is not pointwise transport-identified.
\end{theorem}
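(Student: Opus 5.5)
The argument reduces Theorem~\ref{thm:fail} to Lemma~\ref{lem:exact-curve}. Take as the curve $\gamma(t):=\theta(t)=\iota_{\mathcal B}(c_{\mathcal B}^\star+tv;\theta_{-\mathcal B}^\star)$ on $(-\varepsilon,\varepsilon)$. First check that $\gamma(0)=\theta_\star$. This holds because $\iota_{\mathcal B}$ inserts $c_{\mathcal B}^\star$ back into the complement $\theta_{-\mathcal B}^\star$, and the pair $(c^\star_{\mathcal B},\theta^\star_{-\mathcal B})$ is by definition the block decomposition of $\theta_\star$. Next check that $\gamma$ takes values in $\Theta$. Hypothesis (i) gives this directly, and it covers every evidence and query regime, so $Q_I(\theta(t))$ is defined through a well-posed kernel for each $t$. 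Finally, hypothesis (ii) gives $E_d\{\gamma(t)\}=E_d(\theta_\star)$, so every $\theta(t)$ lies in the actual fibre $\mathcal F_d(\theta_\star)$ of Definition~\ref{def:transport-fibre}. In particular all members share the factual-input law $\mu_\star$, so the $\mu_\star$-classes $Q_I(\theta(t))$ live in one common quotient.

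The main step is to turn the scalar separation in (iii) into separation in the metric $\varrho$. Fix an admissible $t\neq0$. By (iii), $\ell\{Q_I(\theta(t))\}-\ell\{Q_I(\theta_\star)\}=\beta t\neq0$, because $\beta\neq0$ and $t\neq0$. Since $\ell$ is well defined on the quotient and insensitive to null representatives, $\ell$ is a genuine function of the class. Two classes with different $\ell$-values therefore cannot be the same point of the quotient, so $Q_I(\theta(t))\neq Q_I(\theta_\star)$. The metric $\varrho$ is separating on that quotient, hence $\varrho\{Q_I(\theta(t)),Q_I(\theta_\star)\}>0$.

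I expect the only real obstacle to be this step, which is the reason for the null-representative clause. If $\ell$ could distinguish two representatives of the same $\mu_\star$-a.e.\ class, a change in $\ell$ would not imply that the classes differ. The example of integration against a bounded test function is safe, provided the integration is taken jointly against $\mu_\star$ and the kernel. Then versions of the kernel that agree $\mu_\star$-a.e.\ give equal integrals, so $\ell$ is constant on classes.

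With $\varrho$-separation established, Lemma~\ref{lem:exact-curve} yields failure of pointwise transport identification; equivalently, Proposition~\ref{prop:pointwise-transport} applies to the pair $\theta_0=\theta_\star$, $\theta_1=\theta(t)$. The separation holds for every admissible nonzero $t$, so every such $t$ gives its own collision in the actual fibre. This is the "every" clause of the statement.
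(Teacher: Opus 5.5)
Your proof is correct and matches the paper's argument: (i)--(ii) place each $\theta(t)$ in the actual fibre $\mathcal F_d(\theta_\star)$, and (iii) with $\beta t\neq0$ and quotient-well-definedness of $\ell$ forces distinct quotient query laws, hence $\varrho$-separation. Proposition~\ref{prop:pointwise-transport} then gives the conclusion; you reach it via Lemma~\ref{lem:exact-curve}, which is itself an immediate consequence of that proposition, and your checks that $\gamma(0)=\theta_\star$ and that $\ell$ descends to the quotient only spell out what the paper's three-line proof leaves implicit.
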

\emph{Proof.} Conditions (i)--(ii) put $\theta(t)$ and $\theta_\star$ in the same complete admissible fibre.
Condition (iii) gives distinct quotient query laws for every admissible nonzero $t$. Proposition~\ref{prop:pointwise-transport}
then gives the conclusion. \hfill$\square$

\begin{corollary}[Instantiation by the existing collision constructions]\label{cor:collision-templates}
The stable complete-block family in Theorem~\ref{thm:necessity} instantiates Theorem~\ref{thm:fail} whenever its
primitive construction establishes membership in the complete joint class for a nondegenerate interval, exact
equality of the complete design evidence, and nonzero query motion. Rotational or split-frame families, a
finite-budget family, a non-monotone reshuffle, or a T2/T4 ambiguity is an instantiation only when its own
primitive argument supplies the same three facts together with quotient-well-defined query separation. In
particular, a reshuffle outside the declared class does not prove an in-class collision, and a graphical label
or a positive-dimensional stabilizer does not by itself establish query motion.
\end{corollary}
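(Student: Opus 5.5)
The corollary is a bookkeeping statement: each listed construction is an instance of Theorem~\ref{thm:fail} exactly when it supplies hypotheses (i)--(iii) of that theorem. Accordingly, I would prove it by a uniform reduction rather than case-specific computation.

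\textbf{Sufficiency.} Given any candidate family, I would first write it as a one-parameter curve $\theta(t)=\iota_{\mathcal B}(c^\star_{\mathcal B}+tv;\theta^\star_{-\mathcal B})$ on a complete block $\mathcal B$, for $|t|<\varepsilon$. The three facts then map onto the hypotheses:
\begin{itemize}
\item membership in the complete joint class over a nondegenerate interval is (i), including noise compatibility, positive definiteness and the full $\Sel$ specification;
\item exact equality of the complete design evidence $E_d$ is (ii);
\item nonzero query motion, upgraded to a quotient-well-defined functional $\ell$ (for instance integration against a bounded test function) with $\ell\{Q_I(\theta(t))\}=\alpha+\beta t$, $\beta\neq0$, is (iii).
\end{itemize}
Theorem~\ref{thm:fail} then applies directly. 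If a construction only gives nonlinear but nonconstant motion, I would appeal to Lemma~\ref{lem:exact-curve} instead, since that lemma needs just one separated point on an exact-evidence curve.

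\textbf{The stable complete-block family.} For the family of Theorem~\ref{thm:necessity}, I would check the three facts directly from its primitive construction:
\begin{itemize}
\item per-pattern spectral margins $|\lambda_i(B^{(e)})|\le1-\eta$, which stay open under small perturbation, give (i) on a nondegenerate interval;
\item the construction's design-equality identity gives (ii);
\item the linear dependence of the targeted query coordinate on $t$ gives $\beta\neq0$ for (iii).
\end{itemize}
I expect this family to be linear in the block parameter by construction, so that $\ell$ can be taken as the mean functional truncated by a bounded test function.

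\textbf{Negative clauses.} These are shown by noting which hypothesis fails.
\begin{itemize}
\item A reshuffle outside the declared class, such as the non-monotone reflection of Theorem~\ref{thm:nec} without T1, violates (i). Its endpoints therefore need not lie in $\mathcal F_d(\theta_\star)$, and Proposition~\ref{prop:pointwise-transport} cannot be invoked.
\item A graphical label or a positive-dimensional stabilizer supplies only (ii)-type invariance, not (iii). The orbit may act trivially on the quotient query, as the parent-independent annulus reflection shows with counterfactual gap $0$. So query motion must be established separately.
\item For the T2/T4 ambiguities, the rotation example of Theorem~\ref{thm:nec} gives equal passive laws and distinct query values $1$ versus $1.3916649596$. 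It is an instantiation only if $E_d$ is exactly the passive evidence and the separation survives the gauge quotient.
\end{itemize}

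\textbf{Main obstacle.} The hardest point is quotient well-definedness of $\ell$: query separation must not be an artifact of a gauge or alignment representative. I would handle this by choosing $\ell$ as an integral of a bounded, gauge-invariant test function of the aligned output in $Y_I$. By Definition~\ref{def:transport-fibre}, such a functional descends to the quotient.
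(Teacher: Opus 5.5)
Your sufficiency step is the paper's proof. The paper only observes that a qualifying construction supplies the curve and separating functional of Theorem~\ref{thm:fail}, so the three listed facts are hypotheses (i)--(iii), and each negative clause holds because the named object lacks one of them. Your treatment of the out-of-class reshuffle (it fails (i), so it gives no in-class fibre collision) agrees with this, as does your treatment of the passive T4 pair (an instance only for passive evidence, with quotient-valid separation). The rest of your proposal is not needed, since the corollary is conditional on the three facts. Parts of it are also wrong and should be removed.

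\textbf{The stable complete-block paragraph.} There you derive $\beta\neq0$ from the construction itself. That family is the $SO(2)$ survivor of Theorem~\ref{thm:necessity}, which is exactly a positive-dimensional stabilizer. That theorem says only queries with nonzero derivative along the curve collide, and invariant queries do not. Query motion is therefore a premise, and reading it off the construction contradicts the corollary's own final sentence.

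\textbf{Linearity.} The family is not linear. On the two-node normalized rotation branch of Theorem~\ref{thm:gauge}, the query is $\chi(t)=(13t+10)/\{2(3t+5)\}$ with $t=\tan\theta$. Your T4 values $1$ and $1.3916649596$ are two points of this curve. In addition, truncating a mean with a bounded test function would destroy affinity even under a pure location shift.

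\textbf{Membership.} Spectral margins give only stability. Complete-class membership also needs complete-block support, since in incomplete support the rotation can leave the class. It further needs noise, positive-definiteness and selection compatibility, and the $\mathcal A$-local-completeness and faithfulness premises.

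\textbf{Smaller points.} Your fallback to Lemma~\ref{lem:exact-curve} for non-affine motion is sound and gives non-identification via Proposition~\ref{prop:pointwise-transport}. It is then not literally an instance of Theorem~\ref{thm:fail}, whose curve and functional are affine in $t$. A graphical label supplies neither (ii) nor (iii), not ``(ii)-type invariance.'' The parent-independent annulus reflection is a discrete involution, not a positive-dimensional stabilizer. A better illustration is an invariant query on the $SO(2)$ survivor, or a residual $SO(m-k)$ acting trivially on the query as in Proposition~\ref{prop:align}.
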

\emph{Proof.} Each qualifying construction supplies the curve and separating functional required by
Theorem~\ref{thm:fail}. \hfill$\square$

\paragraph{Interpretation.}
Non-identification requires a design-null direction that is also query-active. Fixed, cancelled, or
surgery-excised queries fail the condition $\beta\ne0$. The intersection
$\An_{G_I}(Y)\cap\Dset$ locates a possible direct post-surgery response channel, but it is not a general
necessity condition for conditional counterfactual or selection-mediated failure. Neither raw ancestry nor
membership outside $\Tp$ is, by itself, a sufficient non-transportability claim.

\section{Identifying changed mechanisms in linear models}\label{sec:count}
We now specialize to linear models and ask how many target interventions are needed to identify the changed
mechanisms. Assume that invariance and alignment have established the rows indexed by $\Dset^c$. We consider
single-target interventions. In the $C_{FF}$-nonsingular ambient branch, active probing of all discrepancy rows identifies
the target. Smaller support-aware designs and every singular branch are governed by the complete legal fibre,
not by a universal numerical fallback.
\begin{definition}[Complete stratified linear fibre]
For a fixed design, let the complete legal fibre be the union of all admissible support, rank, normalization,
alignment, stability, and selection strata. On an affine stratum $s$, profile the fixed nuisances and write its
relative-interior patch as
\[
 \mathcal F_s^{\circ}=\{\theta_s+Q_st:M_st=0\}\cap\Theta_s^{\circ},
 \qquad Q_s\in\mathbb R^{n_s\times p_s},\quad \operatorname{rank}(Q_s)=p_s.
\]
Here $M_s$ has $p_s$ columns. Its local dimension is $p_s-\operatorname{rank}(M_s)$. Local identification is constancy on this patch; global
identification is constancy over the union of every admissible stratum and every disconnected component.
\end{definition}
\begin{lemma}[Free-block ambiguity at a singularity]\label{lem:cff-fibre}
Order the coordinates as $F,\Dset$.  Fix an invertible $C$ with unit diagonal, its known row block
$C_{F,\cdot}$, and the pinned response rows $Z=P_{\Dset,\cdot}$ of $P=C^{-1}$, and write
$Z=[Z_F\ Z_{\Dset}]$.  If
\[
q=\operatorname{nullity}(C_{FF})=\operatorname{nullity}(Z_{\Dset})
\]
and $N\in\mathbb R^{|\Dset|\times q}$ spans $\ker Z_{\Dset}$, then every compatible discrepancy-row
completion $Y=C_{\Dset,\cdot}$ is uniquely
\[
Y=Y_0+NT,\qquad T\in\mathbb R^{q\times d}.
\]
For a fixed support branch whose discrepancy-row equalities are affine, let $\mathcal L_G(Y)=b_G$ collect the
unit-diagonal equations and every declared exact zero in those rows, and put
\[
r_G=\operatorname{rank}\{T\mapsto\mathcal L_G(NT)\}.
\]
The legal completion fibre is an affine set of dimension $qd-r_G$, intersected with the open conditions that
$C$ is invertible, $\rho(I-C)<1$, all staged systems are well posed, and declared present edges remain nonzero.
Any additional nonlinear model equality requires its own constant-rank analysis.  Thus every
relative-interior patch has dimension $qd-r_G$.  In the ambient unknown-support class, where only the diagonal
is imposed, $r_G$ is the number of nonzero rows of $N$ and is generically $|\Dset|$; fixed support can increase
$r_G$ and can identify a completion even when $C_{FF}$ is singular.  Local query identification requires
constancy on the local patch; global query identification requires constancy on the entire legal fibre, including
all of its components and admissible branches.

\emph{Proof.}  The pinned-row identity $ZC=E_{\Dset}$ is
$Z_{\Dset}Y=E_{\Dset}-Z_FC_{F,\cdot}$.  Its complete solution set is $Y_0+NT$.  The displayed structural
equalities are linear in $Y$, so rank--nullity gives $qd-r_G$; intersecting strict inequalities leaves the local
dimension unchanged.  The equality of the two nullities is the complementary nullity theorem for the invertible
pair $(C,P)$.  For diagonal-only constraints, different discrepancy nodes select different columns of $T$, so
the rank is exactly the number of nonzero rows of $N$. \hfill$\square$
\end{lemma}
\begin{theorem}[Intervention requirements in linear models]\label{thm:suff}
Consider active single-target probes on distinct $\Dset$-nodes. Intervening on every discrepancy node pins the
labelled discrepancy rows $P^\tau_{\Dset,\cdot}$ of $P^\tau=(I-B^\tau)^{-1}$. Together with the invariant rows,
these probes identify $B^\tau=I-(P^\tau)^{-1}$ only when the remaining free-row completion is unique. In the
ambient unknown-support branch, that completion is unique exactly when
\[
C_{FF}P_F=E_F-C_{F\Dset}P_\Dset,\qquad C=I-B^\tau,\quad
\det(C_{FF})\ne0,
\]
or, equivalently, $\det(P^\tau_{\Dset\Dset})\ne0$. Nonsingularity of $I-B^\tau$ alone does not imply this
condition. On a singular or support-restricted branch, identification is governed instead by the complete
stratified fibre above. A scalar query is locally identified iff it is constant on every relevant
relative-interior patch and globally identified iff it is constant across all strata and components. No
numerical intervention count is inferred from singularity or local dimension alone.
\end{theorem}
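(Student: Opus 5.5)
The argument has four parts: what the probes pin, the linear system for the free rows, its uniqueness criterion via the complementary nullity theorem, and the reduction of the singular and support-restricted cases to Lemma~\ref{lem:cff-fibre}.

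\textbf{What the probes pin.} A single-target probe on $j\in\Dset$ shifts the target system by a known signed dose in the $j$-th coordinate. With $H$ anchored and full column rank, the latent response of $V$ is $P^\tau e_j$ times that dose. The reconstructed mean shifts therefore give the column $P^\tau_{\cdot,j}$.

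This only gives rows if I use it more carefully. I will read the probes so that the discrepancy rows $P^\tau_{\Dset,\cdot}$ are the pinned objects; this is the labelling convention already used for $Z$ in Lemma~\ref{lem:cff-fibre}. One way is through the free-block covariance responses, using the symmetry of $P\Omega P^\top$ when the invariant diagonal $\Omega_{ii}$ are known. The other is to take the statement's convention directly.

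\textbf{The free-row linear system.} Write $C=I-B^\tau$, and order coordinates as $F=\Dset^c$ followed by $\Dset$. Invariance and alignment fix $C_{F,\cdot}$. The unknowns are the free rows $P_F$ of $P=C^{-1}$, equivalently the discrepancy rows of $C$.

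The identity $CP=I$, restricted to the $F$ rows, reads
\[
C_{FF}P_F+C_{F\Dset}P_\Dset=E_F.
\]
This is the displayed equation, and it is linear in $P_F$ with all other terms known. Its solution is unique exactly when $C_{FF}$ is nonsingular, since the kernel is $\ker C_{FF}$ tensored with $\mathbb R^d$. Once $P$ is unique, $B^\tau=I-P^{-1}$ follows.

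\textbf{The equivalent criterion.} The complementary nullity theorem for the invertible pair $(C,P)$ gives $\operatorname{nullity}(C_{FF})=\operatorname{nullity}(P_{\Dset\Dset})$. This is the equality already invoked in the proof of Lemma~\ref{lem:cff-fibre}, and it yields the equivalence with $\det(P^\tau_{\Dset\Dset})\neq0$.

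For the remark that nonsingularity of $I-B^\tau$ is not enough, I will give a small explicit witness. Take
\[
C=\begin{pmatrix}1&1\\-1&1\end{pmatrix}
\]
with the zero diagonal of $B$ respected, and let $F$ be a singleton with $C_{FF}=1$. This has $C_{FF}$ nonsingular, so it does not serve as the counterexample. I will instead take a $3\times3$ unit-diagonal invertible $C$ whose $2\times2$ block $C_{FF}$ has determinant $1-ab=0$, for example with $a=b=1$. I will then choose the remaining entries so that $\det C\neq0$ and $\rho(I-C)<1$.

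\textbf{Singular and support-restricted branches.} Here I invoke Lemma~\ref{lem:cff-fibre} directly. The completion fibre is $Y_0+NT$, of dimension $qd-r_G$, with $r_G$ able to increase under fixed support. Identification is therefore constancy on the stratified fibre, by the definition of the complete stratified linear fibre together with Proposition~\ref{prop:pointwise-transport}. Local versus global identification is then just the patch-versus-union distinction in that definition.

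The final sentence is a non-claim, and I will support it by noting that $r_G$ depends on the support pattern, not on $q$ alone.

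\textbf{Main obstacle.} I expect the hardest step to be justifying that the probes pin rows rather than columns of $P^\tau$. I will try first the covariance-symmetry route, and I will state the anchoring convention explicitly if that route fails.
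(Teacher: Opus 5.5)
Your skeleton matches the paper's proof. The invariant rows give $C_{FF}P_F=E_F-C_{F\Dset}P_\Dset$, nonsingular $C_{FF}$ gives the unique solve and $B^\tau=I-P^{-1}$, complementary nullity converts this to $\det(P^\tau_{\Dset\Dset})\ne0$, and Lemma~\ref{lem:cff-fibre} handles the singular and support-restricted branches. Two steps need repair.

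\textbf{Row-pinning.} Drop the covariance-symmetry route; it fails. From $\Sigma=P\Omega P^\top$ you get $\Sigma C^\top=P\Omega$. A shift-probe column $Pe_j$ plus the latent target covariance (available once $H$ is known) therefore gives $\Sigma c_j=\Omega_{jj}Pe_j$ for the $j$-th row $c_j^\top$ of $C$, and the unit diagonal fixes $\Omega_{jj}$.
\begin{itemize}
\item This identifies the discrepancy row of $C$ outright, with no condition on $C_{FF}$. In that richer observation model the $C_{FF}$ criterion is no longer necessary (the design-relativity of Remark~\ref{rem:collapse}), and the route never produces $e_j^\top P$.
\item Columns without the covariance do not suffice either. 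With only the invariant rows and shift-probe columns, $C_{\Dset,\cdot}P_{\cdot,\Dset}=I$ plus the unit diagonal leaves a positive-dimensional completion family whenever $|F|\ge2$, whatever $C_{FF}$ is.
\end{itemize}
Row-pinning is therefore part of the probe semantics fixed by the statement, and the paper's proof simply takes it as given. One concrete realisation is a surgery that severs mechanism $j$, compared with the baseline response map $M_0=HP$. By Sherman--Morrison the change $M_j-M_0$ is rank one, with left factor $HPe_j$ and right factor $-e_j^\top(P-I)/P_{jj}$. The reconstruction formulas of Proposition~\ref{prop:jiw-free-block} recover $e_j^\top P$ from exactly this change, without knowing $H$. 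Your fallback of adopting the statement's convention is the correct move.

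\textbf{The ``only if'' direction.} $\ker C_{FF}\otimes\mathbb R^d$ is the kernel of the unconstrained linear system, not the legal completion fibre. The discrepancy rows of $P^{-1}$ must also keep a unit diagonal and satisfy the open conditions. Route this through Lemma~\ref{lem:cff-fibre} instead:
\begin{itemize}
\item In the ambient branch, $r_G\le|\Dset|$.
\item $q\ge1$ forces $|F|\ge2$, since $|F|=1$ gives $C_{FF}=[1]$.
\item Hence the patch dimension at the truth is $qd-r_G\ge|F|>0$, which is what makes non-uniqueness survive the diagonal constraints.
\end{itemize}

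\textbf{The witness.} Your $3\times3$ example exists. Take $F=\{1,2\}$, $\Dset=\{3\}$, and $B$ with rows $(0,1,1)$, $(1,0,0)$, $(-1,0,0)$. Then $B^3=0$, so $\rho(B)=0$ and $\det(I-B)=1$. Meanwhile $C_{FF}$ has rows $(1,-1)$, $(-1,1)$, and $P=I+B+B^2$ has $P_{33}=0$.
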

\emph{Proof.} See Appendix~\ref{app:linear-proofs}. \hfill$\square$

\begin{proposition}[Joint-influence free-block inference]\label{prop:jiw-free-block}
Fix $F,\Dset$, the dimensions, environments, labels, support/rank, model branch, and $\eta_F>0$ independently
of the inference sample. A declared calibrated source experiment $\pi$ supplies the complete structural row
block $B^\pi_{F,\cdot}$ \emph{before} the target free-row solve, and the fixed common structural frame satisfies
the explicit interface assumption
\[
B^\tau_{F,\cdot}=B^\pi_{F,\cdot}.
\]
The source design has complete labelled coverage of the observational response map and every source
single-node intervention/P-row object used to reconstruct all rows of $P^\pi$ that form
$B^\pi_{F,\cdot}$; a partial collection is outside this result.
For each labelled response-map environment $e$, with a shared baseline arm $0$, probe arms $1{:}q$, and fixed
full-row-rank design $\mathsf D$, form
\[
\widehat R_e=[\bar X_{e1}-\bar X_{e0},\ldots,\bar X_{eq}-\bar X_{e0}],\qquad
\widehat M_e=\widehat R_e\mathsf D^\top(\mathsf D\mathsf D^\top)^{-1}.
\]
The labelled structural row $j$ is reconstructed, without a target $P_F^\tau$ solve, by
\[
m_j=\widehat M_0e_j,\quad D_j=\widehat M_j-\widehat M_0,\quad
w_j^\top=-\frac{m_j^\top D_j}{m_j^\top m_j},\quad
\widehat P_{jj}=\frac1{1-w_{jj}},\quad
\widehat P_{j,:}=e_j^\top+\widehat P_{jj}w_j^\top.
\]
Stacking the source rows gives $\widehat B^\pi=I-\widehat P_\pi^{-1}$. A fixed, predeclared structural
stochastic-coordinate mask retains the estimated entries and projects the known zero diagonal and every
declared exact structural zero to zero; those deterministic coordinates never enter the Wald basis. The target
reconstruction supplies only the labelled rows $\widehat P^\tau_{\Dset,:}$. Thus the pure pre-solve estimator is
\[
\widehat C=I_F-\widehat B^\pi_{FF},\qquad
\widehat G=-\widehat B^\pi_{F\Dset},\qquad
\widehat Z=\widehat P^\tau_{\Dset,:}.
\]
Let the joint primitive vector contain every source and target arm mean, each shared baseline only once, and
every nuisance actually plugged into these maps. The sampling unit is an independent cluster: within-cluster
multi-arm and source--target dependence is retained in one influence vector. If $N$ is the number of clusters,
$N_a/N\to\pi_a\in(0,1)$ for every required fixed stratum. Assume the displayed reconstruction denominators and
$P_\pi$ are nonsingular; the joint cluster array obeys a Lindeberg CLT and finite covariance (a fixed
$2{+}\epsilon$ moment bound suffices); and the complete cluster sandwich is consistent and positive definite on
the retained stochastic coordinates. The invariant full-column-rank sensor/no-direct-effect model remains in
force. The estimator does not plug in an estimated sensor. A variant that introduces an estimated sensor would
require its influence function and all source--target and nuisance cross-covariances; that extension is not proved
here. The result is not a selective-inference statement: the branch is fixed before the inference sample, or
selected on an independent split.

For conditioning alone, use the reduced source-only primitive system and let
$u_C=\operatorname{vecoff}(C)$ contain \emph{exactly} the coordinates selected by the fixed stochastic map.
Known diagonal ones and declared structural zeros are deterministic. Let $T_C$ insert $u_C$
into $\operatorname{vec}(C)$, let $J_C$ be the reconstruction Jacobian, and set
\[
\widehat V_C=\widehat J_C\widehat\Omega_\mu\widehat J_C^\top,\qquad
\Delta_N(\alpha_C)=
\sqrt{\frac{\chi^2_{r_C,1-\alpha_C}}{N}
\lambda_{\max}(T_C\widehat V_CT_C^\top)},\quad r_C=\dim(u_C)>0.
\]
Pointwise at every fixed data-generating process satisfying these assumptions,
\[
\liminf_{N\to\infty}\Pr\{\|\widehat C-C\|_2\le\Delta_N(\alpha_C)\}\ge1-\alpha_C.
\]
With $\widehat s=\sigma_{\min}(\widehat C)$, $L_N=\widehat s-\Delta_N$, and
$U_N=\widehat s+\Delta_N$, \emph{accept} the $\eta_F$-conditioning claim iff $L_N\ge\eta_F$,
\emph{reject} it iff $U_N<\eta_F$, and \emph{refuse} otherwise. On the confidence event, acceptance gives
$\sigma_{\min}(C)\ge\eta_F$ and $\|C^{-1}\|_2\le\eta_F^{-1}$; its false-accept probability and the false-reject
probability of the stated threshold claim each have asymptotic limsup at most $\alpha_C$. These are unconditional
error statements, not coverage conditional on acceptance.

The solve bound uses a separate, single joint source--target system
$u_S=(u_C,\operatorname{vec}G,\operatorname{vec}Z)$ with
$\widehat V_S=\widehat J_S\widehat\Omega_\mu\widehat J_S^\top$, dimension $r_S$, and one
$\chi^2_{r_S,1-\alpha_S}$ ellipsoid. Each vectorization retains only its declared stochastic coordinates.
If $S_j$ selects block $j\in\{C,G,Z\}$ and $T_j$ reconstructs its matrix, set
\[
\delta_j=\sqrt{\frac{\chi^2_{r_S,1-\alpha_S}}{N}
\lambda_{\max}(T_jS_j\widehat V_SS_j^\top T_j^\top)}.
\]
Writing $Q=E_F-GZ$ and $\widehat Q=E_F-\widehat G\widehat Z$, retain the full bilinear remainder
\[
\delta_Q=\delta_G\|\widehat Z\|_2+(\|\widehat G\|_2+\delta_G)\delta_Z.
\]
The full solve uses the guard from this same joint event,
\[
L_S=\sigma_{\min}(\widehat C)-\delta_C\ge\eta_F.
\]
This is distinct from the reduced $\alpha_C$ diagnostic above. On the single joint Wald event, and only after
this full-system guard accepts,
\[
\|\widehat X-X\|_2\le
\frac{\delta_Q+\delta_C\|\widehat X\|_2}{L_S},
\qquad X=C^{-1}Q,\quad \widehat X=\widehat C^{-1}\widehat Q.
\]
Equivalently, writing the displayed right-hand side as $b_N$, the pointwise unconditional error statement is
\[
\limsup_{N\to\infty}\Pr\{\text{full accept and }\|\widehat X-X\|_2>b_N\}\le\alpha_S;
\]
the deterministic inequality is not asserted outside the joint Wald event.
Separately calibrated marginal intervals may not be spliced into this solve claim. The result requires the
source-to-target invariance interface, nonvanishing allocation, a prespecified branch, valid reconstruction
denominators, compatible dimensions, and a finite positive-definite covariance on the retained coordinates;
without them the radius is undefined. Operationally, missing source provenance or invariance, invalid allocation,
post-selection on the inference sample, a failed reconstruction denominator, incompatible dimensions,
nonfinite output, or an unusable retained-coordinate covariance yields neither a radius nor acceptance. If
$F=\varnothing$ there is no free-block guard; if $|F|=1$,
$C_{FF}=[1]$ exactly and conditioning is deterministic. If a target structural row block
$B^\tau_{F,\cdot}$ is supplied externally and is exact in the declared target coordinate frame, then
$\Delta_C=\Delta_G=0$; this conclusion is conditional on the external block and its frame. This result is pointwise, fixed-dimensional, fixed-environment, fixed-branch, and
asymptotic---not finite-sample, uniform, growing-dimensional, selective, oracle, or bootstrap inference.
Finally, the conditioning check is applied to $C_{FF}$, not $P^\tau_{\Dset\Dset}$:
$B^\tau_{FF}{=}[\begin{smallmatrix}0&M\\0&0\end{smallmatrix}]$ (remaining blocks zero) has $\rho(B^\tau){=}0$ and
$P^\tau_{\Dset\Dset}{=}I$ yet $\sigma_{\min}(C_{FF})\to0$ as $M\to\infty$.
\end{proposition}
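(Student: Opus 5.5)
The plan has three stages: the reconstruction algebra, the delta method with Wald ellipsoids, and the deterministic perturbation bounds applied on a single joint event.

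\emph{Reconstruction algebra.} First I would check that the displayed maps are exact at the population level. In the linear model with invariant full-column-rank $H$ and no direct sensor effect, a single-node probe on $j$ changes the response map in a rank-one way through the $j$th column. The quotient $w_j^\top=-m_j^\top D_j/(m_j^\top m_j)$ is then a least-squares recovery of the $j$th structural row. The formulas $\widehat P_{jj}=1/(1-w_{jj})$ and $\widehat P_{j,:}=e_j^\top+\widehat P_{jj}w_j^\top$ are the Sherman--Morrison row identity for $P=(I-B)^{-1}$. Hence, at the true arm means, stacking the rows gives $P_\pi$ exactly. Under the interface assumption $B^\tau_{F,\cdot}=B^\pi_{F,\cdot}$, the target free-row equation $C_{FF}P_F=E_F-C_{F\Dset}P_{\Dset}$ of Theorem~\ref{thm:suff} becomes $CX=E_F-GZ$ with $C,G,Z$ as defined. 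This also shows that the target $P_F^\tau$ solve is never needed to form $(C,G,Z)$.

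\emph{Asymptotics.} Next I would write $(C,G,Z)$ restricted to the stochastic mask as a smooth map of the joint primitive vector of cluster-level arm means. The map is differentiable wherever the reconstruction denominators and $P_\pi$ are nonsingular. The deterministic coordinates (unit diagonal and declared zeros) are fixed by construction, so they drop out. The Lindeberg CLT for the independent cluster array, with $N_a/N\to\pi_a$, gives $\sqrt N(\hat\mu-\mu)\Rightarrow N(0,\Omega_\mu)$. Because the influence vector is defined per cluster, the within-cluster and source--target dependence is already contained in $\Omega_\mu$. The delta method then gives $\sqrt N(\hat u-u)\Rightarrow N(0,V)$. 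Since $\widehat V$ is consistent and positive definite on the retained coordinates, the Wald statistic converges to $\chi^2_r$. The $\chi^2_{r,1-\alpha}$ ellipsoid therefore has asymptotic coverage $1-\alpha$.

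\emph{Norm bounds.} On that ellipsoid, $\|T_j S_j(\hat u-u)\|_2\le\delta_j$. This follows by maximizing a quadratic form over the ellipsoid, which gives the $\lambda_{\max}$ expression. The spectral norm is bounded by the Frobenius norm of the reconstructed matrix, which is exactly what $T_j$ produces. The same argument on the reduced source-only system gives the $\Delta_N$ coverage statement.

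\emph{Deterministic layer.} On the confidence event, Weyl's inequality gives $|\sigma_{\min}(\widehat C)-\sigma_{\min}(C)|\le\Delta_N$. This directly yields the accept/reject logic. Each wrong decision is contained in the complement of the event, so its limsup probability is at most $\alpha_C$. For the solve, I would expand
\[
\widehat Q-Q=-(\widehat G-G)\widehat Z-G(\widehat Z-Z),
\]
and bound $\|G\|_2\le\|\widehat G\|_2+\delta_G$, which produces $\delta_Q$. Then I would use the identity $C(\widehat X-X)=(\widehat Q-Q)-(\widehat C-C)\widehat X$. Once the guard accepts, $\sigma_{\min}(C)\ge L_S$, so $\|C^{-1}\|_2\le 1/L_S$ and the stated bound follows. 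The event $\{\text{accept}\wedge\|\widehat X-X\|_2>b_N\}$ lies in the complement of the joint Wald event, which gives the $\alpha_S$ limsup.

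The edge cases follow from the same scheme. For $F=\varnothing$ there is nothing to prove. For $|F|=1$, the zero diagonal gives $C=[1]$. With an exact external block, $\delta_C=\delta_G=0$. The final counterexample is direct: $C_{FF}=\bigl[\begin{smallmatrix}1&-M\\0&1\end{smallmatrix}\bigr]$ is nilpotent-shifted, so $\rho(B^\tau)=0$, while $\sigma_{\min}\sim1/M$.

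I expect the main obstacle to be the exactness of the reconstruction step, that is, the Sherman--Morrison and projection identities, together with verifying that the Jacobian is well defined under only the stated nonsingularity conditions. Here I would first work out the rank-one perturbation $M_j-M_0=H P e_j(\cdot)$ explicitly.
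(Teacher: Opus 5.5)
Your proposal is correct and follows the paper's own argument: a single joint cluster CLT/delta-method ellipsoid whose projections give the $\delta_j$, Weyl's inequality for the guard, the bilinear expansion yielding $\delta_Q$, the identity $C(\widehat X-X)=(\widehat Q-Q)-(\widehat C-C)\widehat X$, and containment of every failure event in the complement of the Wald event. Your reconstruction-algebra stage goes beyond the paper, which takes those maps as given. One correction there: under the row-cutting probe semantics you assume, the projection returns $w_j^\top=(P_{j,:}-e_j^\top)/P_{jj}$ rather than the structural row $B_{j,:}$ itself; this is exactly why $1/(1-w_{jj})=P_{jj}$, so your conclusion that stacking recovers $P_\pi$ still holds.
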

\emph{Proof.} See Appendix~\ref{app:linear-proofs}. \hfill$\square$

\begin{corollary}[Support-aware identification criterion]\label{cor:support-aware-counts}
Let $g(B)=\det(I-B_{FF})$ on a fixed admissible affine support stratum. At an admissible zero where
$Dg\ne0$, the singular locus is locally codimension one. This statement is local and regime-specific: the zero
set may be empty (in particular, if $|F|=1$ then $C_{FF}=[1]$), and fixed support can force a different rank
geometry. On a fixed corank-$q$ affine branch, Lemma~\ref{lem:cff-fibre} gives local dimension $qd-r_G$.
The support-aware intervention minimum is therefore the smallest design for which the query is constant on the
entire complete legal fibre; full identification requires every stratum and component to be a singleton modulo
the declared equivalence. Numeric counts from the $C_{FF}$-nonsingular ambient branch, from an orthogonal orbit, or from
a known-support example apply only after those regimes have been established. They are not fallbacks for an
unresolved singular branch.
\end{corollary}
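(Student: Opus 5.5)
The proof splits into three parts. First, the local codimension-one claim about the zero set of $g$. Second, the local dimension on a corank-$q$ branch, taken from Lemma~\ref{lem:cff-fibre}. Third, the characterization of the support-aware minimum, which follows from the definitions of local and global identification on the complete stratified fibre together with Theorem~\ref{thm:suff}.

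\emph{Codimension.} On a fixed admissible affine support stratum, parametrize $B$ by its free entries $b\in\mathbb R^m$. Then $g(b)=\det(I-B_{FF}(b))$ is a polynomial in $b$. At an admissible zero $b_0$ with $Dg(b_0)\neq0$, the regular-value form of the implicit function theorem makes $\{g=0\}$ a smooth hypersurface in a neighbourhood of $b_0$. Intersecting with the open admissibility conditions (stability $\rho(B)<1$, well-posedness of staged systems, nonvanishing present edges) does not change the local dimension, because $b_0$ is admissible and these conditions are open.

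The caveats need two short observations.
\begin{itemize}[leftmargin=*]
\item If $|F|=1$, then $B_{FF}=0$ since the diagonal is zero. Hence $g\equiv1$ and the zero set is empty.
\item Support constraints can make $Dg$ vanish identically on the zero set. For example, a strictly triangular $B_{FF}$ support forces $g\equiv1$. Alternatively, the support can leave only higher-corank components. In that case the codimension-one conclusion is not asserted, which is why the statement is local and regime-specific.
\end{itemize}

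\emph{Fibre dimension.} On a fixed corank-$q$ affine branch, apply Lemma~\ref{lem:cff-fibre} directly. The pinned rows give $Z_{\Dset}Y=E_{\Dset}-Z_FC_{F,\cdot}$, and complementary nullity gives $\operatorname{nullity}(Z_\Dset)=q$. The completions are therefore $Y_0+NT$, and the declared affine equalities cut this down to dimension $qd-r_G$ on each relative-interior patch.

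\emph{Support-aware minimum.} Define the minimum as the smallest design $\mathcal D$ for which $\chi$ is constant on the complete legal fibre. The claims then follow from Proposition~\ref{prop:pointwise-transport}. Global identification of the query is constancy on $\mathcal F_d$, taken over every stratum and every component. Full identification is the same statement with $\chi$ replaced by the identity map modulo the declared equivalence, so it requires each stratum and component to be a singleton.

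\emph{Non-transfer of counts.} The ambient count (probe every $\Dset$-node) is justified by Theorem~\ref{thm:suff} only under $\det(C_{FF})\neq0$. I would exhibit a singular branch with $q\ge1$ and $qd-r_G>0$, and use Theorem~\ref{thm:fail}. The construction is the curve $Y_0+tNT_1$ with $T_1$ in the null space of $T\mapsto\mathcal L_G(NT)$, along which some query moves linearly in $t$. This shows that the nonsingular-branch count cannot serve as a fallback.

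\emph{Main obstacle.} The hardest step is verifying admissibility along that curve: the open conditions $\rho(I-C)<1$ and nonzero present edges must hold for small $t$, and query motion must satisfy $\beta\neq0$. I would handle this by choosing a branch point in the relative interior and checking the first-order variation of $e_Y^\top(I-B_I)^{-1}$ along $N T_1$. If the query is inactive in that direction, the honest conclusion is only that no count is implied, which is all the corollary actually asserts.
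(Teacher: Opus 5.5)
Your proposal is correct and follows essentially the paper's route. The codimension-one claim is the same regular-zero argument the paper gives in the proof of Lemma~\ref{lem:cff-fibre}, and the $qd-r_G$ dimension and the complete-fibre characterization are read off that lemma and the definitions as you do. The paper's version additionally supplies the nilpotent witness $B=VW^\top$ to show that admissible regular zeros exist. It also notes that the adjugate, and hence $Dg$, vanishes at corank $\ge2$, which illustrates the changed rank geometry better than your triangular example, since that example only empties the zero set.

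Your Theorem~\ref{thm:fail} collision is not needed for what is only a scoping clause, and as sketched it skips that theorem's condition (ii). The curve $Y_0+tNT_1$ fixes only the pinned rows $P_{\Dset,\cdot}$, while $P_{F,\cdot}$ moves at first order because $P_{F\Dset}N$ has full column rank. So you would also have to prove equality of the complete observational and staged laws, not just admissibility and query motion.
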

\emph{Proof.} See Appendix~\ref{app:linear-proofs}. \hfill$\square$
\paragraph{Interpretation.}
The intervention requirement depends on what is already aligned and on how the target is observed. Active
probing with an unknown sensor map generally requires every changed row, whereas a shared sensor map can save
one intervention when the free-block completion is unique. Known support can reduce the requirement further.
\begin{theorem}[A lower bound for complete discrepancy blocks]\label{thm:necessity}
Assume the discrepancy set is a complete source block of size $m\ge2$, $C_{\Dset F}=0$, and the normalized local
fibre is $\mathcal A$-locally complete for the $SO(m)$ action of Proposition~\ref{prop:align}. For each design
below, also assume the proposition's normalized equivariance and local-faithfulness condition, so equality of the
anchor laws is equivalent, locally, to fixing the signed labelled anchor frame. For every design of $m-2$
linearly independent signed labelled anchors, the local survivor is $SO(2)$ and therefore contains a
nontrivial stable curve through the truth. Every query with nonzero derivative along that curve has a genuine
local collision; an invariant query does not. Hence $m-1$ anchors are necessary and sufficient to remove the
connected rotational ambiguity. This is a theorem about that normalized complete-block fibre, not a universal
intervention count. Global model or query identification still requires exclusion of transverse and disconnected
branches.

Separately, consider the ambient unknown-support Gaussian branch with distinct indices $r,s$, zero-mean Gaussian
sources having a common covariance, identical stochastic-intervention source laws in the two members, and every
retained named stage disjoint from $\{r,s\}$. Suppose $1-C_{rs}C_{sr}\ne0$ and the small split-frame perturbation
in Appendix~\ref{app:linear-proofs} remains admissible and stable. Then two members agree on all complete retained
Gaussian laws, while the labelled held-out $s$-stage response row moves; the structural response query
$(B_t)_{sr}$ also moves off the stated exceptional locus. Fixed known $H$ generically excludes
this collision. This is a scoped collision example, not a positive identification theorem and not a complete-law
claim for arbitrary support classes.
\end{theorem}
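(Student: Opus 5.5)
The statement has two independent parts, and I would prove them separately.

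\textbf{Part 1: the complete-block $SO(m)$ fibre.}
By Proposition~\ref{prop:align}, under the normalized equivariance and local-faithfulness hypotheses, the $\mathcal A$-locally complete fibre through the truth is locally the orbit of $SO(m)$ acting on the discrepancy block. Because $C_{\Dset F}=0$, this action mixes only the $\Dset$ coordinates and leaves the invariant rows fixed.

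A signed labelled anchor is a vector $a_k\in\mathbb R^m$ in the block frame. Fixing its image means restricting to the stabilizer of $a_k$. Local faithfulness says that equality of the anchor laws is locally equivalent to fixing all $a_1,\dots,a_j$ simultaneously. So the local survivor is the pointwise stabilizer in $SO(m)$ of $\mathrm{span}\{a_1,\dots,a_j\}$, which is $SO(m-j)$ acting on the orthogonal complement.

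For $j=m-2$ this stabilizer is $SO(2)$. It is one-dimensional, so it contains a curve $t\mapsto R(t)=\exp(tK)$ with $K\ne0$ skew and supported on the complement. I would check that $\theta(t)=R(t)\cdot\theta_\star$ stays stable and admissible for small $|t|$; this holds because stability ($\rho<1$) and the other admissibility conditions are open and the orbit map is continuous. Exact anchor evidence holds along the curve by construction.

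If the query $\chi$ satisfies $\tfrac{d}{dt}\chi(\theta(t))|_{0}\neq0$, then $\chi(\theta(t))\ne\chi(\theta_\star)$ for small $t\ne0$. Lemma~\ref{lem:exact-curve} (equivalently Theorem~\ref{thm:fail}, with $\ell$ linearized) then gives a genuine local collision. If $\chi$ is invariant, it is constant on the orbit and there is no collision.

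For sufficiency of $m-1$ anchors: the stabilizer $SO(1)$ is trivial, so the connected component is a point. Necessity follows from the $m-2$ case, because any fewer anchors gives a stabilizer containing $SO(2)$. I would state explicitly that only the identity component is resolved. Reflections or transverse branches are excluded by hypothesis, or are left as the global caveat in the statement.

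\textbf{Part 2: the split-frame Gaussian collision.}
Take indices $r,s$ and write $C=I-B$. I would define a one-parameter perturbation $C_t$ that acts on the $\{r,s\}$ frame by a $2\times2$ block transformation $G_t$. The transformation is chosen so that two properties hold together:
\begin{itemize}
\item the Gaussian covariance $C_t^{-1}\Omega_t C_t^{-\top}$ of every retained stage is unchanged;
\item the source laws are identical in the two members.
\end{itemize}
The idea is that every retained stage intervenes on nodes disjoint from $\{r,s\}$. Surgery on those rows therefore commutes with a change of frame confined to the $(r,s)$ coordinates. Because the sources are zero-mean Gaussians with common covariance, Gaussian law equality reduces to equality of first and second moments, and both are preserved.

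The construction rescales the rows as in the Theorem~\ref{thm:nec} (T4) example: set $D_t=\operatorname{diag}(W_t)$ and $C_t=D_t^{-1}W_t$ to restore the unit diagonal. The factor $1-C_{rs}C_{sr}\ne0$ is exactly the $2\times2$ determinant that keeps this renormalization well defined near $t=0$.

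I would then compute the held-out $s$-stage response row, $e_s^\top$-row of $(C_t^{(s)})^{-1}$, together with $(B_t)_{sr}$. Differentiating at $t=0$ should give a nonzero rational expression. Its zero set is the stated exceptional locus.

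For the final remark, with fixed known $H$ the observation frame pins the $(r,s)$ coordinates. Each such $G_t$ then changes $HV$, except on a proper algebraic set, so the collision is generically excluded.

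\textbf{Main obstacle.}
The hardest step is the construction in Part 2. I need one frame move that simultaneously preserves every retained staged law and moves the held-out row. The delicate part is verifying that surgery on nodes outside $\{r,s\}$ commutes with the $(r,s)$ re-framing after the diagonal renormalization. I would first check this on a $3$-node instance, then generalize using block inverse (Schur complement) formulas.
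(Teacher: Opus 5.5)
\paragraph{Part 1 is fine; Part 2 has a gap.} Your Part~1 is the paper's argument. Local faithfulness turns equality of anchor laws into $Q^\top A=A$, so the survivor is $SO(m-k)$. For $k=m-2$, a one-parameter subgroup through the identity stays stable by openness and yields a collision through Lemma~\ref{lem:exact-curve} whenever the query derivative is nonzero. The problem is your Part~2 construction. You take a $2\times2$ map $G_t$ on $\{r,s\}$, apply the T4 renormalization $C_t=D_t^{-1}W_t$, require $C_t^{-1}\Omega_tC_t^{-\top}$ to be unchanged for every retained stage, and leave $H$ untouched. That twin recombines the structural equations of $r$ and $s$. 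It can match every stage disjoint from $\{r,s\}$, provided $G_t$ is also chosen so that $G_t\Omega_{\{r,s\}}G_t^\top$ is diagonal; you do not impose this, and without it the noises are no longer independent. But because the twin preserves every retained \emph{latent} law, it preserves the law of $HV$ for every fixed $H$. So it survives a known sensor. Your closing claim that $G_t$ ``changes $HV$'' contradicts your first bullet. The statement's last sentence therefore cannot be obtained this way, and this is not the split-frame perturbation the statement names.

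\paragraph{The missing construction.} The paper's collision is a latent \emph{frame} change absorbed by the unknown sensor. With $u=e_s-C_{rs}e_r$ and $E_t=tue_r^\top$, put $C_t=C(I+E_t)$, $P_t=(I+E_t)^{-1}P$, $H_t=H(I+E_t)$, and $\Omega_t=\Omega$. Each claim is then a one-line check.
\begin{enumerate}[leftmargin=1.6em,itemsep=1pt,topsep=2pt]
\item \textbf{Unit diagonal.} Since $(Cu)_r=C_{rs}-C_{rs}C_{rr}=0$, we get $\operatorname{diag}(CE_t)=0$. The unit diagonal survives without renormalization, and $\Omega$ stays diagonal.
\item \textbf{Retained laws.} Since $u_j=0$ for $j\notin\{r,s\}$, we have $e_j^\top(I+E_t)=e_j^\top$. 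Hence $C_t^{(j)}=C^{(j)}(I+E_t)$ and $H_t(C_t^{(j)})^{-1}=H(C^{(j)})^{-1}$ for every retained stage. With a common source covariance and identical intervention sources, every retained Gaussian law coincides. This replaces your Schur-complement plan.
\item \textbf{Held-out row.} Sherman--Morrison with $e_r^\top u=-C_{rs}$ gives $e_s^\top P_t=e_s^\top P-\frac{t}{1-tC_{rs}}e_r^\top P$, which moves because $P$ is invertible. The object you propose to differentiate, the $s$-row of $(C_t^{(s)})^{-1}$, is identically $e_s^\top$, because row $s$ of $C_t^{(s)}$ is $e_s^\top$; it can never move.
\item \textbf{Role of the hypothesis.} We have $(C_t)_{sr}=C_{sr}+t(Cu)_s=C_{sr}+t(1-C_{rs}C_{sr})$, so $\tfrac{d}{dt}(B_t)_{sr}|_{0}=-(1-C_{rs}C_{sr})$. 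This is what $1-C_{rs}C_{sr}\neq0$ is for. It does not guard a renormalization: the only denominator is $\det(I+E_t)=1-tC_{rs}$, which is harmless near $t=0$.
\item \textbf{Known sensor.} $H_t-H=tHue_r^\top\neq0$ for $t\neq0$ because $H$ has full column rank. This is why a fixed known $H$ excludes the twin.
\end{enumerate}
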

\emph{Proof.} See Appendix~\ref{app:linear-proofs}. \hfill$\square$

\paragraph{Interpretation.}
The sufficiency count cannot be read as a universal property of $|\Dset|$ alone. Below the stated boundary,
distinct stable target systems can agree on the available responses. The particular collision depends on the
observation regime, support, and completion conditions.

\begin{remark}[Numerical evaluation of $\mathcal O_\varepsilon$]\label{rem:tolerance}
Near instability, $\|\Sigma\|$ can be large enough that an absolute tolerance on
$\|\Sigma'-\Sigma\|$ falls below the corresponding floating-point scale. Numerical comparisons must therefore
use relative error. The angle should likewise be selected by tracing the connected component of the identity,
rather than by a fixed global grid, because the admissible component can shrink as $\rho(B)$ approaches one.
\end{remark}
\begin{corollary}[Query-specific intervention savings]\label{cor:cheap}
In the ambient active-probe branch with $\det(C_{FF})\ne0$, probing all $|\Dset|$ discrepancy rows identifies
the target.
In the normalized complete source-block $SO(m)$ branch, $m-1$ independent signed labelled anchors remove the
connected rotation. Outside these proved regimes there is no universal numerical formula: known support,
singular strata, alignments, and disconnected branches are summarized by the complete-fibre minimum
$K_{\min}^{\rm ID}$. For a single query the cost refines to the complete-fibre block cover of
Section~\ref{sec:vmin}; it can be smaller when every remaining direction is query-irrelevant. A bare tangent
rank gives only the first-order lower bound. \hfill$\square$
\end{corollary}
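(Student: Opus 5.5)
The corollary splits into three claims, and I would prove each by a direct appeal to an earlier result. I would not develop any new machinery.

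\emph{First claim: the ambient active-probe branch.} Take the branch with $\det(C_{FF})\neq0$. Theorem~\ref{thm:suff} states that single-target probes on every node of $\Dset$ pin the labelled rows $P^\tau_{\Dset,\cdot}$. The rows indexed by $F=\Dset^c$ are already fixed by invariance and alignment. The remaining free-row system $C_{FF}P_F=E_F-C_{F\Dset}P_\Dset$ is then uniquely solvable. So $P^\tau$ is determined, and hence $B^\tau=I-(P^\tau)^{-1}$ is determined. Counting probes gives exactly $|\Dset|$ interventions. I only need to check that the labelling and sensor hypotheses of Theorem~\ref{thm:suff} are the ones in force in this branch.

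\emph{Second claim: the normalized complete source-block branch.} I would quote the first part of Theorem~\ref{thm:necessity}. Under its normalized equivariance and local-faithfulness assumptions, equality of anchor laws is equivalent to fixing the signed labelled frame. The stabilizer in $SO(m)$ of $k$ linearly independent labelled vectors is $SO(m-k)$. Taking $k=m-1$ gives the trivial group, so the connected rotational ambiguity is removed. I would state explicitly that this removes only the connected component. Disconnected and transverse branches stay excluded by hypothesis, as the theorem itself says.

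\emph{Third claim: everything outside these regimes.} Here the statement is definitional and cautionary. Corollary~\ref{cor:support-aware-counts} and Lemma~\ref{lem:cff-fibre} show that on singular or support-restricted strata the fibre dimension $qd-r_G$ depends on $q$, on $r_G$, and on the support. No single function of $|\Dset|$ can therefore serve as a formula. I would define $K_{\min}^{\rm ID}$ as the smallest design for which every stratum and component of the complete legal fibre is a singleton.

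For a single query, I would apply Proposition~\ref{prop:pointwise-transport}: the query only has to be constant on the fibre, not the fibre a singleton. Whenever every remaining fibre direction leaves the query unchanged, constancy holds without full identification, so the cost can be smaller. The block-cover quantity of Section~\ref{sec:vmin} formalizes this. To show that a bare tangent rank gives only a lower bound, I would argue as follows. First-order constancy is necessary for constancy on the fibre, but it is not sufficient. The $y+x^2$ counterexample cited in Theorem~\ref{thm:sound} shows it can fail.

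The main obstacle is the refinement claim. Section~\ref{sec:vmin} comes later, so I would phrase that step conditionally on the block-cover definition given there. I would also keep its logic to the inclusion ``query-constant on fibre $\supseteq$ fibre singleton'', which yields $\le$ between the two minima.
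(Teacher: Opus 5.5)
Your proposal is correct and follows the paper's own justification: Theorem~\ref{thm:suff} for the ambient $|\mathfrak{D}|$-probe count, and the $SO(m-k)$ stabilizer of Theorem~\ref{thm:necessity} and Proposition~\ref{prop:align} for the $m-1$ anchors. It likewise uses Lemma~\ref{lem:cff-fibre} and Corollary~\ref{cor:support-aware-counts} for regime dependence, and the chain $V_{\min}^{\rm FO}\le V_{\min}^{\rm qry}\le K_{\min}^{\rm ID}$ of Theorem~\ref{thm:vmin} for both the query refinement and the first-order lower bound.

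The forward reference is harmless because Theorem~\ref{thm:vmin} does not use this corollary, so you can cite it outright rather than conditionally. Read ``query-irrelevant'' as constancy on the whole complete fibre, not tangent-direction irrelevance, which your own $y+x^2$ remark shows is insufficient.
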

\begin{remark}[Design-relativity --- collapse under richer target designs]\label{rem:collapse}
The count is stated for the two regimes above. Two enrichments collapse it: \textbf{($\alpha$) shared \& known
$H$ with full active probing} ($M_0^\tau$ measured on all nodes): then $M_0^\tau(I-B^\tau)=H$ is a full-rank
linear system and $B^\tau$ is identified at $\Kt=0$ without using invariance (which becomes cross-domain
\emph{testable} --- a feature); \textbf{($\beta$) shared $H$ with a per-component LiNG target:} per-component
non-Gaussian ICA pins $M^\tau$ only up to the LiNG \emph{signed-permutation equivalence class}
\citep{Lacerda2008cyclicICA} --- for cyclic models uniqueness is not generic and needs the additional LiNG
stability/support conditions --- and, \emph{given that resolution}, $B^\tau$ follows at $\Kt=0$. A rotation that
is invisible to second moments can create fourth-order dependence under non-Gaussian sources; the algebraic
recovery begins only after the mixing has been resolved. Every $\Kt$ claim therefore states its target
observation model.
\end{remark}

\section{Why acyclic transport can fail under feedback}\label{sec:semantics}
\begin{theorem}[Acyclification can give the wrong transport conclusion]\label{thm:acyc}
There is a selection diagram and query for which the acyclic (intended-DAG) Pearl--Bareinboim reading concludes
that the effect transports and computes effect $0$, while equilibrium semantics give domain-dependent, nonzero
answers. For the two-agent cycle $q_i=\theta_i-b^\omega q_j+\varepsilon_i$ with $b\in\Dset$, take
$b^\pi=0.5$ and $b^\tau=0.3$. The DAG reading
$\theta_i\to q_i$ $d$-separates $\Sw$ from the query and predicts a null/invariant effect, whereas the
equilibrium do-effect is $-b/(1-b^2)$: source $-2/3\approx-0.667$ versus target
$-30/91\approx-0.330$. Acyclifying a cyclic selection
diagram is unsound. \emph{Proof:} direct computation; the DAG omits the feedback walk carrying both the effect
and the $\Sw$-dependence (Lemma~\ref{lem:auto} with $\Dset\cap\An(Y)\neq\varnothing$). \hfill$\square$
\end{theorem}
This is why the transport theory cannot be a relabeling of the acyclic theory
\citep{PearlBareinboim2014external}: the equilibrium solution map carries the domain dependence, and mechanism-level switch
semantics (Definition~\ref{def:diag}) are the correct lift.

\paragraph{Interpretation.}
Acyclic reduction can remove the feedback walk that transmits both an intervention and a domain change. A
separation conclusion on that reduced graph therefore need not describe the equilibrium counterfactual.

\section{Limits of validation from finite experiments}\label{sec:untest}
Under the witness-admissibility conditions below, validation and transport share the same indistinguishability
obstruction. We state that obstruction once.
\begin{theorem}[No unconditional cross-world validation from a finite design]\label{thm:untest}
Fix any finite design $\mathcal D$, any query surgery $I$, and any $\mathcal S^+\in\Tp$. Suppose $\mathcal S^+$
admits a node $r$ that is \emph{witness-admissible} for $(\mathcal D,I)$, i.e.
\begin{enumerate}[label=(\alph*),leftmargin=1.6em,itemsep=0pt,topsep=1pt]
\item $\mathcal D$ never probes $r$'s mechanism, and $I$ does not intervene on $r$ (a surgery on $r$ excises the
      reshuffled mechanism, so $u_r$ never enters);
\item $r$ has a parent $\pi$ that is \emph{not} a descendant of $r$ and whose value is moved by $I$ with positive
      probability. Because $\pi$ is a non-descendant it necessarily lies in a strictly earlier component of the
      condensation, so $\pi$ is exogenous to $U_r$ even when $r$ itself sits inside a cycle; this is what keeps the
      reshuffled twin a well-posed SCM (the equilibrium system $I-B$ is untouched). If no such $\pi$ exists ---
      in particular if $r$ is a root --- then only \emph{parent-independent} reshuffles are available at $r$, and
      those are measure-preserving involutions, i.e.\ SCM \emph{isomorphisms} whose counterfactual gap is
      identically zero; the conclusion below then genuinely fails.
\item $r$ is a post-surgery ancestor of the readout with non-vanishing coefficient:
      \[
      e_Y^\top(I-B^{I})^{-1}e_r\neq0.
      \]
\end{enumerate}
Then there is an $\mathcal S^-\notin\Tp$, obtained from $\mathcal S^+$ by a \emph{parent-dependent} non-monotone
measure-preserving reshuffle $T_r(\pi,\cdot)$ at $r$ --- a reflection of a band of $U_r$ whose \emph{bounds are
driven by $\pi$}, taken in rank space, $T_r=F_r^{-1}\!\circ\rho(\pi,\cdot)\circ F_r$, so that it is exactly
measure-preserving for \emph{every} continuous strictly-increasing $F_r$ --- agreeing with $\mathcal S^+$ on
\emph{every law of every regime of $\mathcal D$}, while
\[|\chi(\mathcal S^+)-\chi(\mathcal S^-)|\;\ge\;\Delta>0\quad\text{on a positive-measure factual set,}\]
namely the symmetric difference of the factual- and counterfactual-parent bands. The bound map $\pi\mapsto
b_r(\pi)$ is ours to choose and need only \emph{separate} the factual from the counterfactual parent law on a
set of positive probability, and satisfy $b_r(\pi)>a$ there (otherwise the band is empty and $\Delta\equiv0$); any strictly monotone $b_r$ bounded below by $a$ does --- the construction uses $b_r(\pi)=a+\mathrm{softplus}(\pi)>a$. (A bound depending on $|\pi|$ alone does
\emph{not} suffice in general: a sign-flipping $I$ with $\pi^{\mathrm{cf}}=-\pi^{F}$ moves $\pi$ with
probability one yet leaves the bands coincident, and the gap collapses to zero.)
\emph{In the linear class with noise at $r$ symmetric about the origin} --- so that the rank-space reflection is
the annulus negation $u\mapsto-u$ on $a\le|u|\le b_r(\pi)$ --- the gap has the closed form
$\Delta=2\,\big|e_Y^\top(I-B^{I})^{-1}e_r\big|\,|u_r|$. Without symmetry the reflection is still exactly
measure-preserving in rank space and $\Delta>0$ persists, but it is no longer $2|\cdot||u_r|$.
Hence any test accepting the correct twin under
$\mathcal S^+$ with probability $\ge1-\alpha$ accepts the \emph{same} (now wrong-by-$\Delta$) twin under
$\mathcal S^-$: \textbf{cross-world claims are necessarily conditional on class membership.} Interventional-layer
claims remain unconditionally testable (the \citealp{ForreMooij2019calculus} baseline).
\end{theorem}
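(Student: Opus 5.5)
The plan is to build $\mathcal S^-$ explicitly, show it is observationally identical to $\mathcal S^+$ on every regime of $\mathcal D$, and then compute the counterfactual gap.

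\textbf{Construction.} Replace $f_r(v_{\mathrm{pa}(r)},u_r)$ by $f_r^-(v_{\mathrm{pa}(r)},u_r)=f_r(v_{\mathrm{pa}(r)},T_r(v_\pi,u_r))$. Here $T_r(\pi,\cdot)=F_r^{-1}\circ\rho(\pi,\cdot)\circ F_r$, and $\rho(\pi,\cdot)$ reflects the rank band corresponding to $a\le|u|\le b_r(\pi)$, with $b_r(\pi)=a+\mathrm{softplus}(\pi)$. For each fixed $\pi$, $\rho(\pi,\cdot)$ is a measurable Lebesgue-preserving involution of $(0,1)$. Hence $T_r(\pi,\cdot)$ preserves $P_{U_r}$ for every continuous strictly increasing $F_r$. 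Because $\pi$ is a non-descendant of $r$ (condition (b)), $\pi$ is solved in a strictly earlier SCC of the condensation. So $U_r$ is independent of $\pi$, and $T_r(V_\pi,U_r)$ has law $P_{U_r}$ and is independent of $V_\pi$ and of all other noises. Moreover the SCC equations of $r$'s component are unchanged as functions of the transformed noise, so well-posedness and $\Sel$ carry over. In the linear case $I-B$ is untouched. $\mathcal S^-\notin\Tp$ because $f_r^-(v,\cdot)$ is non-monotone on the band, which violates (T1).

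\textbf{Law agreement.} Fix a regime of $\mathcal D$. By (a) that regime leaves $r$'s mechanism in place, with its parents possibly intervened. I would condition on the values of the earlier components, including $\pi$, and use the independence above. Then $(U_{-r},T_r(V_\pi,U_r))$ has the same joint law as $(U_{-r},U_r)$. The regime's solution map is identical in the two models, so pushing forward gives equal laws. If a regime intervenes on $\pi$, the same argument holds with $\pi$ now a constant. This covers every law of every regime of $\mathcal D$, including the passive one. The consequence for tests is then immediate: any test is a function of data drawn from these laws, so its acceptance probability is identical under $\mathcal S^+$ and $\mathcal S^-$.

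\textbf{Counterfactual gap.} Abduction under $\mathcal S^-$ recovers $\tilde u_r=T_r(\pi^F,u_r)$ in place of $u_r$ at the factual parent value $\pi^F$. Under $I$, which by (a) does not intervene on $r$, the counterfactual mechanism re-applies $T_r(\pi^{\mathrm{cf}},\cdot)$. The effective counterfactual noise at $r$ is therefore $T_r(\pi^{\mathrm{cf}},T_r(\pi^F,u_r))$. This equals $u_r$ off the symmetric difference of the two bands and equals the reflected value on it. The other noises coincide.

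By (b), $\pi^{\mathrm{cf}}\ne\pi^F$ with positive probability. Since $b_r$ is strictly monotone and exceeds $a$, the bands then differ, and the symmetric difference has positive measure. In the linear symmetric case the effective noise is $-u_r$ there, so the readout differs by $2\,|e_Y^\top(I-B^I)^{-1}e_r|\,|u_r|$, which is positive by (c) and $|u_r|\ge a>0$. In the general case the reflection moves $u_r$ to a different value on that set. Strict monotonicity of $f_r$ combined with nonzero propagation to $Y$ gives a positive gap. Passing from (c) to this step is the main obstacle: in the nonlinear case I would state (c) as a nonvanishing local sensitivity and argue via a positive-measure subset on which the change propagates. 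The $|\pi|$ remark is a direct check that a sign flip makes the two bands coincide.
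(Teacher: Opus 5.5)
Your proposal is correct and follows the paper's own route. The parent-dependent rank-space reflection preserves each conditional mechanism law, because the non-descendant parent is solved in an earlier component and is independent of $U_r$, so every regime law of $\mathcal D$ is unchanged. Meanwhile the two cross-world applications $T_r(\pi^{\mathrm{cf}},T_r(\pi^F,\cdot))$ fail to cancel on the symmetric difference of the bands, and the nonzero resolvent coefficient turns this into the query gap.

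The nonlinear step you flag as the main obstacle is not needed. Condition (c) is already a linear resolvent condition, and the paper's asymmetric-noise case is just the nonzero noise displacement on that set multiplied by $e_Y^\top(I-B^{I})^{-1}e_r$.
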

\emph{Proof.} See Appendix~\ref{app:validation-proofs}. \hfill$\square$
\paragraph{Interpretation.}
A finite experiment collection cannot by itself rule out every change in the cross-world coupling. The theorem
constructs a system outside the monotone class that matches every observed experimental law but changes the
counterfactual. Validation is therefore conditional on structural class membership.
\noindent Conditions (a)--(c) identify the query regimes in which the impossibility construction applies; they are
not vacuous restrictions but the exact statement of \emph{where} the impossibility bites. A parentless (root)
node, for instance, admits only parent-independent reshuffles --- which are SCM isomorphisms with gap $0$
(below) --- so a root is never witness-admissible.
\noindent\emph{Why parent-dependence is essential.} A parent-\emph{independent} annulus reflection
$T(u)=-u$ on $|u|\in[a,b]$ is a measure-preserving \emph{involution}: $(\mathcal S,\mathrm{id})$ and
$(\mathcal S',T)$ are then the same structural model up to a noise-gauge relabelling (an SCM isomorphism), the
cross-world reshuffle cancels, and the intrinsic counterfactual gap is identically $0$. It is the dependence of
the reshuffle on a parent that moves between the factual and counterfactual worlds which breaks cross-world rank
invariance --- and this is exactly why (T1) is required. The node must be
non-descendant-parented: inside a feedback cycle the parent seen by $T_r$ is the equilibrium value, itself a
function of $U_r$, and the twin exits the well-posed class (existence/uniqueness of the equilibrium fail on a
positive-measure set).

\begin{corollary}[Cross-domain validation is conditional]\label{cor:untest-transport}
Apply Theorem~\ref{thm:untest} to the switch-augmented model of Definition~\ref{def:diag}, with $\mathcal D$ =
any finite \emph{source} design together with any finite \emph{target} design, the reshuffle placed on the
$\Sw_r$-indexed mechanism of $r$, and $r$ \textbf{witness-admissible \emph{at target parameters}} --- conditions
(a)--(c) of Theorem~\ref{thm:untest} read with $B^{\tau,I}$ and the target factual law, so in particular
$e_Y^\top(I-B^{\tau,I})^{-1}e_r\neq0$ and the target law is non-degenerate on $r$'s bound-driving parent. Then
\emph{all} source laws and \emph{all} target-design laws coincide across the pair, while the value reported by a procedure
would \emph{transport} is wrong under $\mathcal S^-$ by a gap $\Delta>0$ on positive measure; in the
\emph{symmetric-noise} specialization this closes to $\Delta=2|e_Y^\top(I-B^{\tau,I})^{-1}e_r|\,|u_r|$, whereas
under asymmetric noise $\Delta>0$ still holds but with a different value. Hence any transport procedure sound \emph{without} target-class assumptions is trivial: cross-world
transport conclusions are necessarily conditional on target $\Tp$-membership, while interventional transport
claims remain testable. \hfill$\square$
\end{corollary}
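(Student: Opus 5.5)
We reduce Corollary~\ref{cor:untest-transport} to Theorem~\ref{thm:untest} applied once to the switch-augmented model of Definition~\ref{def:diag}. We take $\Sw$ as a single global switch node, so only the two pure profiles occur and per-regime unique solvability is inherited from each domain. A source regime is then the augmented model at $\Sw=\pi$, and a target regime is the augmented model at $\Sw=\tau$. The combined finite design $\mathcal D=\mathcal D^\pi\cup\mathcal D^\tau$ becomes a single finite design on the augmented ioSCM.

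\textbf{Step 1: placement of the reshuffle.} We apply the reshuffle only to the target branch of $r$'s mechanism, that is, to the $\Sw_r=\tau$ coordinate of its switch-indexed mechanism. The construction of Theorem~\ref{thm:untest} uses the bound map $b_r(\pi)=a+\mathrm{softplus}(\pi)$ and the rank-space reflection $T_r=F_r^{-1}\circ\rho(\pi,\cdot)\circ F_r$ with $F_r$ the target noise cdf.

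If $r\notin\Dset$, the two branches share a mechanism. In that case we reshuffle both branches with the same $T_r$. Measure preservation then holds in each regime separately, and the source laws are unchanged as well.

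In either case the reshuffle at a fixed switch value is exactly measure-preserving for every value of the non-descendant parent. That parent lies in an earlier condensation component, so the equilibrium system $I-B^{\tau}$ is untouched and well-posedness persists.

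\textbf{Step 2: agreement of all laws.} For each regime in $\mathcal D^\pi\cup\mathcal D^\tau$, condition (a) at target parameters says $r$'s mechanism is never probed. Hence the joint law of $(V_{\mathrm{pa}(r)},U_r)$ maps to the same joint law of $(V_{\mathrm{pa}(r)},f_r(\cdot,T_r(\cdot)))$. Propagating through the shared $\Sel$ and the downstream solve gives equality of every source and target-design law. This is exactly the indistinguishability step of Theorem~\ref{thm:untest}. Source laws are either literally untouched (reshuffle on the $\tau$ branch only) or preserved by the same argument.

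\textbf{Step 3: the counterfactual gap.} We invoke the gap conclusion of Theorem~\ref{thm:untest} at target parameters. Condition (c) is read as $e_Y^\top(I-B^{\tau,I})^{-1}e_r\ne0$. Condition (b) requires that the target factual law be non-degenerate on the bound-driving parent and that $I$ move it with positive probability, so the factual and counterfactual bands differ on a set of positive measure. Abduction in $\mathcal S^+$ recovers $u_r$, whereas $\mathcal S^-$ re-solves with the reflected noise on that band. This yields $\Delta>0$, and under symmetric noise $\Delta=2|e_Y^\top(I-B^{\tau,I})^{-1}e_r|\,|u_r|$.

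\textbf{Step 4: triviality of unconditional transport.} Since the two members agree on all available laws, any procedure is a functional of identical inputs. It therefore reports the same transported value under both members and is wrong by at least $\Delta$ under one of them. The interventional-layer claim is unaffected, because interventional laws in the design are equal and are testable as in the Forré--Mooij baseline.

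The main obstacle is Step 1, the mixed-profile and well-posedness bookkeeping. We must verify that reshuffling only the $\tau$ branch keeps the augmented model in the ioSCM class with unique solvability, and that the non-descendant parent in the target graph is still exogenous to $U_r$ after the switch is added. We will handle this by using the single global switch, so no mixed profiles arise, and by checking that $\Sw$ has no parents. With both in place, the earlier-component argument of condition (b) carries over verbatim.
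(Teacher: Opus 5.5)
Your proposal is correct and follows the paper's own argument. The reshuffle of Theorem~\ref{thm:untest} is placed on the target ($\Sw=\tau$) branch of $r$'s mechanism, so the source laws are untouched, and the theorem's measure-preservation step keeps every target-design law. That step is cleaner if you say the effective noise $T_r(V_\pi,U_r)$ is independent of $U_{-r}$ with law $P_{U_r}$, because the band-driving parent depends only on $U_{-r}$; the pair $(V_{\mathrm{pa}(r)},U_r)$ is the wrong object inside a cycle. The gap is then read off $e_Y^\top(I-B^{\tau,I})^{-1}e_r$ and the target factual law.

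One small correction in Step~4: a single reported value is only guaranteed to miss one of the two truths by $\Delta/2$. The full $\Delta$ is incurred under $\mathcal S^-$ by the procedure that reports the $\mathcal S^+$-correct transported value, which is how the statement reads it.
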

\begin{remark}[Why the target-side clause is needed]\label{rem:targetgeneric}
Witness-admissibility must be checked at \emph{target} parameters, and this is a genuinely new requirement rather
than a restatement of the within-domain one: the discrepancy rows $\Dset$ are exactly the parameters that need
\emph{not} be generic relative to the source. If a target discrepancy sets $B^{\tau}_{\cdot r}$ so that
$e_Y^\top(I-B^{\tau,I})^{-1}e_r=0$, the transported gap is $0$ even though the source-side witness is perfectly
admissible --- the impossibility simply does not bite for that query in that target. The witness may be placed at
an unprobed discrepancy node whenever the declared design leaves one available.
\end{remark}
\noindent Lemma~\ref{lem:qa} uses monotonicity as the sufficient full-factual boundary in this paper, while
Theorem~\ref{thm:untest} shows it is minimax-necessary under that theorem's witness-admissibility conditions,
within a domain and across domains alike. This is not a necessity claim for every individual query or system.
For conditional
instruments the exact null boundary is the tensor-product identity for the declared finite block partition; for
partial-factual queries it is a singleton query-pushforward class over the complete legal fibre. Full
conditional-law/map identification is sufficient but not necessary. Loop gain
is only a local diagnostic. Together, the positive and negative results locate the boundary between
identification and non-identification.

\section{Partial identification and alignment}\label{sec:bounds}
When point identification fails, the appropriate object is the complete set of query values compatible with
the retained laws and structural restrictions. We first describe the ambiguity created by rotations,
completions, and cross-world couplings, and then examine how alignment information can reduce it.
\begin{theorem}[Identified sets under structural ambiguity]\label{thm:gauge}
For an un-instrumented full-rotational source block, let $\mathcal F_\varepsilon$ be the \emph{complete}
rotation--completion fibre consistent with the design, declared support/selection stratum, denominator guards,
and $\rho(B)\le1-\varepsilon$. Under the second-moment observation model the identified set is exactly
$\chi(\mathcal F_\varepsilon)$; its rotation and transverse-completion projections are the floors
$\Wg^{\mathrm{rot}}$ and $\Wg^{\mathrm{comp}}$. In the finite-dimensional LQ branch, if the complete fibre is
encoded by finitely many semialgebraic support/selection strata and $\chi$ is a scalar semialgebraic function on
the denominator-valid locus, then $\chi(\mathcal F_\varepsilon)$ is a finite union of points and intervals whose
endpoints may be open or closed, finite or infinite. Every included value is attained; a finite extremum is
attained exactly when the corresponding endpoint belongs to the image. A closed, bounded, endpoint-attained
identified set is guaranteed when the \emph{full} fibre is compact and $\chi$ extends continuously across all
stratum closures with every query denominator bounded away from zero. These are sufficient conditions for that
conclusion, not a necessity claim. A spectral margin alone does not guarantee finiteness or attainment: the
population pinned-response algebra admits a noncompact completion-escape direction, but that illustration is
not a complete-law noncompactness proof unless its members are shown to remain in the complete fibre.

On the exact two-node normalized-rotation branch, rationalized by $t=\tan\theta$ and restricted only by strict
stability $\rho(B)<1$ (not by a fixed positive margin $\varepsilon$), the connected component of the identity has
scalar query image equal to the open interval
\[
 \left(\frac12-\frac{5\sqrt{170}}{68},
       \frac12+\frac{5\sqrt{170}}{68}\right).
\]
The endpoints are excluded because stability is strict. Per-component non-Gaussian source assumptions plus a valid acquisition/alignment result reduce the
\emph{rotation} ambiguity to signed permutations; they do not collapse transverse completions.
\end{theorem}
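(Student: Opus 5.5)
The argument splits into three parts. Part (i) is the identified-set equality. Part (ii) is the semialgebraic structure result. Part (iii) is the explicit two-node interval.

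\textbf{Part (i).} This is essentially definitional, in the style of Proposition~\ref{prop:pointwise-transport}. Under the second-moment observation model, the retained evidence consists of the design's means and covariances. By construction, the complete rotation--completion fibre $\mathcal F_\varepsilon$ is the set of all admissible parameters (within the declared strata, denominator guards and spectral margin) reproducing that evidence. A value $x$ is compatible with the evidence iff $x=\chi(\theta)$ for some $\theta\in\mathcal F_\varepsilon$, so the identified set is $\chi(\mathcal F_\varepsilon)$. The rotation and completion floors $\Wg^{\mathrm{rot}},\Wg^{\mathrm{comp}}$ then arise as the diameters of $\chi$ restricted to the two sub-orbits. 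These are the $SO(m)$ orbit of Proposition~\ref{prop:align} at a fixed completion, and the affine completion fibre $Y_0+NT$ of Lemma~\ref{lem:cff-fibre} at fixed rotation.

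\textbf{Part (ii).} I work in the finite-dimensional LQ branch. Each stratum is semialgebraic, and so is each of the following:
\begin{itemize}[leftmargin=1.6em,itemsep=0pt,topsep=1pt]
\item the evidence equalities, which are polynomial in $(B,c,\Omega,H)$ after clearing $\det(I-B)$;
\item the stability condition $\rho(B)\le1-\varepsilon$, via the semialgebraic description of the spectral radius;
\item the denominator guards.
\end{itemize}
Hence $\mathcal F_\varepsilon$ is a finite union of semialgebraic sets. By Tarski--Seidenberg, the image of a semialgebraic set under the semialgebraic scalar function $\chi$ is a semialgebraic subset of $\mathbb R$, that is, a finite union of points and intervals. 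The statement that every included value is attained is trivial, since the set is an image. A finite extremum is attained exactly when it is a member of the image.

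For the compact clause, assume the full fibre is compact and $\chi$ extends continuously across stratum closures with denominators bounded away from zero. Then the image is compact: a finite union of closed bounded intervals and points, with attained endpoints. I would stress that a spectral margin alone does not give compactness. The completion direction $T$ in Lemma~\ref{lem:cff-fibre} is unbounded, and it is only cut by the open conditions. So the noncompactness illustration is flagged as an illustration only, exactly as the statement does.

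\textbf{Part (iii).} I write the normalized rotation family explicitly, as in the T4 witness of Theorem~\ref{thm:nec}: $A_\theta=A_0Q_\theta$, $W_\theta=A_\theta^{-1}$, $D_\theta$ its diagonal, and $B_\theta=I-D_\theta^{-1}W_\theta$. With $t=\tan\theta$, the off-diagonal entries of $B_\theta$ and the query become rational functions of $t$. The stability condition $\rho(B_\theta)<1$ for a zero-diagonal $2\times2$ matrix reads $|b_{12}b_{21}|<1$, which is a rational inequality in $t$. I would then locate the connected component containing $t=0$; it is an open interval bounded by roots where $|b_{12}b_{21}|\to1$, or possibly by poles of $D_\theta^{-1}$. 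Next I compute the query image on that component by monotonicity, or by solving $d\chi/dt=0$ and checking boundary limits.

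I expect this explicit computation to be the main obstacle. The steps are:
\begin{itemize}[leftmargin=1.6em,itemsep=0pt,topsep=1pt]
\item verify that the supremum and infimum are boundary limits rather than interior critical values, so that the endpoints $\tfrac12\pm\tfrac{5\sqrt{170}}{68}$ are approached but excluded by strict stability;
\item confirm that no denominator pole lies inside the component.
\end{itemize}
I would carry this out symbolically. I would then check it against the T4 value $1.3916649596$ at $\theta=0.7$, which must lie inside the interval.

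\textbf{Final remark.} Under per-component non-Gaussian sources, identifiability of the mixing up to signed permutation follows from the LiNG/ICA result \citep{Lacerda2008cyclicICA} (Remark~\ref{rem:collapse}). That result removes only the rotation. The completion fibre $Y_0+NT$ is defined by pinned-row equations that do not involve the source distribution, so transverse completions remain.
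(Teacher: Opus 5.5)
Your proposal follows the paper's own proof. The identified set is the fibre image by definition, Tarski--Seidenberg gives the finite union of points and intervals, compactness plus continuity gives attained endpoints, and the two-node branch is an explicit rational computation in $t=\tan\theta$.

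The computation you defer goes through as you plan. The normalization gives $B_{12}(t)=\frac{2(5t-3)}{5(t-2)}$ and $B_{21}(t)=\frac{5(2t+1)}{2(3t+5)}$, and abduction gives $\chi(t)=B_{21}(t)+\tfrac12=\frac{13t+10}{2(3t+5)}$. The product $B_{12}B_{21}=\frac{10t^2-t-3}{3t^2-t-10}$ never equals $1$ and equals $-1$ exactly when $13t^2-2t-13=0$. Hence the identity component is $\bigl((1-\sqrt{170})/13,(1+\sqrt{170})/13\bigr)$, and the poles $t=2$ and $t=-5/3$ lie outside it. Since $\chi'(t)=35/\{2(3t+5)^2\}>0$, the endpoint limits $\tfrac12\pm\tfrac{5\sqrt{170}}{68}$ are approached but excluded.

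The one step that does not hold as written is your support for the noncompactness remark. The affine set $Y_0+NT$ need not stay unbounded after intersection with the stability condition just because that condition is open. The lemma's open conditions include $\rho(I-C)<1$ itself, and the margin $\rho(B)\le1-\varepsilon$ is in any case closed. Stability generally does cut affine completion lines. Take the nilpotent witness $B=VW^\top$ used for Lemma~\ref{lem:cff-fibre}, with $F=\{1,2\}$ and $N=(1,1)^\top$. There the diagonal-legal completions are $T=(T_1,T_2,0,0)$, and one computes $\rho(B)=\sqrt{|T_1+T_2|}$. So the margin bounds $T_1+T_2$, and only the direction $T_1=-T_2$, along which $B$ stays nilpotent, escapes.

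An escape direction therefore has to be exhibited with the spectral radius controlled along it. The paper does this with the nonaffine curve $B_t=\left(\begin{smallmatrix}0&t\\1/(4t)&0\end{smallmatrix}\right)$, $H_t=I-B_t$, $\Omega_t=I$. Along it $H_t(I-B_t)^{-1}=I$, $\rho(B_t)=1/2$ and $\det(I-B_t)=3/4$, while the labelled response $B_{12}=t$ is unbounded. The statement treats this only as a pinned-response illustration, so the gap does not affect the remaining claims.
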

\emph{Proof.} See Appendix~\ref{app:alignment-proofs}. \hfill$\square$
\paragraph{Interpretation.}
The sharp identified set is the image of the complete legal fibre, not merely the image of a local tangent or a
single group orbit. Compactness and continuity guarantee endpoint attainment; a spectral stability margin alone
does not.
\begin{proposition}[Why the selection condition matters]\label{prop:sel}
On the fixed-support rational rotation branch above, the stable query image is exactly the displayed open
interval. Removing the strict stability restriction allows the readout to diverge as the diagonal normalization
approaches a regular zero. Thus T3 is load-bearing for this branch. For a complete rotation--completion fibre,
stability alone does not imply compactness, coercivity, or a nonzero query denominator. The separate
completion-escape construction concerns the pinned-response algebra only and is not used as a complete-law
counterexample.
\end{proposition}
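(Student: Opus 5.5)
We prove Proposition~\ref{prop:sel} in three parts. First, stable query image is the displayed open interval. Second, the query diverges once strict stability is dropped. Third, stability alone gives no compactness, coercivity or nonzero denominator on a complete rotation--completion fibre.

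\textbf{First claim.} We reuse the two-node normalized-rotation computation behind Theorem~\ref{thm:gauge}. We take the truth $B_0=\left(\begin{smallmatrix}0&3/5\\1/2&0\end{smallmatrix}\right)$, set $A_\theta=A_0Q_\theta$, and normalize the diagonal of $W_\theta=A_\theta^{-1}$ as in the T4 clause of Theorem~\ref{thm:nec}. After the substitution $t=\tan\theta$, the entries of $B_t=I-D_t^{-1}W_t$ and the scalar readout are rational in $t$. The readout is taken at the factual state $(1,1/2)$ under $V_0=2$.

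On the connected component of the identity, the strict constraint $\rho(B_t)<1$ cuts out an open $t$-interval $(t_-,t_+)$. For two nodes, $\rho(B_t)^2=|B_{01}B_{10}|$, so its endpoints are regular roots of $B_{01}(t)B_{10}(t)=\pm1$. The query $\chi(t)$ is continuous and rational on $(t_-,t_+)$. Its image is therefore an interval whose endpoints are the limits at $t_\pm$, together with any interior critical values.

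I would compute $\chi'(t)=0$ and check that $\chi$ is monotone on the component. Then both one-sided limits are approached but never attained, and they evaluate to $\tfrac12\pm\tfrac{5\sqrt{170}}{68}$. Because stability is strict and $\chi$ is injective there, the endpoints are excluded. This reproduces the Theorem~\ref{thm:gauge} calculation, so in the proposition it can be cited rather than redone.

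\textbf{Second claim.} Without strict stability, $t$ can move past $t_\pm$ toward a zero $t^\ast$ of a diagonal normalizer $(W_t)_{ii}$. The readout carries the factor $D_t^{-1}$, so $\chi(t)$ has a pole at $t^\ast$. It suffices to check two facts:
\begin{itemize}
\item the zero is regular, $\tfrac{d}{dt}(W_t)_{ii}\ne0$;
\item the numerator of $\chi$ is nonzero at $t^\ast$.
\end{itemize}
Together these give $|\chi|\to\infty$. Stability, and hence T3's guarantee of a well-defined selected stable branch, is exactly what keeps the parameter away from this pole. This is the sense in which T3 is load-bearing for the branch.

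\textbf{Third claim.} This is a non-implication and needs only a remark. Stability is an open condition, so it cannot by itself yield compactness. A complete fibre may also contain transverse completion directions, of dimension $qd-r_G$ by Lemma~\ref{lem:cff-fibre}, on which the spectral radius stays below one while a query denominator tends to zero or the parameters escape. We state explicitly that the completion-escape direction is asserted only for the pinned-response algebra. It is not claimed to stay in the complete law fibre, so it is not used as a counterexample.

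\textbf{Main obstacle.} The hard step is the exact endpoint evaluation in the first claim. One must verify that the identity component is delimited by the stability boundary rather than by a normalizer pole, and that no interior extremum enlarges the image. I would clear denominators to get a polynomial in $t$ and compare its root locations with the stability roots. If needed, I would certify sign patterns by interval arithmetic.
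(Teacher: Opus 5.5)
Your proposal is correct and follows essentially the paper's own argument. It uses the same normalized rotation of $B_0$ rationalized by $t=\tan\theta$, a monotone readout on the stable identity component bounded by the roots of $B_{01}B_{10}=-1$, and divergence at a regular zero of the readout row's diagonal normalizer once strict stability is dropped. Like the paper, your remark-level treatment of the complete-fibre non-implications does not promote the pinned-response escape family to a complete-law counterexample.

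Explicitly, $\chi(t)=(13t+10)/\{2(3t+5)\}$ has $\chi'(t)=35/\{2(3t+5)^2\}>0$, and the stable component is $((1-\sqrt{170})/13,(1+\sqrt{170})/13)$. The relevant pole is $t^\ast=-5/3$; the other normalizer's zero at $t=2$ does not enter this readout, so you must pick $t^\ast=-5/3$ in your second claim. For the compactness non-implication, your first claim is a cleaner witness than ``stability is an open condition'': a compact stable fibre would have a compact continuous image, not an open interval.
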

\emph{Proof.} See Appendix~\ref{app:alignment-proofs}. \hfill$\square$
\begin{theorem}[Decomposition of cross-world ambiguity]\label{thm:wcw}
Without T1, the scalar partial-ID width is the query diameter over the joint ambiguity orbit,
$W=\operatorname{diam}\chi(\mathcal A)$; in the \emph{additive linear scalar-noise} subclass its two cross-world
parts are (no global nonlinear direct-sum claim): (a) for a node $i$ satisfying the witness-admissibility
conditions of Theorem~\ref{thm:untest} for the two worlds under comparison --- in particular, a
non-descendant parent moves across those worlds on a set of positive probability and the legal ambiguity class
contains the required parent-indexed law-preserving rearrangements --- the per-node reshuffle part
$\Wcw^{\mathrm{mono}}$ has sharp extended essential supremum
\[
 |e_Y^\top(I-B^I)^{-1}e_i|
 \{\operatorname*{ess\,sup}U_i-\operatorname*{ess\,inf}U_i\}.
\]
For an atomless noise law, measure-preserving parent-dependent rearrangements approach this value; a pointwise
maximum need not exist. Positive-mass endpoint attainment requires compatible endpoint atoms. Randomization may
split compatible atomic mass, but it cannot create endpoint atoms while preserving atomless marginals. A
parent-independent involution is an SCM isomorphism contributing $0$. Without witness-admissibility, the
essential-range display is only an upper bound for an enlarged coupling class, not the sharp width of the actual
legal fibre; the actual contribution can be smaller and is $0$ when only parent-independent relabellings are
admissible. Under witness-admissibility this ambiguity is invisible to every design law,
removed \emph{only} by assuming T1 (for nonlinear monotone mechanisms the same role is played by a
Lipschitz-modulus / transport-diameter bound); (b) the within-block rotation part $\Wcw^{\mathrm{rot}}$ is the
cross-world reading of the \emph{same} rotation floor $\Wg^{\mathrm{rot}}$ that Theorem~\ref{thm:gauge} measures
at the identification layer --- the identical floor, not an independent one: on an un-instrumented block
$\Wcw^{\mathrm{rot}}=\Wg^{\mathrm{rot}}$, and once the design identifies the rotation ($\Wg^{\mathrm{rot}}=0$ on
$\chi$'s block) $\Wcw^{\mathrm{rot}}=0$ too. The \emph{primary} object is the joint-orbit diameter
$W=\operatorname{diam}\chi(\mathcal A)$.
\end{theorem}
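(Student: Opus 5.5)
The plan is to prove the two parts separately, after one preliminary reduction in the additive linear scalar-noise subclass. Under surgery $I$ the counterfactual readout is affine in the abducted noise vector: $\chi=e_Y^\top(I-B^I)^{-1}(c^I+u^{\mathrm{cf}})$. Here $u^{\mathrm{cf}}$ is the noise the model assigns in the counterfactual world. With T1 this is the factual noise; without T1 it is its image under a law-preserving rearrangement. So a change at node $i$ alone moves $\chi$ by exactly $e_Y^\top(I-B^I)^{-1}e_i\,(u_i^{\mathrm{cf}}-u_i^{\mathrm{cf}\prime})$. The joint-orbit statement $W=\operatorname{diam}\chi(\mathcal A)$ needs no argument beyond the definition of the identified set as the image of the complete legal fibre (Theorem~\ref{thm:gauge}, Proposition~\ref{prop:pointwise-transport}). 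The content lies in the sharp per-node formula and in the identification of the rotation floors.

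For part (a), the upper bound is immediate. Any legal member moves $u_i^{\mathrm{cf}}$ only within the essential range of $U_i$, so the per-node gap is at most $|e_Y^\top(I-B^I)^{-1}e_i|(\operatorname*{ess\,sup}U_i-\operatorname*{ess\,inf}U_i)$. This holds for any enlarged coupling class, which also gives the stated "upper bound only" clause when witness-admissibility fails.

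For sharpness I would reuse the parent-dependent rank-space band reflection of Theorem~\ref{thm:untest}, but with bands widened. Take $\rho(\pi,\cdot)$ to be the reflection $r\mapsto 1-r$ on a rank band $[\delta,1-\delta]$ that is active when $\pi$ lies in a set $A$ and inactive otherwise. By condition (b) the non-descendant parent moves across worlds on positive probability, so some set $A$ separates the factual from the counterfactual parent law on positive mass. On that set the factual rank $r$ and the counterfactual rank $1-r$ differ. For $r$ near $\delta$, the gap $F_i^{-1}(1-\delta)-F_i^{-1}(\delta)$ tends to the essential range as $\delta\to0$. Each such map is measure-preserving for every $\pi$, so every design law is unchanged (as in the proof of Theorem~\ref{thm:untest}). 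This gives approach to the supremum.

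Non-attainment for atomless laws then follows because $F_i^{-1}(0^+)$ and $F_i^{-1}(1^-)$ carry zero mass. Endpoint attainment requires atoms at both ends that the rearrangement can swap, and randomized couplings can split existing atoms but cannot create new ones under atomless marginals. A parent-independent involution is an SCM isomorphism, as noted after Theorem~\ref{thm:untest}, so it contributes $0$. Invisibility of this ambiguity to all design laws follows from measure preservation, and T1 removes it by Lemma~\ref{lem:qa}.

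For part (b), I would show that restricting the joint orbit to the within-block rotation coordinate and reading $\chi$ cross-world gives the same image as Theorem~\ref{thm:gauge}'s rotation projection. With T1 in force on this coordinate, full-factual abduction is a deterministic function of the rotated structure, so $\chi$ along the rotation orbit is the same function evaluated in both layers. Hence $\Wcw^{\mathrm{rot}}=\Wg^{\mathrm{rot}}$ on an un-instrumented block, and both vanish once the design pins the rotation on $\chi$'s block.

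The main obstacle is sharpness in (a). One must check that the widened, parent-indexed reflections stay inside the legal ambiguity class, remain well posed (the parent is non-descendant, so $I-B$ is untouched), and keep a positive-measure set where the factual parent is in $A$ while the counterfactual parent is not. The sign-flip caveat of Theorem~\ref{thm:untest} means $A$ must be chosen asymmetric, for example a half-line of $\pi$.
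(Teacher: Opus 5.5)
Your proposal is correct and follows essentially the paper's route. The steps match: the joint-fibre diameter by definition; the essential-range upper bound, valid for any enlarged coupling class; the antitone rank reflection $r\mapsto 1-r$, applied through a parent-indexed switch, as the lower bound; the atom argument for non-attainment; and the identification of $W_{\mathrm{cw}}^{\mathrm{rot}}$ with the same orbit that defines $\Wg^{\mathrm{rot}}$. The only cosmetic difference is that the paper uses the full antitone coupling directly, whose essential supremum already equals $|c|(b-a)$, so your $\delta\to0$ band limit is not needed.
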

\emph{Proof.} See Appendix~\ref{app:alignment-proofs}. \hfill$\square$

\begin{corollary}[A uniform upper bound under recombination]\label{cor:wcw-upper}
Under \emph{recombination closure} --- $\mathcal A$ the product hull
$\mathcal A^{\mathrm{comp}}\!\times\!\mathcal A^{\mathrm{mono}}\!\times\!\mathcal A^{\mathrm{rot}}$, with
hull\,$\cap\,\Tp$ admissible and any two points joined by single-block moves \emph{each block moved at most
once}, and under a shared $\Sel$ (else the selection-branch gap below is added) --- the base-uniform
\emph{upper bound}
\[ W\;\le\;\bar W^{\mathrm{comp}}+\bar W^{\mathrm{mono}}+\bar W^{\mathrm{rot}} \]
holds, each $\bar W^\bullet$ being $\chi$'s diameter over the $\bullet$-orbit \emph{uniformly over the other
floors' admissible orbits} (corner-path telescoping; every floor worst-cased). \emph{Without} closure only the
joint diameter is guaranteed: for a non-product $\mathcal A$ the decomposition fails (the diagonal
$\mathcal A=\{(0,0),(1,1)\}$, $\chi=c+r$ --- every one-block slice a singleton, both floors $0$, yet $W=2$), and
revisiting a block gives only $W\le\sum_\bullet k_\bullet\bar W^\bullet$ (multiplicities $k_\bullet$; staircase
$\bar W{=}1{+}1$ but $W{=}4$).
\end{corollary}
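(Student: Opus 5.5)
The bound is a corner-path telescoping, and the two failure claims follow from explicit computation on the stated examples. Fix two points $a=(a_c,a_m,a_r)$ and $b=(b_c,b_m,b_r)$ of $\mathcal A$. By recombination closure, every point of the product hull is admissible, so the corner points $p_0=a$, $p_1=(b_c,a_m,a_r)$, $p_2=(b_c,b_m,a_r)$, $p_3=b$ all lie in $\mathcal A$. These points differ successively in exactly one block, and each block is moved exactly once. We take $\mathcal A\cap\Tp$ so that $\chi$ is defined at every corner.

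Under a shared $\Sel$, the triangle inequality gives
\[
|\chi(a)-\chi(b)|\le\sum_{k=1}^{3}|\chi(p_{k-1})-\chi(p_k)|.
\]
The $k$-th term is a difference of $\chi$ over the $\bullet$-orbit of the moved block, taken with the other two blocks frozen at some admissible values. By the definition of $\bar W^\bullet$ as the diameter uniformly over the other floors' admissible orbits, this term is at most $\bar W^\bullet$. Taking the supremum over $a,b$ then gives $W\le\bar W^{\mathrm{comp}}+\bar W^{\mathrm{mono}}+\bar W^{\mathrm{rot}}$.

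If the selection rules differ, the corner path may cross selection branches. In that case we insert one further step that changes only $\Sel$, and bound it by the selection-branch gap of Theorem~\ref{thm:untest}'s companion witness (the T3 example in Theorem~\ref{thm:nec}); this adds that gap to the bound. The step I expect to need the most care is checking that each intermediate corner is genuinely admissible and lies in a single one-block slice. This is exactly where closure is used. Without closure, $p_1$ and $p_2$ need not lie in $\mathcal A$, and the per-block floors say nothing about them.

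For the failure claims I will verify the examples directly. In the diagonal example, $\mathcal A=\{(0,0),(1,1)\}$ with $\chi=c+r$. Every slice with one coordinate fixed contains a single point, so both floors equal $0$. On the other hand, $W=|\chi(1,1)-\chi(0,0)|=2$, so the claimed bound $0$ fails.

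For revisited blocks, a path that moves block $\bullet$ a total of $k_\bullet$ times still telescopes. Each move costs at most $\bar W^\bullet$, which gives $W\le\sum_\bullet k_\bullet\bar W^\bullet$. For the staircase example I take a two-block set such as $\{(0,0),(1,0),(1,1),(2,1),(2,2)\}$ with $\chi=c+r$. Each slice has diameter $1$, so $\bar W=1+1$, but $W=\chi(2,2)-\chi(0,0)=4$. This exceeds the single-move sum and matches the multiplicity bound with $k_\bullet=2$.
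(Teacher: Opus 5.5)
Your proposal is correct and follows the paper's own argument. You join any two points of the product hull by the corner path that moves each block once, bound each step by the corresponding uniform diameter $\bar W^\bullet$ via the triangle inequality, and sum; the diagonal and staircase examples then show why product closure and the once-per-block path are needed. Your staircase $\{(0,0),(1,0),(1,1),(2,1),(2,2)\}$ works, since every slice there has diameter at most $1$.

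One small correction concerns the case where $\Sel$ is not shared. There the added term should be the diameter of $\chi$ over the admissible selection branches, treated as a fourth block moved once. The T3 construction in Theorem~\ref{thm:nec} only witnesses that this gap can be positive; it does not bound the gap, and it is not a companion of Theorem~\ref{thm:untest}.
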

\emph{Proof.} See Appendix~\ref{app:alignment-proofs}. \hfill$\square$

\begin{corollary}[Exact additivity in the decoupled subclass]\label{cor:wcw-additive}
The \emph{charged-once} additive form
$W=\Wg^{\mathrm{comp}}+\Wcw^{\mathrm{mono}}+\Wcw^{\mathrm{rot}}$ (rotation entered \emph{once} as the
identification-layer $\Wg^{\mathrm{rot}}$, mono at the truth) holds in the \emph{decoupled} regime --- floors on
transverse coordinate blocks with a query separable across them, in particular the off-block mono node below
(orbit-invariant coefficient), or when T4 pins the rotation/frame ($\Wg^{\mathrm{comp}}=\Wcw^{\mathrm{rot}}=0$).
Truth-anchored additivity \emph{fails} when the floors co-activate: a $Q$-dependent (especially sign-changing)
readout coefficient makes the joint width exceed the truth-anchored sum --- the bilinear $\chi=q\!\cdot\!u$ is
superadditive --- which is why the general bound worst-cases every floor over the others. The rotation orbit is charged \emph{once}
within the rotation group (closure under composition); the completion floor $\Wg^{\mathrm{comp}}$ acts on the
support/frame directions (the rotation and $\GL(d)$-frame gauge constructions are transverse;
completion-vs-rotation transversality in general $d$ is an additional stated condition, and the
completion$\times$mono cross-term is bounded conservatively), and the mono reshuffle acts on an \emph{off-block}
node $i\notin S$ whose reshuffle-driving parent lies in a strictly earlier SCC, so walks $i\!\to\!Y$ never enter
the source block $S$ and its coefficient $|e_Y^\top(I-B^I)^{-1}e_i|$ is orbit-invariant. It is collapsed to
signed permutations by per-component LiNG (only). A \emph{fourth} floor, the selection-branch gap (a multiplicity of admissible selections), is charged
separately and vanishes under a shared $\Sel$ (Corollary~\ref{cor:false} and Appendix~\ref{app:alignment-proofs}
track all four).
\end{corollary}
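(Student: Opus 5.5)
The plan is to prove the three parts of Corollary~\ref{cor:wcw-additive} separately: the additive identity in the decoupled regime, the failure of additivity under co-activation, and the bookkeeping claims.

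\textbf{Additivity in the decoupled regime.} Write the ambiguity orbit in coordinates $(c,q,u)$: a completion coordinate $c$, a rotation $q\in SO(m)$ on the source block $S$, and a mono-reshuffle coordinate $u$ at an off-block node $i\notin S$. I would show that, under the stated hypotheses, the query separates as
\[
 \chi(c,q,u)=\chi_0+g(c)+h(q)+k\,u,
\]
where $k=e_Y^\top(I-B^I)^{-1}e_i$ does not depend on $(c,q)$. This is the orbit-invariance of the mono coefficient. To obtain it, I would use the walk-sum representation of Lemma~\ref{lem:auto}. The reshuffle-driving parent of $i$ lies in a strictly earlier SCC, and walks from $i$ to $Y$ never enter $S$. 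Rotations and completions only modify entries of $B$ indexed by $S$ and by the free/frame rows, so every walk $i\to Y$ uses only untouched entries, and $k$ is constant along the orbit.

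Separability across the transverse blocks is exactly the declared decoupling premise. Given separability, the diameter of a sum of functions of independent coordinates over a product set equals the sum of the diameters. Each of these diameters is the corresponding floor:
\begin{itemize}
 \item $\operatorname{diam} g=\Wg^{\mathrm{comp}}$;
 \item $\operatorname{diam} h=\Wg^{\mathrm{rot}}=\Wcw^{\mathrm{rot}}$, by Theorem~\ref{thm:wcw}(b);
 \item $\operatorname{diam}(ku)=\Wcw^{\mathrm{mono}}$, using the sharp value from Theorem~\ref{thm:wcw}(a) evaluated at the truth.
\end{itemize}
Here I must check that the product structure really holds. I would invoke recombination closure as in Corollary~\ref{cor:wcw-upper}, together with the stated completion--rotation transversality.

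The T4 case is simpler. If T4 pins the rotation/frame, then $\Wg^{\mathrm{comp}}$ and $\Wcw^{\mathrm{rot}}$ vanish. Only the mono term remains, and the identity reduces to Theorem~\ref{thm:wcw}(a).

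\textbf{Failure under co-activation.} I would exhibit the bilinear example $\chi=q\cdot u$ on $q\in[-1,1]$, $u\in[-1,1]$, taking the truth at $q=1$, $u=0$. The two truth-anchored slices give:
\begin{itemize}
 \item rotation slice ($u=0$): $\chi\equiv0$, so the rotation floor is $0$;
 \item mono slice ($q=1$): diameter $2$.
\end{itemize}
The truth-anchored sum is therefore $2$, whereas the joint range of $q\cdot u$ over the product is $[-1,1]$, with diameter $2$. Anchoring at $q=1$, $u=0$ does not yet give strict superadditivity, so the example needs adjusting. I would instead anchor at $q=0$, $u=0$. Then both slices are $\{0\}$, so the truth-anchored sum is $0$, while the joint width is $2$. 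This gives strict superadditivity driven by the sign-changing coefficient $q$, which is the sense in which the general bound of Corollary~\ref{cor:wcw-upper} must worst-case each floor over the others.

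\textbf{Remaining claims.} Three short points are needed:
\begin{itemize}
 \item \emph{Rotation charged once.} $SO(m)$ is closed under composition, so repeated rotation moves collapse into a single element, and the multiplicity $k_{\mathrm{rot}}$ equals $1$.
 \item \emph{LiNG collapse.} The reduction to signed permutations is cited from Remark~\ref{rem:collapse}.
 \item \emph{Selection-branch gap.} This term is zero when all members share one $\Sel$, because each model then selects the same branch by (T3).
\end{itemize}

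\textbf{Main obstacle.} The hardest step is justifying that the orbit is a genuine product on which the query separates. Transversality of completion and rotation in general $d$ is only assumed, and I would state it as such rather than prove it. I also expect the completion$\times$mono cross-term to require a conservative bound instead of an exact cancellation.
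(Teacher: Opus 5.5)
Your proposal follows essentially the same route as the paper. On the decoupled subclass the query is a sum of functions on the transverse blocks of a product set, so supremum and infimum split and the width is the sum of the three floors, with rotation counted once via $\Wcw^{\mathrm{rot}}=\Wg^{\mathrm{rot}}$. The bilinear readout breaks that separability, and your re-anchoring at $q=u=0$ is a valid strict witness. A shared $\Sel$ removes the branch term.

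One caution concerns your walk-sum argument. It gives orbit-invariance of $e_Y^\top(I-B^I)^{-1}e_i$ only along the rotation orbit, which alters just $B_{SS}$ on a source block that walks from $i\notin S$ cannot enter. Completions acting on free or frame rows can change entries on walks $i\to Y$. That is why the paper takes separability as the decoupling premise and bounds the completion$\times$mono cross-term only conservatively. You should therefore drop the claim that $k$ is completion-invariant, as your own closing remark already anticipates.
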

\emph{Proof.} See Appendix~\ref{app:alignment-proofs}. \hfill$\square$
\paragraph{Interpretation.}
The width below the monotone boundary combines ambiguity in the within-world model with ambiguity in how factual
and counterfactual noises are coupled. The additive display is exact only in the stated decoupled subclass.
\begin{corollary}[Moment agreement can miss gauge ambiguity]\label{cor:false}
Any validator that is a functional of second-moment laws assigns the truth's pass/fail to every
$Q\in\mathcal O_\varepsilon$. Consequently, whenever the validator passes the truth, it also passes every legal
representative on which $\chi$ differs, thereby false-validating those representatives. Restriction to the in-class
orbit is essential: an unstable representative is not an ECG and therefore does not define a false validation.
\end{corollary}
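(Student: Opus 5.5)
The argument rests on one observation: second-moment laws are invariant along the in-class gauge orbit $\mathcal O_\varepsilon$. Once that is in place, the false-validation conclusion follows by composing maps. I would first fix what $\mathcal O_\varepsilon$ is, using the rotation--completion fibre of Theorem~\ref{thm:gauge} and Proposition~\ref{prop:align}. Take a truth $(B,\Omega,H)$ with response map $A=(I-B)^{-1}$. A representative $Q\in\mathcal O_\varepsilon$ has $A_Q=A\,Q$ after the diagonal normalization $D_Q$, exactly as in the T4 witness of Theorem~\ref{thm:nec}. It also satisfies $\rho(B_Q)\le1-\varepsilon$, so it remains a stable ECG.

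\textbf{Step 1: invariance of every second-moment law in the design.} For each regime $e$ I will show $\mathrm{Cov}^{(e)}_Q(X)=\mathrm{Cov}^{(e)}(X)$ and $E^{(e)}_Q X=E^{(e)}X$. For the passive regime this is the identity $A_QD_Q\Omega_QD_QA_Q^\top=AA^\top$ already used in Theorem~\ref{thm:nec}. The means are preserved because the intercept is carried along the same transformation. For staged regimes I will use the defining property of the orbit: it is the complete fibre of the second-moment evidence map for the design (the un-instrumented block condition of Theorem~\ref{thm:gauge}). Equality of every regime's mean and covariance is then true by membership in the fibre, not something to recompute regime by regime.

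\textbf{Step 2: factor the validator.} By hypothesis the validator is $V=\phi\circ M$, where $M$ maps a model to its tuple of second-moment laws over the regimes of $\mathcal D$. I will need $\phi$ to be deterministic or, more generally, to depend on the model only through $M$; a randomized test fits this if its acceptance distribution is a function of $M$ alone. Step 1 gives $M(Q)=M(\mathcal S)$, so $V(Q)=V(\mathcal S)$. In the randomized case the two acceptance probabilities are equal, so the truth's pass/fail is assigned to every $Q$.

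\textbf{Step 3: false validation and why restriction is needed.} Suppose $V$ passes the truth. By Theorem~\ref{thm:gauge}, whenever $\Wg^{\mathrm{rot}}>0$ there exist $Q\in\mathcal O_\varepsilon$ with $\chi(Q)\ne\chi(\mathcal S)$; the explicit $\theta=0.7$ pair of Theorem~\ref{thm:nec} is one instance. Each such $Q$ is in-class and passes, so it is a false validation. For the final sentence, I would exhibit an unstable point of the algebraic orbit that still matches second moments: take the rational rotation branch of Proposition~\ref{prop:sel} beyond the stability boundary. Such a point has no well-posed equilibrium semantics and lies outside $\Tp$, so it is not a candidate twin at all.

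The main obstacle is Step 1 for staged regimes. One must check that the orbit, as defined, really is constant on the staged second moments, and not only on the passive covariance. If $\mathcal O_\varepsilon$ is defined as the complete second-moment fibre this is immediate. If it is defined as a group orbit, I would verify for each surgery that the rotation acts only on un-probed block coordinates, so the intervened columns of $A$ are untouched.
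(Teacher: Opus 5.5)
Your proposal is correct and follows essentially the same route as the paper. The paper's proof is simply that a second-moment functional is constant on every legal second-moment orbit, which is your Step~1 via the fibre definition together with the factorization $V=\phi\circ M$ of Step~2. Two in-class members with different $\chi$ therefore receive the same decision, and unstable points are excluded because they are not ECGs. One addition goes beyond the paper: your caveat that a randomized or finite-sample validator must have its acceptance law depend on the model only through $M$. The paper leaves this implicit, and it matters outside the Gaussian case, where a rotation that preserves second moments can change higher-order moments and hence finite-sample acceptance probabilities.
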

\emph{Proof.} See Appendix~\ref{app:alignment-proofs}. \hfill$\square$
\begin{lemma}[Gauge transformation of transported queries]\label{lem:gaugecov}
A transported $\chi$ is well-defined exactly when it is constant on the complete representative fibre of all legal
source representatives, target representatives, alignments, and discrete branches matching the retained
labelled laws.  On a component that has separately been proved to be a single group orbit, this reduces to
invariance under that group action; without such a transitivity proof, an orbit image need not be the complete
ambiguity set. \hfill$\square$
\end{lemma}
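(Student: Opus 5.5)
The plan is to prove the two directions of the lemma (Gauge transformation of transported queries) separately. Both are unwound directly from the complete-fibre definition of identification, in the same way as Proposition~\ref{prop:pointwise-transport}.

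Let $\mathcal R$ be the complete representative fibre. It consists of all tuples (legal source representative, legal target representative, alignment, discrete selection/support branch) whose retained labelled laws equal the observed ones. A transported $\chi$ is a map from the retained laws to the quotiented query space $Y_I$. Any procedure that sees only the retained laws must therefore return the same output for every element of $\mathcal R$.

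\emph{If} $\chi$ is constant on $\mathcal R$, then define the transported value as that common value. It depends only on the retained laws, because $\mathcal R$ is determined by them. This is exactly the singleton condition of Proposition~\ref{prop:pointwise-transport} applied to the fibre $\mathcal F_d(\theta_\star)$ of Definition~\ref{def:transport-fibre}. Here $\mathcal R$ plays the role of that fibre once alignments and branches are included as nuisance coordinates.

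\emph{Conversely}, suppose two representatives in $\mathcal R$ give distinct query values. Then they share every retained law but are separated by the query. Hence no function of the retained laws can equal $\chi$ at both, and Lemma~\ref{lem:exact-curve} or Proposition~\ref{prop:pointwise-transport} gives non-identification. The point requiring care is that the fibre must be \emph{complete}, including alignments and discrete branches. If it were not, the converse would fail, because a collision could hide in an omitted component; this is why the statement insists on all four kinds of representative.

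For the group-orbit reduction, assume a component $\mathcal C\subseteq\mathcal R$ has been shown to equal a single orbit $G\cdot\theta_0$ of an action preserving the retained laws. Then constancy of $\chi$ on $\mathcal C$ is equivalent to $\chi(g\theta_0)=\chi(\theta_0)$ for all $g\in G$. This in turn is equivalent to $G$-invariance of $\chi$ on $\mathcal C$, since every point of $\mathcal C$ has the form $g\theta_0$ and $\chi(g h\theta_0)=\chi(\theta_0)=\chi(h\theta_0)$.

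The final caveat needs a short witness rather than a proof. If transitivity fails, $\mathcal R$ may strictly contain $G\cdot\theta_0$, and $G$-invariance then does not imply constancy on $\mathcal R$. I would cite the transverse completion directions of Lemma~\ref{lem:cff-fibre} and Theorem~\ref{thm:gauge}: these lie in the fibre but outside the rotation orbit, and along them the query can move. Alternatively, the diagonal example of Corollary~\ref{cor:wcw-upper} could serve the same purpose.

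The main obstacle is bookkeeping rather than analysis. One must check that the quotient by declared gauge and alignment is taken \emph{before} comparing query values, so that "constant" means constant in $Y_I$. The disconnected discrete branches are handled by applying the orbit reduction componentwise and then additionally requiring equality of the values across components.
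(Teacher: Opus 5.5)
Your proposal is correct and takes the same route as the paper. The paper's proof is likewise just the observation that a transported query descends to the retained laws exactly when it is constant on the entire representative fibre, with group-orbit invariance sufficing only once transitivity on the component has been established.

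One small slip in your commentary: completeness of $\mathcal R$ is what the sufficiency direction (constancy $\Rightarrow$ well-definedness) needs, because an omitted component could hide a collision. Your converse, by contrast, only needs the two colliding members to be legal, so it would survive an incomplete $\mathcal R$.
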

\begin{proposition}[Query identification under partial alignment]\label{prop:align}
Fix the complete retained baseline evidence $y_0$.  Let $\mathfrak A_0(y_0)$ be the set of all legal source
representatives, target representatives, alignments, nuisance completions, and
discrete support, normalization, sign, permutation, and selection branches matching all retained labelled laws. For
an intervention design $D$, let $a_D$ return the complete labelled interventional laws, including the operator,
target, signed dose, units, and response law, and put
\[
 \mathfrak A_D=\{\alpha\in\mathfrak A_0(y_0):a_D(\alpha)=u_0\},\qquad
 \mathcal I_D(\chi)=\{\chi(\alpha):\alpha\in\mathfrak A_D\}.
\]
Then $\mathcal I_D(\chi)$ is the sharp population identified set and $\chi$ is point identified iff its extended
diameter is zero.  Uniform identification requires this constancy on every admissible fibre.  A positive-width
claim therefore requires two surviving legal members that move $\chi$; positive stabilizer dimension alone is
insufficient.  Sharp endpoints are guaranteed to be attained when the representative fibre is compact and
$\chi$ is continuous; other mechanisms may also yield attainment.
One may quotient by a declared reparameterization only after proving that both $a_D$ and $\chi$ descend to that
quotient; the representative-fibre formulation needs no such compatibility assumption.

The following anchor count is deliberately confined to a normalized special case. Let $S=\Dset$ be a complete
source block of size $m$, with $C_{\Dset F}=0$, and suppose the normalized local residual action is $SO(m)$.
Assume $\mathfrak A_0(y_0)$ is \emph{$\mathcal A$-locally complete}: every local legal fibre direction is generated
by that action. Let $k$ anchors be signed, labelled, linearly independent directions in shared units, collected
as $A=[v_1,\ldots,v_k]$, and suppose their response law is equivariant and locally faithful in the sense that,
for $Q$ near the identity,
\[
 a_D(T_Q\alpha)=a_D(\alpha)\quad\Longleftrightarrow\quad Q^\top A(\alpha)=A(\alpha).
\]
Generically, as witnessed by a nonzero minor of the anchor
Jacobian, the local survivor stabilizer is
\[
 SO(m-k),\qquad
 \dim SO(m-k)=\binom{m-k}{2},
\]
and the removed rank is
$\binom m2-\binom{m-k}{2}=k(2m-k-1)/2$.
Thus, for $m\ge2$, $m-1$ such anchors remove the connected $SO(m)$ ambiguity. This identifies the full local fibre only under
$\mathcal A$-local completeness. Global identification additionally requires the complete fibre to contain no
transverse completion, disconnected support, reflection, permutation, normalization, or selection branch on
which $\chi$ moves. Unsigned, zero-dose, duplicate, or unlabelled anchors do not satisfy this count.

\emph{Proof.} The identified-set statement is the image of the complete survivor level set. In the normalized
special case, local faithfulness makes equality of anchor laws equivalent to fixing the $k$-frame. Its orientation-preserving stabilizer is
$SO(m-k)$; subtraction of the two Lie dimensions gives the displayed rank. The nonzero-minor premise makes this
the generic local rank. The remaining qualifications follow because the group action describes the complete
local fibre only under $\mathcal A$-local completeness and does not enumerate other components. \hfill$\square$
\end{proposition}

\paragraph{Interpretation.}
Alignment is query-specific. Interventions identify the query exactly when every legal alignment and nuisance
completion that matches the labelled laws gives the same query value. Familiar anchor counts apply only after
the residual ambiguity has been shown to have the corresponding group action.

\section{Query-specific experimental design}\label{sec:vmin}
Both halves share a design principle: the relevant cost is a minimum over \emph{environment blocks}, and is
set by the query and the complete retained-law fiber rather than by a nominal scalar rank.
\begin{theorem}[Query-specific experiment design]\label{thm:vmin}
Fix an admissible query $\chi$, differentiable at the truth $\theta_0$, and a finite unit-cost environment
library. For $\mathcal D$ in that library, let
$\mathcal K_{\mathcal D}=\{\theta':\mathcal L_{\mathcal D}(\theta')=\mathcal L_{\mathcal D}(\theta_0)\}$ be the \emph{global complete
legal fiber}: all models in the declared class that match every retained observational and staged law, after
profiling all nuisance parameters, including disconnected and discrete branches.  Here
$\mathcal L_{\mathcal D}$ denotes those complete indexed laws, not the finite moment stack used by the tests.
For a scalar query $\|\cdot\|_{\mathsf Q}$ is absolute value; for a vector query it is a declared norm. Put
\[
 \Delta_\chi(\mathcal D)=\sup_{\theta',\theta''\in\mathcal K_{\mathcal D}}
 \|\chi(\theta')-\chi(\theta'')\|_{\mathsf Q},
 \qquad
 V_{\min}^{\rm qry}(\chi)=\min\{|\mathcal D|:\Delta_\chi(\mathcal D)=0\},
\]
If the displayed set is empty, its minimum is defined as $+\infty$.
Thus $V_{\min}^{\rm qry}(\chi)$ is the minimum query-identifying intervention-block cost, and population query
identification is \emph{exactly} complete-fibre constancy. If
$T_{\mathcal K_{\mathcal D}}$ denotes the directions realised by $C^1$ feasible curves through $\theta_0$, define
$V_{\min}^{\rm FO}$ by replacing fiber constancy with
$D\chi(\theta_0)T_{\mathcal K_{\mathcal D}}=0$. For the same environment library, baseline tests, class, and
cost convention,
\[
 V_{\min}^{\rm FO}(\chi)\le V_{\min}^{\rm qry}(\chi)\le K_{\min}^{\rm ID},
\]
where $K_{\min}^{\rm ID}$ is the minimum block count giving a singleton \emph{complete} legal fiber. The first
inequality follows by differentiating constancy along feasible curves; the second because full identification
identifies every query. Neither inequality asserts universal strictness.
\end{theorem}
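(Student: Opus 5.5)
The plan is to treat the statement as three separate claims and prove each with a monotonicity or inclusion argument. The claims are: (1) the characterization ``population query identification is exactly $\Delta_\chi(\mathcal D)=0$''; (2) $V_{\min}^{\rm FO}\le V_{\min}^{\rm qry}$; and (3) $V_{\min}^{\rm qry}\le K_{\min}^{\rm ID}$. Throughout, the library, baseline tests, class and cost convention are fixed. All three minima are therefore taken over the same finite family of designs, with the convention $\min\varnothing=+\infty$.

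\emph{Step 1 (identification equals fibre constancy).} I would argue exactly as in Proposition~\ref{prop:pointwise-transport} and Proposition~\ref{prop:align}. The retained evidence is $\mathcal L_{\mathcal D}(\theta_0)$, and every $\theta'\in\mathcal K_{\mathcal D}$ produces the same evidence. Hence $\chi$ is determined by the evidence iff $\chi$ is constant on $\mathcal K_{\mathcal D}$. Since $\|\cdot\|_{\mathsf Q}$ is a norm, constancy is equivalent to $\Delta_\chi(\mathcal D)=0$. Because $\mathcal K_{\mathcal D}$ is defined with all nuisances profiled and all discrete branches included, no extra measurability or factorization argument is needed; identification is defined on the complete fibre. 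This gives the description of $V_{\min}^{\rm qry}$ as the minimum query-identifying cost.

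\emph{Step 2 ($V_{\min}^{\rm FO}\le V_{\min}^{\rm qry}$).} Take any $\mathcal D$ with $\Delta_\chi(\mathcal D)=0$. Let $v\in T_{\mathcal K_{\mathcal D}}$ be realised by a $C^1$ feasible curve $\gamma$ with $\gamma(0)=\theta_0$, $\gamma'(0)=v$ and $\gamma(t)\in\mathcal K_{\mathcal D}$. Then $\chi\circ\gamma$ is constant, and $\chi$ is differentiable at $\theta_0$, so the chain rule gives $D\chi(\theta_0)v=0$. Hence every query-identifying design is first-order identifying. The set over which $V_{\min}^{\rm FO}$ minimises therefore contains the set for $V_{\min}^{\rm qry}$, and the minimum over the larger set is no larger. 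If no query-identifying design exists, the right side is $+\infty$ and the inequality is trivial.

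\emph{Step 3 ($V_{\min}^{\rm qry}\le K_{\min}^{\rm ID}$).} If $\mathcal K_{\mathcal D}$ is a singleton (modulo the declared equivalence under which $\chi$ is admissible, i.e.\ gauge-invariant as required in \S\ref{sec:obj}), then $\Delta_\chi(\mathcal D)=0$ trivially. So the designs achieving $K_{\min}^{\rm ID}$ lie in the feasible set for $V_{\min}^{\rm qry}$. For the ``no universal strictness'' remark, nothing needs to be proved; at most I would cite examples where equality holds, such as an invariant query or a query already pinned by the full design.

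The only delicate point is Step 2. The first-order notion must be defined through tangents of curves lying \emph{inside} the fibre, as in the statement, and not through the kernel of the design Jacobian. With the Jacobian-kernel definition the inequality could fail at singular strata such as those in Lemma~\ref{lem:cff-fibre}. I would also take care to apply the chain rule only where $\chi$ is differentiable at $\theta_0$, which is assumed. A one-sided curve suffices if the feasible set is only a half-neighbourhood. The $y+x^2$ counterexample quoted in Theorem~\ref{thm:sound} shows why the first inequality cannot be reversed in general, but it is not needed for the proof.
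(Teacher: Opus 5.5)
Your proposal is correct and follows essentially the paper's own argument. Identification is constancy on the complete fibre $\mathcal K_{\mathcal D}$; constancy of $\chi\circ\gamma$ along feasible $C^1$ curves forces $D\chi(\theta_0)T_{\mathcal K_{\mathcal D}}=0$, so every query-identifying design is first-order identifying; and a singleton complete fibre makes every admissible query constant. One small aside: Lemma~\ref{lem:cff-fibre} is not a good example of where a Jacobian-kernel definition would break the first inequality, because its completion fibres are affine and their tangent directions coincide with the linearized kernel. That remark plays no role in your proof.
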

\emph{Proof.} See Appendix~\ref{app:design-proofs}. \hfill$\square$

\begin{corollary}[Affine row-space criterion]\label{cor:vmin-rowspace}
For an affine query $\chi(\theta)=c+G\theta$, the boundary-safe exact criterion is
$G\,\operatorname{span}(\mathcal K_{\mathcal D}-\mathcal K_{\mathcal D})=0$. A computable row-space equivalence
requires a stronger, independently verified premise. Suppose all baseline laws, nuisances, support restrictions,
and affine class equalities have been profiled and the \emph{complete} baseline fiber is exactly
$(\theta_0+\operatorname{im}Q)\cap\Theta$, where $Q$ has full column rank and $\theta_0$ is in the relative
interior of the remaining open inequalities. Suppose further that environment $j$ adds exactly the fixed linear
block $A_jQ$, so for $\mathcal D$
\[
 M_{\mathcal D}=\operatorname{stack}_{j\in\mathcal D}(A_jQ),\qquad
 \mathcal K_{\mathcal D}=(\theta_0+Q\ker M_{\mathcal D})\cap\Theta .
\]
Then
\[
 \Delta_\chi(\mathcal D)=0
 \iff \ker M_{\mathcal D}\subseteq\ker(GQ)
 \iff \operatorname{row}(GQ)\subseteq\operatorname{row}(M_{\mathcal D}),
\]
and $V_{\min}^{\rm FO}=V_{\min}^{\rm qry}$ is the minimum number of environment \emph{blocks} satisfying
the displayed row-space condition, not the scalar rank of one matrix. More generally a vector-valued full-$B$ query with selector $G_B$
requires $G_BQ\ker M_{\mathcal D}=\{0\}$. This reduces to $Q\ker M_{\mathcal D}=\{0\}$ only when the
profiled parameter is exactly $\operatorname{vec}B$; one random scalar gradient cannot identify full $B$.

In the population pinned-response algebra, $P_0=(I-B_0)^{-1}$ and a target $j$ fixes
$z_j=e_j^\top P_0$. For $B=B_0+E$,
\[
 z_j(I-B)=e_j^\top\iff z_jE=0,
\]
so $A_j(E)=z_jE$ is an exact affine block. This corollary applies only after the complete profiled-fiber premise
above has been established: pinned-response equations alone do \emph{not} characterize the complete
observational, interventional, or distributional ECG law. The illustrative calculation below therefore uses a
declared pinned-response-only affine observation algebra, not a universal complete-law claim. In its zero-diagonal,
unknown-support three-node chain, $B_{01}$ needs two unit-cost blocks while full $B$ needs three, and exact stable
twins exist below both minima; a dense four-node control gives equality $3=3$. Hence query identification can be
strictly cheaper, but a local query need not be cheaper. Changing the support changes the count.
\end{corollary}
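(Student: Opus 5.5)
The plan is to prove the three layers of the corollary in order: the boundary-safe criterion, the row-space equivalence under the profiled-fiber premise, and then the pinned-response specialization with its illustrative counts.

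\textbf{Exact criterion.} For an affine query $\chi(\theta)=c+G\theta$ we have $\chi(\theta')-\chi(\theta'')=G(\theta'-\theta'')$. By the definition of $\Delta_\chi(\mathcal D)$ in Theorem~\ref{thm:vmin}, $\Delta_\chi(\mathcal D)=0$ holds iff $G$ annihilates every difference $\theta'-\theta''$ with $\theta',\theta''\in\mathcal K_{\mathcal D}$. Since $G$ is linear, this is the same as annihilating $\operatorname{span}(\mathcal K_{\mathcal D}-\mathcal K_{\mathcal D})$. No geometric premise enters here, which is why this form is boundary-safe.

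\textbf{Row-space equivalence under the profiled premise.} Write $\mathcal K_{\mathcal D}=(\theta_0+Q\ker M_{\mathcal D})\cap\Theta$. The key step is
\[
\operatorname{span}(\mathcal K_{\mathcal D}-\mathcal K_{\mathcal D})=Q\ker M_{\mathcal D}.
\]
\begin{itemize}
\item The inclusion ``$\subseteq$'' is immediate from the form of $\mathcal K_{\mathcal D}$.
\item For ``$\supseteq$'' I would use that $\theta_0$ lies in the relative interior of the remaining open inequalities. Hence for every $k\in\ker M_{\mathcal D}$ there is $\epsilon>0$ with $\theta_0+\epsilon Qk\in\Theta$, so $\epsilon Qk\in\mathcal K_{\mathcal D}-\mathcal K_{\mathcal D}$.
\end{itemize}
Then $\Delta_\chi(\mathcal D)=0$ iff $GQ\ker M_{\mathcal D}=0$, i.e.\ $\ker M_{\mathcal D}\subseteq\ker(GQ)$. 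Taking orthogonal complements gives $\operatorname{row}(GQ)\subseteq\operatorname{row}(M_{\mathcal D})$.

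For the equality $V_{\min}^{\rm FO}=V_{\min}^{\rm qry}$, note that by the same relative-interior argument the tangent directions $T_{\mathcal K_{\mathcal D}}$ realised by feasible $C^1$ curves are exactly $Q\ker M_{\mathcal D}$. Because $D\chi=G$ is constant, the first-order condition coincides with the exact one, design by design. Minimizing over subsets $\mathcal D$ of the library then gives a minimum over blocks, not over a scalar rank. The vector-valued statement follows verbatim with $G_B$ in place of $G$. It reduces to $Q\ker M_{\mathcal D}=\{0\}$ exactly when $G_BQ$ is injective, which is the case when the profiled parameter is $\operatorname{vec}B$. A single scalar gradient contributes one row and cannot force $\ker M_{\mathcal D}\subseteq\ker(G_BQ)$ when $G_BQ$ has full rank greater than one.

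\textbf{Pinned-response algebra.} From $z_j=e_j^\top P_0$ and $P_0(I-B_0)=I$ we get $z_j(I-B_0)=e_j^\top$. Therefore
\[
z_j(I-B)=z_j(I-B_0)-z_jE=e_j^\top-z_jE,
\]
and the pinned constraint is equivalent to $z_jE=0$, which is linear in $E$. For the three-node chain and the four-node control I would do the following:
\begin{itemize}
\item Parametrize $E$ on the unknown off-diagonal support (zero diagonal), so $Q$ selects those entries.
\item Compute $z_j$ for each single-node environment.
\item Check the row-space containment for $G=e_0^\top(\cdot)e_1$ (the $B_{01}$ query) and for the full-$B$ selector, over all block subsets of size $1,2,3$.
\end{itemize}
The stable twins below the minima come from choosing $E\in Q\ker M_{\mathcal D}$ small with $GQ$-component nonzero. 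Stability $\rho(B_0+tE)<1$ persists for small $t$ by continuity of the spectrum, and the nonzero query motion gives the collision.

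I expect the main obstacle to be the bookkeeping in the example computations. Specifically, showing the full-$B$ kernel is trivial at three blocks but not at two in the chain, and that the dense four-node case forces three blocks for the local query as well. I would handle this by writing the $z_j$ explicitly from the nilpotent chain resolvent $P_0=I+B_0+B_0^2$.
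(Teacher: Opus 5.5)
Your proposal is correct and follows the paper's own argument. The surviving displacements are $Q\ker M_{\mathcal D}$, constancy of $c+G\theta$ becomes $GQ\ker M_{\mathcal D}=0$, orthogonal complements give the row-space form, and the pinned-response equivalence comes from $P_0(I-B_0)=I$; your relative-interior argument for $\operatorname{span}(\mathcal K_{\mathcal D}-\mathcal K_{\mathcal D})=Q\ker M_{\mathcal D}$ and for the tangent directions just makes explicit what the paper leaves implicit. When you carry out the chain calculation, orient the chain so that node $2$ is upstream of node $0$ (e.g.\ $B_0$ supported on $B_{01},B_{12}$); in the reverse chain $0\to1\to2$ one has $z_0=e_0^\top$, so environment $0$ alone pins $E_{01}$ and the claimed count of two for $B_{01}$ would fail.
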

\emph{Proof.} See Appendix~\ref{app:design-proofs}. \hfill$\square$

\begin{remark}[Limits of first-order design criteria]\label{rem:vmin-boundaries}
The row-space shortcut does not apply at an active boundary, with estimated $z_j$, for a nonlinear query, with an unprofiled
law/class/support restriction, or a near-rank block. Off the affine case, first order is only necessary:
$\chi(\theta)=\theta^2$ at $0$ and $\chi(x,y)=y+x^2$ on $\{y=0\}$ satisfy the pointwise tangent condition but
are nonconstant. Complete-fiber constancy remains the criterion; local tangent or finite-jet searches are
refutation-only across singular or incompletely enumerated branches. A finite-tolerance moment suite accesses a
localized set $\mathcal K_\tau^{\rm mc}=\{\|\mathsf M_{\mathcal D}^{\rm mc}(\theta)-
\mathsf M_{\mathcal D}^{\rm mc}(\theta_0)\|_\oplus\le\tau\}$, not the complete-law fibre; it therefore needs the
separate selected-summary modulus (and moment-fibre sufficiency) of
Proposition~\ref{prop:selected-modulus}.
\end{remark}
\paragraph{Interpretation.}
An experiment need not identify the whole model to identify a particular query. It is sufficient that the query
be constant on the complete legal fibre remaining after the experiment. A first-order rank condition gives only
a local lower bound unless the stated affine assumptions hold.
\begin{remark}[The transport instance of the same principle]\label{rem:designrank}
Theorem~\ref{thm:necessity} and Corollary~\ref{cor:cheap} are the cross-domain reading of
Theorem~\ref{thm:vmin}: where validation replaces full identification by the query-relative block cost
$V_{\min}^{\rm qry}(\chi)$, transport replaces it by the \emph{discrepancy}-relative cost. The active-probe
$C_{FF}$-nonsingular ambient branch of Theorem~\ref{thm:suff} is identified by probing all $|\Dset|$ rows; the normalized
complete-block $SO(m)$ branch has the separate anchor count of Proposition~\ref{prop:align}. Support-specific and
singular classes use their own complete-fibre $K_{\min}^{\rm ID}$ and need not obey either count. The
query-relative refinement $\Kt(\chi)$ composes the two. In every stated regime the
query-identifying design costs no more than full identification and can be cheaper, but need not be; the collapse regimes
(Remark~\ref{rem:collapse}) show
the count is design-relative rather than intrinsic.
\end{remark}

\section{Statistical procedures}\label{sec:instr}
The preceding results are population statements. This section gives procedures for comparing the reconstructed
mean and covariance responses of a fixed candidate twin with those of the reference system, under explicit
sampling, rank, and tail assumptions.
\begin{theorem}[Statistical checks and their limits]\label{thm:instr}
(i) The staged \emph{mean} back-test is run on the \emph{reconstructed latent} free-block mean
$\hat m_F^{(e)}=S_F\hat H^{+}\hat m_X^{(e)}\in\mathbb R^{|F_e|}$: $S_F$ selects the environment-$e$ \emph{free block}
$F_e$, so the dimension is $|F_e|=d-|\Dset_e|$ (\emph{not} $d$ --- e.g.\ $d{=}2$ with node $0$ hard-clamped gives
$F_e{=}\{1\}$, a one-dimensional statistic). It is the mean analogue of the covariance test's $Z=S_FH^{+}X$;
the \emph{observed} mean map $M_\mathcal T^{(e)}=HP$ is $p$-dimensional but the free block carries only $|F_e|$
random directions, so the observed-mean covariance has $\mathrm{rank}\le|F_e|<p$ and the raw $p$-dimensional
inverse does not exist. Write $m_{\mathcal T,F}^{(e)}:=E_{\mathcal T}[S_FH^+X^{(e)}]$ for the candidate's
\emph{full} environment mean, including the structural intercept and the declared hard/soft intervention semantics;
it reduces to a response-map times dose only in a baseline-centred zero-intercept specialization. Two mean routes
are distinct. \textbf{Fixed-dimensional ADF Wald:}
$T_e^m=n\,(\hat m_F^{(e)}-m_{\mathcal T,F}^{(e)})^\top\hat\Sigma_{m,F}^{+}
(\hat m_F^{(e)}-m_{\mathcal T,F}^{(e)})\to_d\chi^2_{r_{m,e}}$ under $H_0$, with
$r_{m,e}=\mathrm{rank}(\Sigma_{m,F}^{(e)})$ (generically $|F_e|$) and $\hat\Sigma_{m,F}$ the ADF estimate of
the per-observation covariance. For externally known $H$ this is the fixed-frame covariance. For an independent
sample-split $\hat H$, add $J_H\Sigma_{\hat H}J_H^\top$, where
$J_H=\partial m_F/\partial\mathrm{vec}(H)$. For a same-sample influence-function estimate, use the full joint
sandwich
$\Sigma_m+J_H\Sigma_HJ_H^\top+\Sigma_{mH}J_H^\top+J_H\Sigma_{Hm}$; omitting the cross terms is invalid.
This route is pointwise asymptotic; its finite-$n$ law is not replaced by $\chi^2$.
\end{theorem}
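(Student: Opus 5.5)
The statement to prove is part (i) of Theorem~\ref{thm:instr}. It has four components: the dimension and rank bookkeeping for the reconstructed free-block mean, the $\chi^2_{r_{m,e}}$ null limit of $T_e^m$, and the covariance corrections for split and same-sample estimates of $H$.

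\textbf{Step 1: rank bookkeeping.} In environment $e$ the clamped coordinates $\Dset_e$ are deterministic. The sensor model $X^{(e)}=HV^{(e)}$, with $H$ of full column rank, gives $H^{+}X^{(e)}=V^{(e)}$ exactly. Hence $S_FH^{+}X^{(e)}=V^{(e)}_{F_e}$, and this vector has at most $|F_e|$ stochastic directions.

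The per-observation covariance of $\hat m_X^{(e)}$ is $H\,\mathrm{Cov}(V^{(e)})H^\top$. Its rank is $\mathrm{rank}\,\mathrm{Cov}(V^{(e)}_{F_e})\le|F_e|<p$ whenever $\Dset_e\neq\varnothing$ or $p>d$. So the raw $p$-dimensional inverse does not exist. The two-node example with node~$0$ hard-clamped is a direct instance: $F_e=\{1\}$.

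In the free block, $V_{F}=(I-B_{FF})^{-1}(c_F+B_{F\Dset}v_{\Dset}+U_F)$, so $\mathrm{Cov}(V_F)=(I-B_{FF})^{-1}\Omega_{FF}(I-B_{FF})^{-\top}$. This has full rank $|F_e|$ because $\Omega$ is diagonal positive definite and the staged system is well posed. That gives the ``generically $|F_e|$'' clause, with $r_{m,e}$ defined in general as the true rank.

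\textbf{Step 2: known $H$.} With fixed $H$, $\hat m_F^{(e)}$ is a sample mean of i.i.d.\ vectors $S_FH^{+}X_k$ with mean $m^{(e)}_{\mathcal T,F}$ under $H_0$. The multivariate CLT gives $\sqrt n(\hat m_F-m_{\mathcal T,F})\to_d N(0,\Sigma_{m,F})$.

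We then need the standard Moore--Penrose Wald result: if $Z\sim N(0,\Sigma)$ and $\hat\Sigma\to_p\Sigma$ with $\Pr\{\mathrm{rank}\,\hat\Sigma=\mathrm{rank}\,\Sigma\}\to1$, then $Z^\top\hat\Sigma^{+}Z\to_d\chi^2_{\mathrm{rank}\Sigma}$. Rank-consistency makes the pseudoinverse continuous along the sequence, after which Slutsky applies. Deterministic coordinates have exactly zero sample variance, so the rank condition holds automatically once the stochastic block is nonsingular.

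\textbf{Step 3: estimated $H$.} Write $\hat m_F=g(\bar X,\mathrm{vec}\hat H)$ with $g(x,h)=S_FH(h)^{+}x$. This map is smooth near a full-column-rank $H$, since $H^{+}=(H^\top H)^{-1}H^\top$. The delta method gives
\[
\sqrt n(\hat m_F-m_F)=\sqrt n\,\bigl(S_FH^{+}(\bar X-\mu_X)\bigr)+J_H\sqrt n\,\mathrm{vec}(\hat H-H)+o_p(1),
\]
with $J_H=\partial m_F/\partial\mathrm{vec}(H)$.

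Under an independent split the two leading terms are asymptotically independent, so the covariances add and give $\Sigma_m+J_H\Sigma_{\hat H}J_H^\top$. Under a same-sample influence-function representation $\hat H-H=n^{-1}\sum_k\psi_H(X_k)+o_p(n^{-1/2})$, the joint CLT for the stacked influence vectors yields the full sandwich, including $\Sigma_{mH}J_H^\top+J_H\Sigma_{Hm}$.

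To show that omitting the cross terms is invalid, we exhibit a scalar case in which $\Sigma_{mH}\neq0$. Take $p=d=1$ and a scale estimate $\hat H$ built from the same sample mean: the cross-covariance is then nonzero, so the naive variance is wrong and the $\chi^2$ calibration fails.

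\textbf{Main obstacle.} The delicate step is the rank-consistency of the estimated pseudoinverse covariance once the $J_H$ terms are added. The correction terms could in principle raise the rank beyond $|F_e|$, or a near-singular ADF estimate could make $\hat\Sigma^{+}$ discontinuous. My first approach is to work in retained stochastic coordinates: project onto the fixed free-block subspace and assume a positive-definite limit on it, as the paper does elsewhere, so that the pseudoinverse reduces to an ordinary inverse. The final sentence, that the result is only pointwise asymptotic, then follows because every step above is a fixed-DGP limit.
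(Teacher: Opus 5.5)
Your proposal is correct and follows the paper's own route. The reconstructed mean is treated as a smooth function of the sample mean and, when estimated, the sensor, so the multivariate CLT and delta method on the stacked influence vector give the stated covariance: the split-sample term adds by independence, the same-sample cross terms come from the joint CLT, and studentization gives the $\chi^2$ limit. The only difference is at studentization, where the paper restricts to a declared estimable range while you use rank-consistency of the Moore--Penrose inverse for known $H$ and assume a positive-definite retained block for estimated $H$; these agree in the generic full-rank case, but the paper's device also covers a singular same-sample sandwich, which your assumption sets aside.
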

\emph{Proof.} See Appendix~\ref{app:inference-proofs}. \hfill$\square$

\begin{proposition}[Concentration for reconstructed means]\label{prop:mean-concentration}
Suppose $H$ is externally known and the probe observations are independent. Assume that the reconstructed free
response $Z_F^{(e)}$ obeys
\[
\|\langle Z_F^{(e)}-EZ_F^{(e)},u\rangle\|_{\psi_2}\le
K_{m,e}(u^\top\Sigma_{m,F}^{(e)}u)^{1/2},
\]
together with a declared envelope
$\bar\lambda_{m,e}\ge\lambda_{\max}(\Sigma_{m,F}^{(e)})$. For a universal $C_m$, put
\[
 t_{e,n}^m(\eta):=C_mK_{m,e}\bar\lambda_{m,e}^{1/2}
 \sqrt{\frac{|F_e|+\log(2/\eta)}{n_e}},\qquad
 q_{e,n}^m:=t_{e,n}^m(\alpha_e)+t_{e,n}^m(\beta_e).
\]
The test accepting when
$\|\hat m_F^{(e)}-m_{\mathcal T,F}^{(e)}\|_2\le t_{e,n}^m(\alpha_e)$
has level at most $\alpha_e$ and power at least $1-\beta_e$ whenever the population mean gap exceeds
$q_{e,n}^m$. For a sample-split estimated $H$, suppose the reconstructed mean residual has error at most
$h_{e,n}(\eta)$ except on an event of probability $\eta$. Define
$\tilde t_{e,n}^m(\eta):=t_{e,n}^m(\eta/2)+h_{e,n}(\eta/2)$. The estimated-$H$ test accepts at
$\tilde t_{e,n}^m(\alpha_e)$ and uses
$\tilde q_{e,n}^m=\tilde t_{e,n}^m(\alpha_e)+\tilde t_{e,n}^m(\beta_e)$; the union allocation preserves level
$\alpha_e$ and Type-II error $\beta_e$. Without such a
bound this finite-$n$ route is unavailable. No scale-free bound is possible
without the envelope, and no $(1-\rho(B))^{-1}$ substitution is made.
\end{proposition}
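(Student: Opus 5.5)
The proof reduces to a sub-Gaussian concentration bound for the sample mean of the reconstructed free response, followed by a triangle-inequality argument that converts the deviation threshold into level and power guarantees.

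\emph{Step 1: known $H$.} Write $\bar Z=\hat m_F^{(e)}$ for the sample mean of $n_e$ independent copies of $Z_F^{(e)}$, and put $\Delta=\bar Z-EZ_F^{(e)}$. For each unit $u\in S^{|F_e|-1}$, the scaled projection $\sqrt{n_e}\langle\Delta,u\rangle$ is a normalized sum of independent centred sub-Gaussians. By the assumed $\psi_2$ bound, its norm is at most $CK_{m,e}(u^\top\Sigma_{m,F}^{(e)}u)^{1/2}\le CK_{m,e}\bar\lambda_{m,e}^{1/2}$; this is Hoeffding's inequality in $\psi_2$ form. Take a $1/2$-net $\mathcal N$ of the sphere with $|\mathcal N|\le5^{|F_e|}$. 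Then $\|\Delta\|_2\le2\max_{u\in\mathcal N}\langle\Delta,u\rangle$, and a union bound gives
\[
\Pr\{\|\Delta\|_2>t_{e,n}^m(\eta)\}\le\eta
\]
once $C_m$ is chosen large enough to absorb $\log5$ and the net factor. The envelope $\bar\lambda_{m,e}$ enters exactly here: it replaces the direction-dependent variance by a uniform constant. This also explains why no scale-free bound is possible without it.

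\emph{Step 2: level and power.} Under $H_0$ the target equals the truth, $m_{\mathcal T,F}^{(e)}=EZ_F^{(e)}$. The test therefore rejects only on the event $\|\Delta\|_2>t_{e,n}^m(\alpha_e)$, which has probability at most $\alpha_e$.

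If instead the gap $\delta=\|EZ_F^{(e)}-m_{\mathcal T,F}^{(e)}\|_2$ exceeds $q_{e,n}^m$, the reverse triangle inequality gives
\[
\|\hat m_F^{(e)}-m_{\mathcal T,F}^{(e)}\|_2\ge\delta-\|\Delta\|_2.
\]
This exceeds $t_{e,n}^m(\alpha_e)$ whenever $\|\Delta\|_2\le t_{e,n}^m(\beta_e)$, an event of probability at least $1-\beta_e$. Hence the power is at least $1-\beta_e$.

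\emph{Step 3: estimated $H$.} Decompose the plug-in residual as the known-$H$ residual plus a reconstruction error. By the declared bound, the latter is at most $h_{e,n}(\eta/2)$ except on an event of probability $\eta/2$. The known-$H$ deviation exceeds $t_{e,n}^m(\eta/2)$ with probability at most $\eta/2$. A union bound over the two events gives deviation at most $\tilde t_{e,n}^m(\eta)$ with probability at least $1-\eta$.

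Step 2 then repeats verbatim with $\tilde t$ in place of $t$, giving level $\alpha_e$ and Type-II error $\beta_e$ at $\tilde q_{e,n}^m$. Sample splitting is used to make $\hat H$ independent of the probe sample. This independence lets us condition on $\hat H$ and treat the target as fixed in Step 1.

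\emph{Main obstacle.} The expected difficulty is the dimension. The net argument must run in the $|F_e|$-dimensional free block, not in ambient $\mathbb R^p$, where the covariance is singular. The anisotropic $\psi_2$ hypothesis must also be reduced cleanly to the isotropic bound via $\bar\lambda_{m,e}$. I would check that the constant $C_m$ can be taken universal, independent of $K_{m,e}$, $|F_e|$ and $\Sigma$, and that no spectral-radius factor appears: all conditioning is carried by the envelope.
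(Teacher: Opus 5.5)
Your proposal is correct and follows the paper's argument. The paper's proof has three parts, and you reproduce each. First, the whitened sub-Gaussian condition plus the envelope $\bar\lambda_{m,e}$ yields the Euclidean sample-mean radius $t_{e,n}^m(\eta)$; you make this explicit via Hoeffding in $\psi_2$ form and a $1/2$-net of $S^{|F_e|-1}$, a step the paper leaves implicit. Second, the reverse triangle inequality with $q_{e,n}^m=t_{e,n}^m(\alpha_e)+t_{e,n}^m(\beta_e)$ gives level and power. Third, the estimated-$H$ case adds the sensor error under the $\eta/2$ union allocation.

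One minor remark: your appeal to independence of $\hat H$ in Step 3 is not actually needed. The union bound over the two $\eta/2$ events holds regardless of their dependence, and the known-$H$ deviation does not involve $\hat H$.
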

\emph{Proof.} See Appendix~\ref{app:inference-proofs}. \hfill$\square$

\begin{proposition}[Covariance Wald procedure]\label{prop:covariance-adf}
This procedure is stated only in a fixed latent frame in which $H$ is externally known exactly. A shared but unidentified $H$
does not suffice, because its common residual gauge is generally $GL(m)$ rather than $O(m)$. An estimated-$H$
extension would require the complete influence function, including all nuisance and same-sample cross terms;
that extension is not proved here.

Let $Z=S_FH^+X$, let
$s_e=\mathrm{svec}_{\mathrm{free}}(\hat\Sigma^{(e)}-\Sigma_{\mathcal T}^{(e)})$, and let
\[
 W_e=\operatorname{Cov}\{\mathrm{svec}_{\mathrm{free}}(\tilde Z\tilde Z^\top)\},
 \qquad k_e=|F_e|(|F_e|+1)/2.
\]
With independent probes, finite $(4+\delta)$ moments, an invertible free subsystem, and a consistent ADF
estimate $\widehat W_e$, if $W_e\succ0$ then
\[
 T_e^\Sigma=n s_e^\top\widehat W_e^{-1}s_e\ \Rightarrow\ \chi^2_{k_e}.
\]
This gives pointwise asymptotic level $\alpha$. Against every fixed nonzero covariance alternative its power
tends to one, and hence is eventually at least $1-\beta$; no finite uniform sample size is asserted.

If the population rank $r_e<k_e$ is externally declared and separated by a positive eigenvalue gap, the range
Wald and the hard-kernel companion of Proposition~\ref{prop:covariance-degenerate} are separate pieces: the
range statistic has limit $\chi^2_{r_e}$, while the companion detects fixed kernel moves. A naive pseudoinverse
over sample-created small eigenvalues is not this procedure. Raw $\mathrm{vech}$ and $\mathrm{svec}$ are
equivalent invertible coordinates in the full-rank case; $\mathrm{svec}$ fixes the direct-sum norm used above.
\end{proposition}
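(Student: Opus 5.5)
The plan is the standard asymptotic-Wald argument for a sample covariance, written in the free-block $\mathrm{svec}$ coordinates and in the fixed latent frame supplied by the known $H$. Because $H$ is externally known and has full column rank, $Z=S_FH^{+}X$ is an exact linear function of each probe observation. Under $H_0$ its population covariance therefore equals the candidate's free-block covariance $\Sigma_{\mathcal T}^{(e)}$. The invertible free subsystem guarantees that this covariance is well defined in the fixed frame, so no gauge term enters. I would first write
\[
\hat\Sigma^{(e)}-\Sigma^{(e)}=\frac1n\sum_{k}\bigl(\tilde Z_k\tilde Z_k^\top-\Sigma^{(e)}\bigr)-(\bar Z-EZ)(\bar Z-EZ)^\top ,
\]
where $\tilde Z=Z-EZ$. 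The second term is $O_p(1/n)$ by the law of large numbers and the CLT for the mean. Hence $\sqrt n\,s_e=n^{-1/2}\sum_k\mathrm{svec}_{\rm free}(\tilde Z_k\tilde Z_k^\top-\Sigma)+o_p(1)$.

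The multivariate Lindeberg--L\'evy CLT applies to the i.i.d.\ summands. Their second moment is finite by the finite fourth moments, and $(4+\delta)$ moments leave room for ADF consistency. This gives $\sqrt n\,s_e\Rightarrow N(0,W_e)$. With $W_e\succ0$ and $\widehat W_e\to_pW_e$, the continuous mapping theorem and Slutsky's lemma give $T_e^\Sigma\Rightarrow\chi^2_{k_e}$. The rank is $k_e=|F_e|(|F_e|+1)/2$ because $\mathrm{svec}_{\rm free}$ is a bijective isometry onto symmetric $|F_e|\times|F_e|$ matrices. This yields pointwise asymptotic level $\alpha$ for the rule that rejects above $\chi^2_{k_e,1-\alpha}$.

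For power, I would fix an alternative with $s^\ast=\mathrm{svec}_{\rm free}(\Sigma^{(e)}-\Sigma_{\mathcal T}^{(e)})\ne0$. The same expansion gives $s_e\to_p s^\ast$. The ADF estimator is built from centred data, so it remains consistent for the true $W_e\succ0$. Then $T_e^\Sigma/n\to_p s^{\ast\top}W_e^{-1}s^\ast>0$, so $T_e^\Sigma\to\infty$ in probability and the rejection probability tends to one. It therefore eventually exceeds $1-\beta$. The argument is pointwise, so no uniform sample size is claimed.

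The remaining clauses are mostly scope statements. I would show that a shared unknown $H$ fails with a one-line computation: replacing $H$ by $HG^{-1}$ and $V$ by $GV$ for $G\in GL(m)$ leaves $X$ unchanged but sends $\Sigma$ to $G\Sigma G^\top$. This is not an $O(m)$ congruence in general, so $s_e$ is not frame-invariant. Equivalence of $\mathrm{vech}$ and $\mathrm{svec}$ follows because they differ by a fixed invertible diagonal scaling $D$, and the Wald form $s^\top W^{-1}s$ is invariant under $s\mapsto Ds$, $W\mapsto DWD$.

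For the rank-deficient branch I would cite Proposition~\ref{prop:covariance-degenerate} and only sketch the argument. The eigengap makes the estimated rank-$r_e$ spectral projector consistent by Davis--Kahan, so the range statistic converges to $\chi^2_{r_e}$. I expect this to be the main delicate step: sample-created small eigenvalues must be excluded by the declared gap rather than pseudo-inverted.
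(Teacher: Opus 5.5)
Your proposal is correct and follows the paper's own route. The $\bar Z$-centering correction is $O_p(1/n)$, the centred quadratic score in Frobenius-isometric $\mathrm{svec}$ coordinates is asymptotically linear with covariance $W_e$, studentizing by a consistent $\widehat W_e\succ0$ gives $\chi^2_{k_e}$, and $T_e^\Sigma/n\to_p s^{\ast\top}W_e^{-1}s^\ast>0$ gives the power claim.

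The one imprecision is your justification of the unidentified-$H$ clause. By the same affine invariance you invoke for $\mathrm{vech}$ versus $\mathrm{svec}$, the Wald form is unchanged under any block-preserving congruence applied consistently, and $s_e$ itself also changes under orthogonal congruences. So the operative obstruction, as the paper states, is that a general $G\in GL(m)$ need not preserve the free-block selection $S_F$ or the singular subspaces. The $GL$-versus-$O$ distinction matters mainly for the $\mathrm{svec}$ direct-sum norm.
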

\emph{Proof.} See Appendix~\ref{app:inference-proofs}. \hfill$\square$

\begin{proposition}[Nonasymptotic covariance concentration]\label{prop:covariance-concentration}
For an actual covariance block $k_e\ge1$, let $W:=Z-EZ$ and suppose the
free-block responses are iid and obey the
\emph{centered whitened-shape} condition
$\|\langle W,x\rangle\|_{\psi_2}\le K_Z(x^\top\Sigma_{FF}^{(e)}x)^{1/2}$ for every $x$, with $K_Z\ge1$
(translation- and scale-invariant --- \emph{part of the stated assumptions}), together with a declared covariance \emph{envelope}
$\bar\lambda_e\ge\lambda_{\max}(\Sigma_{FF}^{(e)})$ (also an explicit assumption, since $\lambda_{\max}
(\Sigma_{FF}^{(e)})$ is an unknown population quantity --- alternatively a sample-split upper confidence bound with
error allocation, or the self-normalized \emph{empirical} Bernstein form; the test considered here is the
ADF Wald above, studentized by the empirical $\widehat W_e$ and hence oracle-free). For this analytic route use the
divisor-$n$ sample-mean-centred covariance
$\hat\Sigma_n=n^{-1}\sum_{\ell=1}^n(Z_\ell-\bar Z)(Z_\ell-\bar Z)^\top$. The identity
$\hat\Sigma_n-\Sigma=n^{-1}\sum_\ell(W_\ell W_\ell^\top-\Sigma)-\bar W\bar W^\top$ shows that the first term is
entrywise sub-exponential and the empirical-centering term is absorbed by the same $u/n$ branch. For
$u_\eta=\log(4k_e/\eta)$, the entrywise-Bernstein sup-norm test has threshold
$\tau_n(\eta)=C_1\bar\kappa_\Sigma(\sqrt{u_\eta/n}+u_\eta/n)$,
$\bar\kappa_\Sigma:=C_B\,K_Z^2\bar\lambda_e$ the \emph{operational} (envelope) constant, and level $\le\alpha$
for all $n$. Against a fixed, validation-data-independent candidate it has power $\ge1-\beta$ whenever
$\|\mathrm{svec}_{\mathrm{free}}(\Sigma_{\mathcal S}-\Sigma_{\mathcal T})\|_2>
q_{e,n}^\Sigma:=\sqrt{2k_e}[\tau_n(\alpha)+\tau_n(\beta)]$; in particular it suffices that
$n\gtrsim\bar\kappa_\Sigma^2\,(2k_e)\,\gamma^{-2}\log\tfrac{4k_e}{\alpha\wedge\beta}\ \vee\
\bar\kappa_\Sigma\,\sqrt{2k_e}\,\gamma^{-1}\log\tfrac{4k_e}{\alpha\wedge\beta}$ (the \emph{same} $\bar\kappa_\Sigma$
as in $\tau_n$; it equals the oracle $\kappa_\Sigma$ below when the envelope is tight, $\bar\lambda_e\!\approx\!
\lambda_{\max}(\Sigma_{FF}^{(e)})$, and otherwise inflates the guaranteed sample size by a factor in
$[\bar\lambda_e/\lambda_{\max},(\bar\lambda_e/\lambda_{\max})^2]$ --- the $\max$ of the quadratic variance-branch
term and the linear far-branch term, following whichever dominates and possibly switching branch)
--- \emph{two} Bernstein branches, the
sub-Gaussian variance branch carrying $2k_e$ and the sub-exponential far branch $\sqrt{2k_e}$ ($k_e\!\asymp\!d^2/2$)
--- with the oracle $\kappa_\Sigma=C_B\,K_Z^2\lambda_{\max}(\Sigma_{FF}^{(e)})$; when
$\Sigma_{FF}^{(e)}\succ0$, this also equals $C_B\,K_Z^2\lambda_{\min}(P_{\mathrm{free}}^{(e)})^{-1}$ for
$P_{\mathrm{free}}^{(e)}:=(\Sigma_{FF}^{(e)})^{-1}$, the \emph{actual marginal} free-block
precision (the centred-product norm $\|W_iW_j-\Sigma_{ij}\|_{\psi_1}\le C_BK_Z^2\sqrt{\Sigma_{ii}\Sigma_{jj}}\le
C_BK_Z^2\lambda_{\max}(\Sigma_{FF}^{(e)})$; the structural
$(I-B_{FF}^{(e)})^\top\Omega_{FF}^{-1}(I-B_{FF}^{(e)})$ equals it \emph{only} under a hard clamp --- a soft probe
leaves the complementary block random and inflates $\Sigma_{FF}$),
hence, on that positive-definite branch, $\kappa_\Sigma^2\asymp K_Z^4\lambda_{\min}(P_{\mathrm{free}}^{(e)})^{-2}$
(universal constant $C_B^2$). The $\lambda_{\max}$ form remains valid for singular $\Sigma$. $K_Z^4$ is a valid Bernstein \emph{upper-bound} scale (not exact/minimax-necessary), \emph{not}
$(2{+}\bar\kappa_4)$: the excess-kurtosis form is \emph{false} --- for the continuous near-Rademacher
$U=(R{+}\varepsilon G)/\sqrt{1{+}\varepsilon^2}$, $2{+}\bar\kappa_4\to0$ as $\varepsilon\to0$ while
$\mathrm{Var}(U_iU_j)=1$ stays fixed, so off-diagonal detection still needs $n=\Omega(\gamma^{-2})$. Here $K_Z$ is a
\emph{declared} assumption of the result; a diagnostic $\psi_2$ pretest can reject gross violations
but cannot establish an upper bound on $K_Z$.
\end{proposition}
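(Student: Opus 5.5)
The proof reduces the threshold test to entrywise concentration of the centred sample covariance, applies a union bound over the $k_e$ free entries, and converts to the $\mathrm{svec}$ norm for power.

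\textbf{Step 1 (decomposition).} Write $W_\ell=Z_\ell-EZ$. The algebraic identity
\[
\hat\Sigma_n-\Sigma=n^{-1}\sum_\ell(W_\ell W_\ell^\top-\Sigma)-\bar W\bar W^\top
\]
splits the error into two pieces.

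\textbf{Step 2 (main term).} For each free pair $(i,j)$, apply the whitened-shape condition with $x=e_i$ and $x=e_j$. This gives $\|W_i\|_{\psi_2}\le K_Z\Sigma_{ii}^{1/2}$ and the analogous bound for $W_j$. The product rule $\|XY\|_{\psi_1}\le\|X\|_{\psi_2}\|Y\|_{\psi_2}$ and centering then give
\[
\|W_iW_j-\Sigma_{ij}\|_{\psi_1}\le C_BK_Z^2\sqrt{\Sigma_{ii}\Sigma_{jj}}\le C_BK_Z^2\bar\lambda_e=\bar\kappa_\Sigma .
\]
Bernstein's inequality for iid sub-exponential averages bounds the $(i,j)$ entry by $C\bar\kappa_\Sigma(\sqrt{u/n}+u/n)$ with probability at least $1-2e^{-u}$.

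\textbf{Step 3 (centering term).} Each coordinate $\bar W_i$ is sub-Gaussian with scale $K_Z\Sigma_{ii}^{1/2}/\sqrt n$. Hence $|\bar W_i\bar W_j|\lesssim K_Z^2\bar\lambda_e\,u/n$ on a further event of probability at least $1-2e^{-u}$. This is exactly the form of the $u/n$ branch, so it is absorbed after enlarging $C_1$.

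\textbf{Step 4 (union bound and level).} Union over the $k_e$ entries and the two events. Setting $u=u_\eta=\log(4k_e/\eta)$ yields
\[
\Pr\{\|\hat\Sigma_n-\Sigma\|_{\max,\mathrm{free}}>\tau_n(\eta)\}\le\eta
\]
for every $n$. Under the null $\Sigma=\Sigma_{\mathcal T}$ this gives level at most $\alpha$.

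\textbf{Step 5 (power).} Fix a candidate independent of the data. By the triangle inequality, acceptance requires $\|\Sigma_{\mathcal S}-\Sigma_{\mathcal T}\|_{\max}\le\tau_n(\alpha)+\tau_n(\beta)$ outside a $\beta$-event. The conversion $\|\mathrm{svec}(A)\|_2\le\sqrt{2k_e}\|A\|_{\max}$ holds because off-diagonal entries carry weight $\sqrt2$. So a gap exceeding $q^\Sigma_{e,n}$ forces rejection with probability at least $1-\beta$.

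\textbf{Step 6 (sample size).} Require $\sqrt{2k_e}\,C_1\bar\kappa_\Sigma\sqrt{u/n}\le\gamma/4$ and $\sqrt{2k_e}\,C_1\bar\kappa_\Sigma\,u/n\le\gamma/4$ with $u=\log(4k_e/(\alpha\wedge\beta))$. These two conditions give the two displayed branches.

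\textbf{Step 7 (side remarks).} The inflation factor between $\bar\lambda_e/\lambda_{\max}$ and its square follows from the linear and quadratic appearance of $\bar\kappa_\Sigma$ in the two branches. The precision identity is $\lambda_{\max}(\Sigma)=\lambda_{\min}(\Sigma^{-1})^{-1}$ when $\Sigma\succ0$. The hard-clamp remark follows from the Schur complement: marginalizing an unclamped block yields the marginal precision, not the structural one.

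\textbf{Step 8 (near-Rademacher counterexample).} Take $U_i,U_j$ independent copies. Then $\mathrm{Var}(U_iU_j)=1$, while the fourth moment tends to $1$, so $2+\bar\kappa_4\to0$. A two-point Le Cam argument then gives $n=\Omega(\gamma^{-2})$ for detecting an off-diagonal shift of size $\gamma$.

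The main obstacle is Step 3, together with keeping the constants uniform. One must check that the empirical-centering term never dominates the variance branch. One must also confirm that using the envelope $\bar\lambda_e$, rather than the unknown $\lambda_{\max}$, leaves the level statement valid for all $n$. If the direct bound in Step 3 proves too loose, the fallback is to bound $\|\bar W\|_\infty^2$ through a sub-Gaussian maximal inequality over the $|F_e|$ coordinates and fold its logarithmic factor into $u_\eta$.
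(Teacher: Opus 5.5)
Your proposal is correct and follows essentially the same route as the paper: the centring identity, sub-exponential products of sub-Gaussian coordinates with entrywise Bernstein, a union bound over the $k_e$ free entries, and the $\sqrt{2k_e}$ $\mathrm{svec}$ conversion that yields both the power radius and the two sample-size branches. Your two $2e^{-u}$ events per entry account exactly for $u_\eta=\log(4k_e/\eta)$. The worry you flag in Step 3 is not a real obstacle, because the centring term is already bounded by the $u/n$ branch built into $\tau_n$ and the envelope only enlarges the threshold, so the maximal-inequality fallback is unnecessary.
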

\emph{Proof.} See Appendix~\ref{app:inference-proofs}. \hfill$\square$

\begin{proposition}[Rank-degenerate covariance inference]\label{prop:covariance-degenerate}
In the fixed identified frame of Proposition~\ref{prop:covariance-adf}, this branch applies exactly when the \emph{environment's}
$W_e$ is degenerate, with rank/kernel read off $W_e$ --- a \emph{population} selection rule, taken
here as an externally supplied \emph{fixed} rank/kernel declaration (part of the assumptions), \emph{not} a
data-driven finite-sample rank estimate, since exact rank is not selectable uniformly near degeneracy.
This branch lies outside the standing continuous-noise class; it records the limiting structural failure rather
than enlarging the class used by the identification theorems.
This branch inherits the \emph{full} standing hypothesis set of the fixed-dimensional case --- mutually \emph{independent} free-block noises,
each of \emph{positive variance} $\sigma_j^2{>}0$ and finite $(4{+}\delta)$ moments, \emph{zero mean}
$\mathbb E U_j{=}0$, and an \emph{invertible} free subsystem $A{=}(I{-}B_{FF})^{-1}$ (i.e.\ $\det(I{-}B_{FF})\ne0$)
--- \emph{not} merely the zero-mean convention: degeneracy of $W_e$ is a \emph{fourth-moment} statement
(on $\mathrm{Var}(U_j^2)$) layered \emph{on top of} these standing assumptions, never a consequence of centering
alone. Since $W_e$ is built
on \emph{mean-centered} variables, the exact degeneracy condition on coordinate $j$ is
$\mathrm{Var}((U_j{-}\mathbb E U_j)^2){=}0$, which under $\mathbb E U_j{=}0$ reads $\mathrm{Var}(U_j^2){=}0$.
\emph{Only} under a \emph{hard clamp} (where $Z=A_{FF}U_F$, so such a $U_j$ degenerates
$W_e$ along $Q_j$) --- or when $F$ contains every still-random coordinate --- the kernel is the
$r$-dimensional span of the \emph{congruence} pullbacks $Q_j=A^{-\top}E_{jj}A^{-1}$ of the degenerating
diagonal directions, $A{=}(I{-}B_{FF})^{-1}$ (in the fixed Frobenius-isometric coordinates the dual vectors are
$q_j=\mathrm{svec}(Q_j)$, and both projectors use the same $\mathrm{svec}$ convention), with
$r{=}\#\{j:\mathrm{Var}((U_j{-}\mathbb E U_j)^2){=}0\}$ counting the
\emph{exogenous} free-block noises of \emph{centered constant magnitude} (\emph{not} the reconstructed
coordinates $Z_i$: mixing gives $\mathrm{Var}(Z_i^2)>0$ from the $U_jU_k$ cross-term even when the underlying
noise has centered constant magnitude --- e.g.\ $Z_1{=}U_1{+}aU_2$, $Z_2{=}U_2$ has
$\mathrm{Var}(Z_1^2){=}4a^2{>}0$ yet $\mathrm{rank}\,W_e{=}1$, $r{=}2{=}m_e$), so $r{=}m_e$ iff every
exogenous free-block noise is symmetric two-point/Rademacher up to scale. Each coordinate is
\emph{independent}: a coordinate $j$ with $\mathrm{Var}((U_j{-}\mathbb E U_j)^2){>}0$ is non-degenerate and
drops \emph{only its own} $Q_j$ from the kernel, so in general
$\ker W_e=\mathrm{span}\{Q_j:\mathrm{Var}((U_j{-}\mathbb E U_j)^2){=}0\}$ and $W_e{\succ}0$
(kernel dim $0$) \emph{iff} $\mathrm{Var}((U_j{-}\mathbb E U_j)^2){>}0$ for \emph{every} $j$; one failing
coordinate leaves the other $Q_j$ intact (e.g.\ a $2$-D clamp with $U_1{\in}\{1,-2\}$, $U_2$ Rademacher gives
$W_e{=}\mathrm{diag}(2,4,0)$ in $\mathrm{svec}$ coordinates, kernel dim $1$ not $0$). Two ways a \emph{single} coordinate fails: an
\emph{asymmetric} zero-mean two-point $U{\in}\{1,-2\}$ has $\mathrm{Var}(U^2){=}2{>}0$; and a
\emph{non-zero-mean} constant-magnitude $U$ ($\mathbb P(U{=}1){=}\tfrac34$, $\mathbb P(U{=}{-}1){=}\tfrac14$:
$|U|{\equiv}1$ but $\mathbb E U{=}\tfrac12$) has $\mathrm{Var}((U{-}\mathbb E U)^2){=}\tfrac34{>}0$. Under a \emph{soft} probe $Z=V_F$
carries the complementary block's noise, and this \emph{may remove some or all} kernel directions: writing
$Z=LU$ over the independent still-random Rademacher noises, $\mathrm{Var}(\tilde Z_i^2)=4\sum_{k<l}L_{ik}^2L_{il}^2>0$
\emph{iff free coordinate $i$ genuinely mixes $\ge2$ noises} (row $i$ of $L$ has $\ge2$ nonzeros); a lone-noise
coordinate keeps its diagonal kernel direction. This per-coordinate mixing is \emph{necessary} but not sufficient
(e.g.\ $Z_1{=}U_1{+}U_2$, $Z_2{=}U_1{-}U_2$ mixes both yet $\tilde Z_1\tilde Z_2{\equiv}U_1^2{-}U_2^2{=}0$, rank $1$;
in general $\mathrm{rank}\,W_e\le p(p{-}1)/2$ for $p$ still-random noises). Full rank therefore requires the full
\emph{mixing/spanning} non-degeneracy of the congruence/moment map, not genericity alone --- e.g.\ $Z_1{=}U_1{+}aU_3$,
$Z_2{=}U_2$ leaves $\tilde Z_2^2{\equiv}1$ and $W_e$ degenerate for every $a$ (eigenvalues $\{0,2(1{+}a^2),4a^2\}$),
whereas the \emph{scalar} free block ($m_e{=}1$, $F{=}\{1\}$) $Z_1{=}U_1{+}aU_2$ mixes both and $W_e{=}4a^2\succ0$ is
full rank. The branch is selected from the actual $W_e$: when a soft probe restores
full rank the full-rank Wald applies; when it leaves a
residual kernel (declared $r$, which need not be $m_e$ and whose kernel need not be $\mathrm{span}\{Q_j\}$) the
two-piece test applies. When degenerate,
the test has \emph{two} pieces.
\textbf{(i) Range Wald:} the rank-truncated Wald (spectral cut at the declared rank $k_e{-}r$; \emph{not} the naive
Moore--Penrose inverse of $\widehat W_e$, whose $O_p(1/n)$ sample eigenvalues re-inflate each kernel direction to a
non-negligible, \emph{convention-dependent} nonstandard contribution --- e.g.\ $(1{-}X^2)^2/(4X^2)$, $X\sim N(0,1)$,
under the unbiased sample-covariance convention, \emph{not} a $\chi^2$) is $\chi^2_{k_e-r}$ on $\mathrm{range}(W_e)$,
with level $\alpha$ and power against range-supported alternatives. \textbf{(ii) Companion kernel test:} the $r$
kernel functionals equal their population values up to $O_p(1/n)$ under $H_0$ (population-exact; the residual is only
the $O_p(1/n)$ centering fluctuation) but shift by $\Theta(1)$ under a twin moving a kernel direction, so a
threshold $t_n$ with $1/n\!\ll\!t_n\!\ll\!1$ (e.g.\ $t_n\!\asymp\!n^{-1/2}$) has level $\to0$ and power $\to1$
against any $\Omega(1)$ kernel move. Purely-kernel alternatives are caught only by (ii), range alternatives by (i).
This is \emph{not} a ``sub-Gaussian threshold'' (two-point noise \emph{is} sub-Gaussian; the failure is degeneracy,
not tails, and the full-rank Wald does not apply --- the sup-norm test still applies with $K_Z\!\asymp\!1$).
\end{proposition}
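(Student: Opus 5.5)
We prove the two pieces of Proposition~\ref{prop:covariance-degenerate} separately. First we pin down the population kernel of $W_e$, then the limit laws of the range Wald and of the kernel companion.

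\textbf{Step 1: the kernel under a hard clamp.} Here $Z=A U_F$ with $A=(I-B_{FF})^{-1}$. Write the centred product as
\[
\tilde Z\tilde Z^\top=A\bigl(U U^\top-\mathbb E UU^\top\bigr)A^\top+(\text{centering terms}).
\]
For a symmetric $M$, a direction $m=\mathrm{svec}(M)$ lies in $\ker W_e$ iff $\langle M,\tilde Z\tilde Z^\top\rangle$ is almost surely constant. Set $N=A^\top M A$. Then
\[
\langle M,\tilde Z\tilde Z^\top\rangle=\sum_j N_{jj}(U_j^2-\sigma_j^2)+2\sum_{j<k}N_{jk}U_jU_k.
\]
Independence, zero means and positive variances make the summands mutually uncorrelated. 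The cross terms have variance $N_{jk}^2\sigma_j^2\sigma_k^2$, so they force $N_{jk}=0$ for $j\neq k$. The diagonal terms contribute $N_{jj}^2\mathrm{Var}(U_j^2)$, so $N_{jj}=0$ unless $\mathrm{Var}(U_j^2)=0$.

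It follows that $\ker W_e=\mathrm{span}\{A^{-\top}E_{jj}A^{-1}:\mathrm{Var}(U_j^2)=0\}$. This gives $r$, the characterization of $W_e\succ0$, and the per-coordinate independence claim. The listed examples ($\{1,-2\}$; the $3/4$--$1/4$ law; $\mathrm{diag}(2,4,0)$; $Z_1=U_1+aU_2$) we check by direct variance computations.

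\textbf{Step 2: the kernel under a soft probe.} Write $Z=LU$ over the still-random noises. The same expansion, with $N=L^\top M L$ now rectangular-congruent, gives a diagonal entry $\mathrm{Var}(\tilde Z_i^2)=4\sum_{k<l}L_{ik}^2L_{il}^2$ for Rademacher noise. This yields the "$\ge2$ nonzeros" necessity. The counterexamples $U_1\pm U_2$ and $Z_1=U_1+aU_3$ then show that mixing is not sufficient: we exhibit the explicit kernel vector and compute the eigenvalues. The bound $\mathrm{rank}\le p(p-1)/2$ follows because for Rademacher noise only the cross monomials $U_kU_l$ with $k<l$ are nonconstant.

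\textbf{Step 3: the range Wald.} Let $\Pi$ be the projector onto the range of $W_e$, fixed by the declared rank. The $(4+\delta)$-moment CLT gives
\[
\sqrt n\, s_e\Rightarrow N(0,W_e),
\]
a law supported on $\mathrm{range}(W_e)$, after we absorb the $\bar Z\bar Z^\top$ centering, which is $O_p(1/n)$. Consistency of $\widehat W_e$ and the positive eigengap give, by Davis--Kahan, consistency of the spectrally truncated inverse $\widehat W_{e,(k_e-r)}^{+}\to W_e^{+}$. Slutsky then yields the $\chi^2_{k_e-r}$ limit, and a fixed range-supported alternative gives power tending to one.

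For the warning about the naive Moore--Penrose inverse, we take a one-dimensional kernel with Rademacher noise. The sample kernel eigenvalue and the projected residual are then both $O_p(1/n)$ functions of $X=\sqrt n\,\bar U$, and their ratio is the stated non-$\chi^2$ expression under the unbiased-covariance convention.

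\textbf{Step 4: the companion kernel test.} For each $q_j$, the functional $\langle q_j,\hat\Sigma\rangle$ has zero linear fluctuation, because $\langle q_j,\tilde Z\tilde Z^\top\rangle$ is almost surely constant. Its deviation is therefore only the centering term $\langle Q_j,\bar Z\bar Z^\top\rangle=O_p(1/n)$. A threshold with $1/n\ll t_n\ll1$ consequently has level tending to zero. A fixed $\Omega(1)$ move of a kernel functional exceeds $t_n$ with probability tending to one by the law of large numbers.

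Finally, the sup-norm test still applies because bounded two-point noise is sub-Gaussian, so $K_Z\asymp1$ in Proposition~\ref{prop:covariance-concentration}.

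The main obstacle is the soft-probe analysis in Step 2. There the kernel is no longer aligned with the diagonal structure of $Q_j$, and the rank depends on the spanning properties of the moment map. The plan there is to work only with the explicit Rademacher monomial basis and to state the counterexamples rather than give a general rank formula.
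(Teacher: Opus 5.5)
Your proposal is correct, and for the two inferential pieces it follows the paper's argument. The declared rank $k_e-r$ is the exact population rank, so it carries a positive eigengap. The truncated inverse is therefore consistent, and because $\sqrt n\,s_e$ is tight under the null, Slutsky gives $\chi^2_{k_e-r}$. The companion test rests on exact population constancy of the kernel functionals, which leaves only the $O_p(1/n)$ centering term against a $\Theta(1)$ shift.

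The genuine difference is the kernel characterization. The paper's appendix does not derive the hard-clamp kernel; it lists ``a proved span equality for the noise-square directions'' as an extra premise. Your decomposition
\[
\mathrm{Var}(U^\top NU)=\sum_j N_{jj}^2\,\mathrm{Var}(U_j^2)+4\sum_{j<k}N_{jk}^2\sigma_j^2\sigma_k^2,\qquad N=A^\top MA,
\]
proves it outright from independence, zero means and positive variances. Two fixes are needed here: the factor $4$ is missing in your text, and you should note that $M\mapsto A^{-\top}MA^{-1}$ is invertible, so the $Q_j$ are linearly independent and the span has dimension exactly $r$.

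You also compute the Moore--Penrose example rather than assert it. With $m_e=k_e=1$ and Rademacher noise, $s_e=(1-X^2)/(n-1)$ and $\widehat W_e=(4X^2/n)\{1+o_p(1)\}$. The relevant quantity is therefore $n s_e^2/\widehat W_e\to(1-X^2)^2/(4X^2)$, not a ratio of residual to eigenvalue. Divisor $n$ instead gives $X^2/4$, which is the convention dependence. So your route supplies actual derivations of the structural claims, while the paper gains brevity by scoping them as premises.

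Two small repairs remain. In Step 4, and for a soft-probe residual kernel, run the companion on an arbitrary basis of the declared $\ker W_e$ rather than the $q_j$; the argument only uses almost-sure constancy of $\langle m,\mathrm{svec}(\tilde Z\tilde Z^\top)\rangle$. And $K_Z\asymp1$ needs the symmetric two-point scaling plus Hoeffding for independent sums, not mere boundedness, because skewed bounded two-point laws make $\|U\|_{\psi_2}/\sigma$ arbitrarily large.
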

\emph{Proof.} See Appendix~\ref{app:inference-proofs}. \hfill$\square$

\begin{corollary}[Aggregate query tolerance]\label{cor:aggregate-tolerance}
For the fixed validation-independent candidate, fixed block scales, direct-sum Euclidean/$\mathrm{svec}$ norm,
and block-specific conditions $(\mathrm{BP}_b)$ of Corollary~\ref{cor:query-tolerance}, the finite-$n$
sub-Gaussian tolerance is exactly the block expression in
Corollary~\ref{cor:query-tolerance},
$\varepsilon_n=C_L[\sum_{b\in\mathcal B_{\mathcal D}}q_{b,n}^2]^{s/2}$.
For $B\ge1$, the bound by $C_L[\sqrt{B}\max_bq_{b,n}]^s$ is only a conservative display; it does not replace the block sum.
Here a known-$H$ mean block uses $q_{e,n}^m$, an estimated-$H$ mean block uses $\tilde q_{e,n}^m$, and the operational covariance radius retains
both Bernstein branches and its declared-envelope $\bar\kappa_\Sigma$; the covariance rate cannot be read off a
mean constant. Against $(B,c_\Sigma\Omega)$ every mean test passes. Let $S$ be the source covariance with
unbiased divisor $n-1$ and suppose $\Sigma_\mathcal T=c_\Sigma\Sigma_\mathcal S$. Then the exact Gaussian pivot is
\[
c_\Sigma(n-1)\operatorname{tr}(\Sigma_\mathcal T^{-1}S)\sim\chi^2_{m(n-1)}.
\]
For $D=\operatorname{tr}(\Sigma_\mathcal T^{-1}S)-m$,
$E(D)=m(1-c_\Sigma)/c_\Sigma$,
$\operatorname{Var}(D)=2m/[c_\Sigma^2(n-1)]$, and the exact squared signal-to-noise ratio is
$(n-1)m(1-c_\Sigma)^2/2$. In a separate orientation check, take source covariance
$\Omega'=\mathrm{diag}(1{+}a,1{-}a,1)$ and comparator $I$. Then the trace statistic has zero mean gap and is
blind to this trace-free change, whereas the full Wald $T_e^\Sigma$ rejects it. Reversing source and comparator
is not blind because $\operatorname{tr}(\Omega'^{-1})-3=2a^2/(1-a^2)>0$. The numerical illustrations exercise both block types of
$R_{\mathcal D}^{\rm mc}$ (means and free-block $\mathrm{svec}$), but do not establish the selected-summary modulus;
$\widehat W_e^{+}$ needs $n>\mathrm{rank}\,W_e$, and noise below the fourth moment is
outside the result. Only spanned directions are validated (counterpart: a span-deficient wrong twin --- the split-gauge
completion family on the untested directions; finite level and power calculations do not address this identification failure).
\end{corollary}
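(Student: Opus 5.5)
The corollary combines pieces already proved, so the plan is to check each clause against its source and compute the Gaussian pivot directly.

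\emph{Block-sum tolerance.} Start from Corollary~\ref{cor:query-tolerance}: the single-test containment argument there gives $\varepsilon_n=C_LQ_n^s$ with $Q_n^2=\sum_bq_{b,n}^2$. The block thresholds $q_{b,n}$ enter only through $(\mathrm{BP}_b)$. So I substitute $q_{e,n}^m$ from Proposition~\ref{prop:mean-concentration}, or $\tilde q_{e,n}^m$ in the estimated-$H$ case, and $q_{e,n}^\Sigma$ from Proposition~\ref{prop:covariance-concentration}, keeping both Bernstein branches and $\bar\kappa_\Sigma$. The display with $\sqrt B\max_b q_{b,n}$ is conservative because $\sum_bq_b^2\le B\max_bq_b^2$ and $x\mapsto x^s$ is monotone; it is only an upper bound and does not replace the sum. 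The claim that the covariance rate cannot be read off a mean constant follows because $\kappa_\Sigma$ involves $K_Z^2\bar\lambda_e$, while the mean scale is $K_{m,e}\bar\lambda_{m,e}^{1/2}$.

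\emph{Scale-alternative pivot.} Replacing $\Omega$ by $c_\Sigma\Omega$ leaves $B$ and $c$ fixed. Each environment mean is $S_FH^+(\text{intercept and dose terms through }(I-B)^{-1})$, which does not depend on $\Omega$, so every mean residual is zero and every mean test passes. For the covariance, assume $X$ is Gaussian with covariance $\Sigma_{\mathcal S}=\Sigma_{\mathcal T}/c_\Sigma$. Then $(n-1)S\sim\mathrm{Wishart}_m(\Sigma_{\mathcal S},n-1)$. Whitening by $\Sigma_{\mathcal S}^{-1/2}$ gives $(n-1)\operatorname{tr}(\Sigma_{\mathcal S}^{-1}S)\sim\chi^2_{m(n-1)}$. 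Substituting $\Sigma_{\mathcal S}^{-1}=c_\Sigma\Sigma_{\mathcal T}^{-1}$ yields the stated pivot.

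\emph{Moments of $D$.} Using the $\chi^2$ mean and variance, $E\operatorname{tr}(\Sigma_{\mathcal T}^{-1}S)=m/c_\Sigma$ and $\operatorname{Var}=2m(n-1)/[c_\Sigma^2(n-1)^2]$. Hence $E(D)=m(1-c_\Sigma)/c_\Sigma$ and $\operatorname{Var}(D)=2m/[c_\Sigma^2(n-1)]$. Their squared ratio is $(n-1)m(1-c_\Sigma)^2/2$.

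\emph{Trace blindness.} With $\Omega'=\mathrm{diag}(1+a,1-a,1)$ and comparator $I$, the mean gap is $\operatorname{tr}(I^{-1}\Omega')-3=0$, so the trace statistic is blind. The full Wald $T_e^\Sigma$ of Proposition~\ref{prop:covariance-adf} has nonzero $s_e$ under this fixed alternative, and its power tends to one. Reversing roles gives $\operatorname{tr}(\Omega'^{-1})-3=\frac1{1+a}+\frac1{1-a}-2=2a^2/(1-a^2)$. This is positive for $0<|a|<1$, so the reversed check is not blind.

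\emph{Scope remarks.} These need no new argument. $\widehat W_e^+$ requires $n>\operatorname{rank}W_e$ because the empirical covariance of $n$ vectors has rank at most $n-1$. Noise without a finite fourth moment violates the hypotheses of Proposition~\ref{prop:covariance-adf}. The ``only spanned directions'' clause follows because $R^{\rm mc}_{\mathcal D}$ vanishes on the split-gauge completion family, and (Suff-M) of Proposition~\ref{prop:selected-modulus} is what is needed there.

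The main obstacle is bookkeeping rather than analysis. One must keep the estimated-$H$ allocation $\tilde t(\eta)=t(\eta/2)+h(\eta/2)$ consistent inside $Q_n$, and not conflate the direct-sum tolerance with the $\max$ display.
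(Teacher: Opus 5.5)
Your proposal is correct and follows the paper's route. You substitute the mean radii ($q_{e,n}^m$ or $\tilde q_{e,n}^m$) and the two-branch covariance radius into the single-block containment argument of Corollary~\ref{cor:query-tolerance}, derive the pivot from $(n-1)S\sim\mathrm{Wishart}_m(\Sigma_{\mathcal S},n-1)$ with $\Sigma_{\mathcal S}^{-1}=c_\Sigma\Sigma_{\mathcal T}^{-1}$, read off $E(D)$, $\operatorname{Var}(D)$ and the squared signal-to-noise ratio from the $\chi^2_{m(n-1)}$ moments, and check the trace-free example in both orientations; your extra justifications (monotonicity of $x\mapsto x^s$ for the $\sqrt{B}\max_b q_{b,n}$ display, $\Omega$-independence of the environment means, and the rank bound behind $n>\operatorname{rank}W_e$) are sound and simply make explicit what the paper leaves implicit.
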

\emph{Proof.} See Appendix~\ref{app:inference-proofs}. \hfill$\square$

\begin{proposition}[Invariance diagnostics and their ceiling]\label{prop:invariance-ceiling}
Invariance is \emph{necessary}; under the (T2) factorization null a named conditional test is
level-valid only with its separate sampling/calibration theorem. It also has an existential ceiling. There are
exact-law, quotient-valid pairs with invariant residual laws but different admissible queries. One is the passive
Gaussian counterfactual pair in Theorem~\ref{thm:nec}; another is the parent-dependent measure-preserving
reshuffle in Theorem~\ref{thm:untest}. These examples use query functionals that descend to the declared
equivalence class, not entries of a particular representative $H$. Thus (C-b) tests environment invariance and
can detect drift, but it does not by itself identify a cross-world query. The same ceiling applies cross-domain,
as Corollary~\ref{cor:untest-transport} shows.
\end{proposition}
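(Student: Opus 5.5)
The statement has three parts, and I will prove each separately. First, invariance is necessary: a failure of invariance rules out the twin. Second, a named conditional test has valid level only with its own calibration theorem. Third, there is an existential ceiling, shown by explicit pairs.

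\emph{Necessity.} Suppose the twin reproduces the target. For every node $i\notin\Dset$ the pair $(f_i,P_{U_i})$ is shared by definition. Under T1 the structural residual $U_i$ is recovered from $V_i$ and $V_{\mathrm{pa}(i)}$ by the rank inversion used in Lemma~\ref{lem:qa}. Its law in each regime is therefore the fixed law $P_{U_i}$. A residual law that varies across regimes thus contradicts the hypothesis, which is the contrapositive of the claim.

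\emph{Calibration.} I will not prove a general calibration result. I will point to the T2 clause of Theorem~\ref{thm:nec}. There, calibrating the Gaussian difference statistic with the false product variance inflates the nominal level $0.05$ to $0.109531$. So the factorization null alone does not fix the level of a particular test, and each named test needs its own sampling theorem.

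\emph{Ceiling, first pair.} I take the passive Gaussian pair of Theorem~\ref{thm:nec} (the T4 part), built from $A_\theta=A_0Q_\theta$. I will verify the following in order.
\begin{enumerate}
\item Both members are stable ECGs.
\item They have the same passive law, by the identity $A_\theta D_\theta\Omega_\theta D_\theta A_\theta^\top=A_0A_0^\top$.
\item In each member the structural residuals $D_\theta W_\theta V$ are independent Gaussian with the stated diagonal law. Residual invariance in the only available regime therefore holds exactly for both.
\item Their query values differ: $1$ versus $1.3916649596$.
\end{enumerate}
The step needing care is that this query descends to the declared equivalence class, i.e.\ $H_\theta=I$ in both members. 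Because $H$ is fixed to the identity, the query is an interventional-counterfactual functional of labelled nodes and not an entry of a gauge-dependent representative.

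\emph{Ceiling, second pair.} I take the pair of Theorem~\ref{thm:untest}. The reshuffle $T_r$ is measure-preserving for every parent value, so every residual law in every regime of $\mathcal D$ coincides with that of $\mathcal S^+$, and (C-b) accepts both. Yet the counterfactual gap $\Delta>0$ holds on positive factual measure. That gap is quotient-invariant because the reshuffle changes the cross-world coupling itself, not a relabelling. This is exactly why the theorem requires the reshuffle to be parent-dependent: a parent-independent involution would be an SCM isomorphism with gap zero.

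\emph{Cross-domain statement.} Applying Corollary~\ref{cor:untest-transport} to the switch-augmented model gives the same conclusion across domains, with all source and target design laws equal.

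The main obstacle is stating exactly what "invariant residual laws" means for the second pair. Outside $\Tp$, residuals of $\mathcal S^-$ computed by monotone inversion are not $U_r$. I would define the diagnostic as a functional of the observed regime laws only. Then agreement of all laws gives agreement of every such diagnostic, and this avoids any representative-level residual definition.
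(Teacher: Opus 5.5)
Your proposal follows the paper's own argument. Residual-law invariance is necessary for an invariant mechanism, and any diagnostic that is a functional of the regime-specific exact laws cannot separate two models whose laws agree. The passive Gaussian pair of Theorem~\ref{thm:nec} and the parent-dependent reshuffle of Theorem~\ref{thm:untest} then supply the existential, quotient-valid pairs, and Corollary~\ref{cor:untest-transport} extends the ceiling cross-domain. Three details should be corrected or made explicit.

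First, the T2 example is a case where the factorization null is \emph{false}, since $\operatorname{Cov}(U_1,U_2\mid V_3)=-1/3$. The figure $0.109531$ is the rejection rate of a product-calibrated statistic under that non-null law. Had the factorization held, the product variance $4/3$ would be exact and the Gaussian test would have level $0.05$. So the example does not show miscalibration \emph{under} the null, and your inference ``the factorization null alone does not fix the level'' does not follow from it. In the paper the calibration clause is a scoping caveat: membership in the null and calibration of a named test are separate questions. It is not derived from that example, and you should present it the same way.

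Second, in the first pair the structural residual is $(I-B_\theta)V=D_\theta^{-1}W_\theta V$, whose covariance is $D_\theta^{-2}=\Omega_\theta$. The vector $D_\theta W_\theta V$ you wrote has covariance $D_\theta^{2}$, not the stated diagonal law.

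Third, you should state explicitly, as the paper does, that the first pair collides only in the passive regime; a generic labelled structural probe separates it. The ceiling is therefore regime-specific and existential, not a claim about every intervention library.
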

\emph{Proof.} See Appendix~\ref{app:inference-proofs}. \hfill$\square$

\paragraph{Interpretation.}
The mean and covariance procedures test selected implications of a candidate twin; they do not establish the
structural class or identify an arbitrary counterfactual. Their conclusions are limited to the reconstructed
directions, sampling regimes, and rank conditions stated in the theorem.

\section{Numerical illustrations}\label{sec:exp}
\paragraph{Design and scope.}
The figures use small synthetic linear ECG systems with planted source--target discrepancies and fixed seeds.
They illustrate selected constructions and finite-instance behavior; they do not establish theorem validity,
identification, completeness, coverage, or operational performance.
\begin{figure}[tbp]\centering
\includegraphics[width=0.82\textwidth]{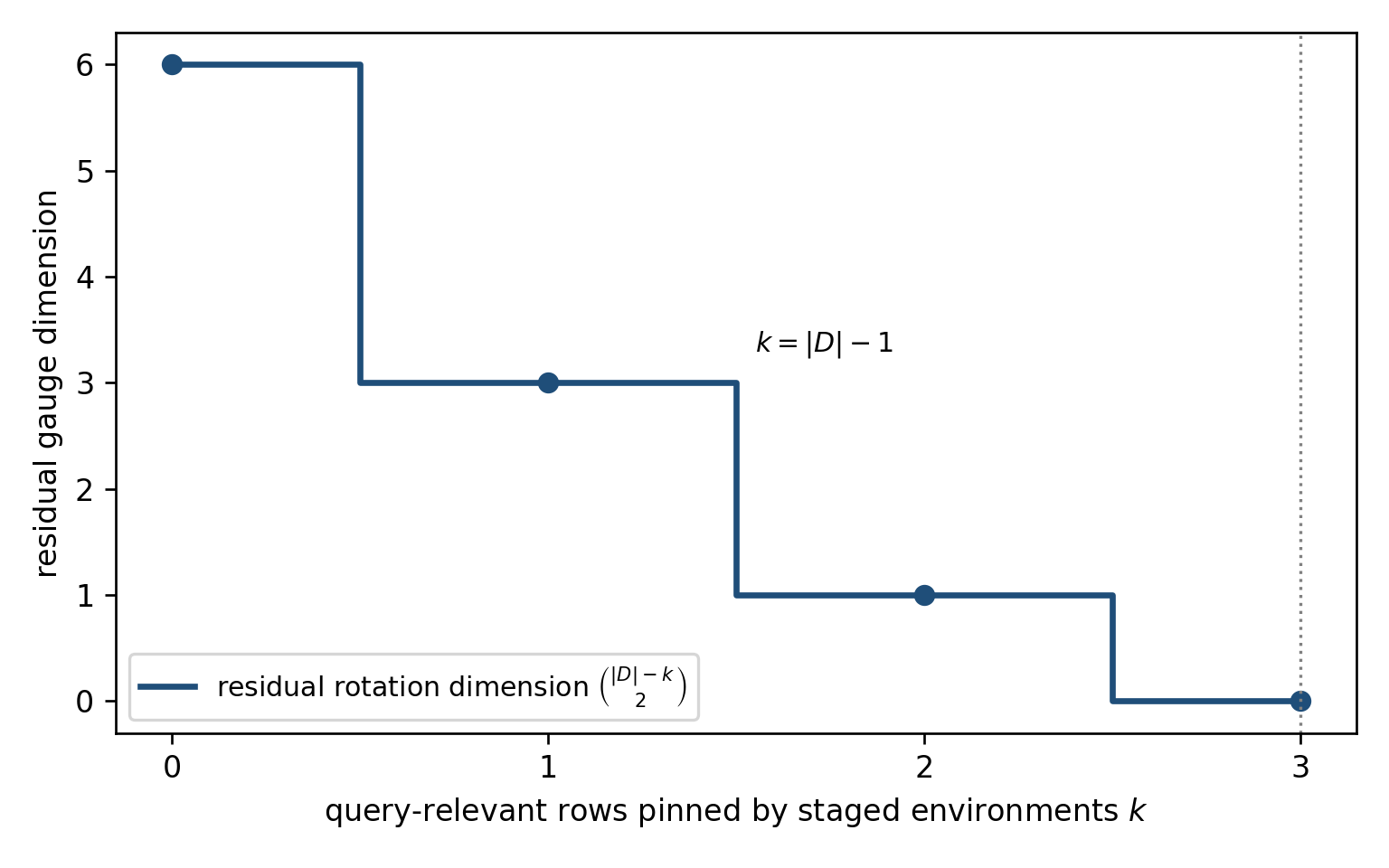}
\caption{\textbf{Interventions reduce rotational ambiguity in the displayed source block.}
Each additional query-relevant row removes rotational degrees of freedom; the residual dimension reaches zero
after $|D|-1$ rows are pinned. This calculation concerns the stated rotational orbit and does not give a
universal intervention count.
\label{fig:dash}}
\end{figure}
\begin{figure}[tbp]\centering
\includegraphics[width=0.95\textwidth]{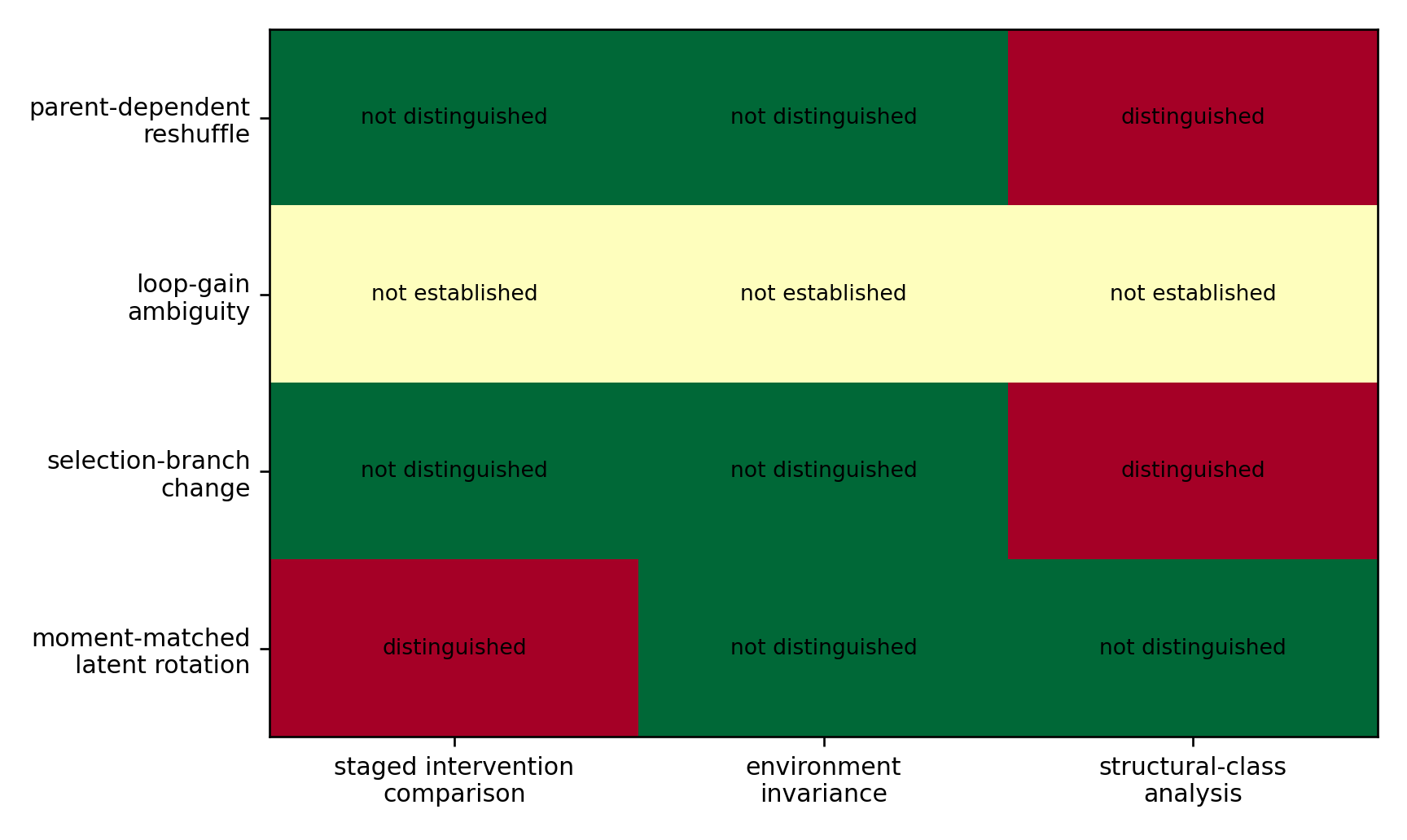}
\caption{\textbf{Different model discrepancies require different diagnostics.}
The cells summarize the displayed constructions. A staged intervention distinguishes the moment-matched latent
rotation, whereas structural-class analysis distinguishes the parent-dependent reshuffle and the
selection-branch change. The loop-gain row is marked ``not established'' because no matching construction is
claimed here.}
\label{fig:diagnostics}
\end{figure}
\begin{figure}[tbp]\centering
\includegraphics[width=0.82\textwidth]{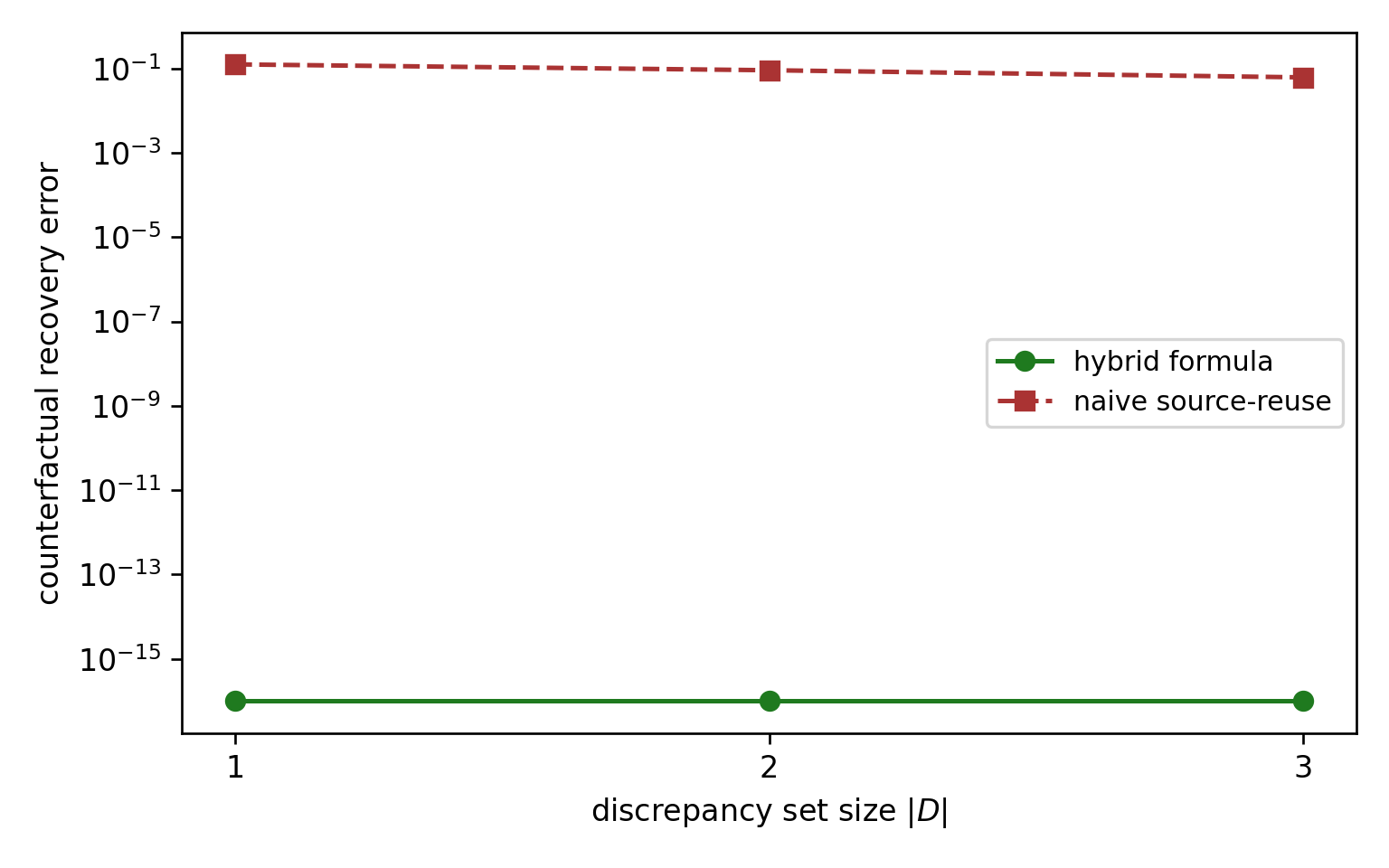}
\caption{\textbf{Hybrid reconstruction versus direct source reuse.}
In the displayed synthetic systems, the hybrid model of Theorem~\ref{thm:hybrid} reproduces the target response
to numerical precision, whereas direct reuse of the source model retains the domain discrepancy.}
\label{fig:transport}
\end{figure}
\begin{figure}[tbp]\centering
\includegraphics[width=0.96\textwidth]{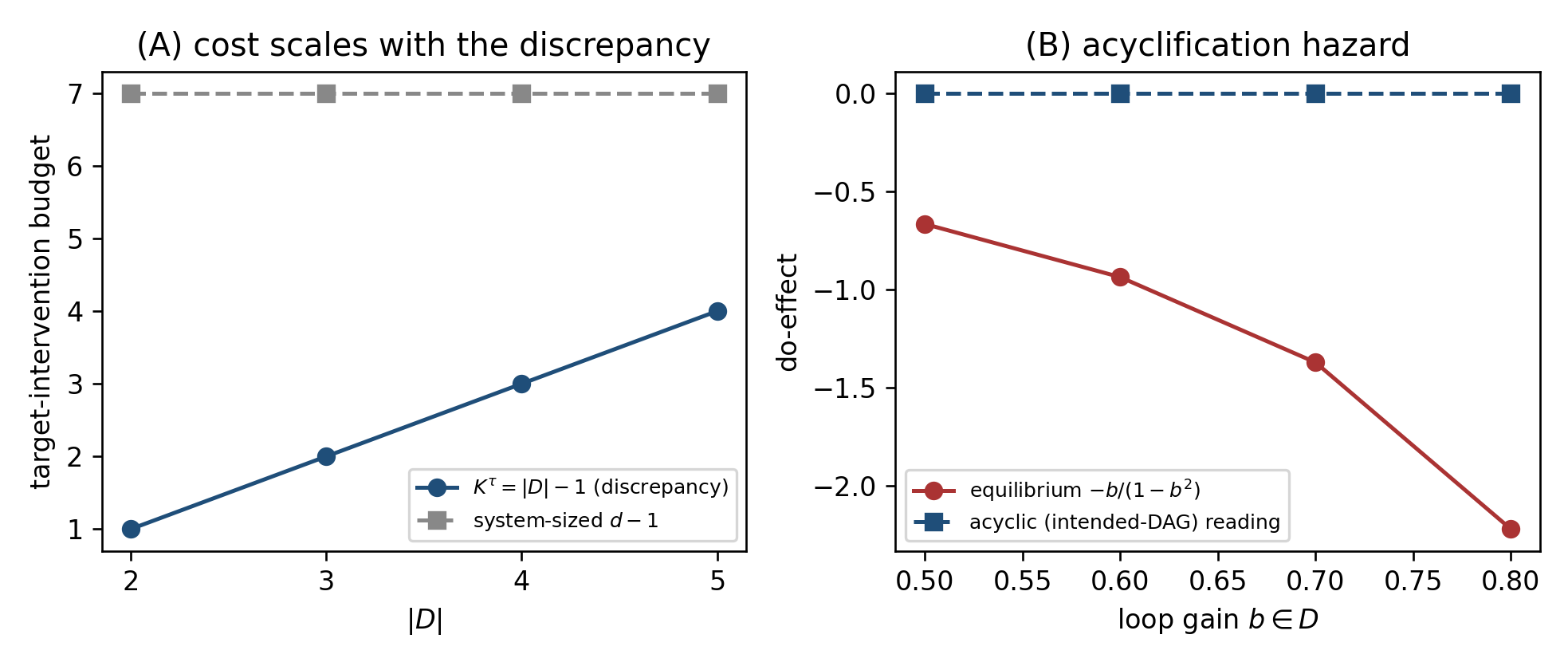}
\caption{\textbf{Design relativity and the acyclification hazard.}
\textbf{Panel A:} in these examples, the target-intervention budget scales with the discrepancy set rather than
the full system; no universal count is implied. \textbf{Panel B:} an acyclic reading misses the nonzero
equilibrium effect in the example of Theorem~\ref{thm:acyc}.}
\label{fig:dr}
\end{figure}
\paragraph{Interpretation.}
Figures~\ref{fig:dash} and~\ref{fig:diagnostics} distinguish experimental information from structural
restrictions. Figures~\ref{fig:transport} and~\ref{fig:dr} illustrate when direct reuse fails and why the
observation model and support conditions must accompany every intervention count.

\section{Discussion and limitations}\label{sec:limitations}
\paragraph{Relation to existing causal theory.}
The hedge constructions of Shpitser and Pearl concern identification in the acyclic causal hierarchy. The
finite-design obstruction here serves a related logical purpose, but it concerns the cross-world coupling and
the equilibrium response of a feedback system. Likewise, cyclic separation and intervention calculus describe
relations among interventional distributions, while a transported per-unit counterfactual additionally requires
factual abduction, equilibrium selection, and alignment across domains. These distinctions explain why neither
acyclic hedges nor cyclic separation alone settles the questions studied here.

\paragraph{Cross-world scope.}
The positive results require the declared boundary class. Monotonicity is an assumption, supported by the
constructed necessity result and the additive linear specialization. For partial factual information,
membership in a fixed-partition tensor-product null, calibration of a named test, and constancy of the query
distribution over the complete legal fibre are distinct claims; loop gain alone establishes none of them. The
completeness of the graphical transport rule remains open.

\paragraph{Design-relative identification.}
Intervention counts depend on the observation model, support, environment library, incoming-block information,
and complete legal fibre. The affine expression $qd-r_G$ applies only at fixed corank and support, within a
fixed-branch relative-interior patch. The query-targeted comparison does not imply universal strict savings,
and the alignment result does not imply a universal one-anchor rule.

\paragraph{Statistical scope.}
Finite-sample query guarantees based on moments require both the selected-summary modulus and the stated
moment-fibre sufficiency condition.
Constancy on a complete-law fibre cannot replace either condition. The joint-influence free-block Wald procedure
is pointwise and asymptotic, with fixed dimension, environment, and branch. The covariance rank-degenerate branch
requires its declared rank, eigengap, and kernel conditions. An estimated-sensor extension of the free-block
procedure is not proved here.

\paragraph{Evidence and external validity.}
The numerical material consists of synthetic or analytic illustrations. It does not establish general
identification, completeness, coverage, or performance in an operational system. Evaluation on substantive
application domains remains future work.

\paragraph{Open problems.}
Three boundaries remain unresolved. First, a complete graphical characterization of transport under feedback is
not yet available. Second, the exact intervention requirement for incomplete discrepancy blocks with live relay
mechanisms remains structure-dependent. Third, finite-sample procedures with an estimated sensor map and
data-dependent covariance rank require additional analysis. The conditional cumulant hierarchy discussed in the
supporting theory should be regarded as a direction for future work, not as a result established here.

\section{Conclusion}
For feedback-capable causal twins, validation and transport are governed by the same basic decision logic.
The mechanism-level separation criterion gives a sufficient condition for direct reuse within the boundary
class; when it fails, direct reuse may still hold for a particular query.
When separation fails but the source and target frames are aligned, the shared selection rule is retained, and
the target discrepancy mechanisms and their noise laws are exactly re-identified, hybrid reconstruction
recovers target post-surgery laws. Per-unit counterfactual transport additionally requires the target
boundary-class conditions; otherwise the complete legal fibre provides the partial-identification answer within
the declared class. This separation between direct reuse, target re-identification, and irreducible ambiguity
is the main practical consequence of the theory.

\section*{Availability}
Code and data for the numerical illustrations are available from the author upon reasonable request and will be
released publicly upon journal publication.

\appendix
\section{Technical arguments for validation}\label{app:validation-proofs}

This appendix collects the arguments used in the main text. Definitions and remarks do not require separate
proofs. When a result concerns a complete legal fibre, all nuisance, support, alignment, normalization, and
selection branches named in the statement are part of that fibre.

\subsection{Rank abduction and partial factual information}

\paragraph{Lemma~\ref{lem:qa}.}
Strict monotonicity gives
$f_i(v_{\mathrm{pa}(i)},u)=Q_i^{\doop}(v_{\mathrm{pa}(i)},F_i(u))$, where $Q_i^{\doop}$ is the conditional
quantile of the structural intervention that fixes the parents. Thus a full factual observation fixes
$\tau_i=F_i(u_i)$ almost surely. An increasing reparameterization of $u_i$ preserves $\tau_i$. The
post-surgery system may therefore be written entirely in terms of the shared ranks, the shared interventional
kernels, and the selection rule. Rank-measurability of $\Sel$ makes the two selected solutions equal. This is
why observational conditionals cannot replace the parent-fixing interventional kernels in a feedback model.

For partial factual information, standard-Borel disintegration gives a regular conditional law
$\Lambda_w^\theta$, unique $P_W^y$-almost everywhere. Its pushforward through the well-posed solve is
$K_I^\theta$. Hence all data-consistent models give the same posterior query law exactly when
$\{[K_I^\theta]_{P_W^y}\}$ is a singleton. Conditional independence of a declared finite partition is instead
equivalent to the tensor-product identity for $\Lambda_w^\theta$; it neither implies nor is implied by equality
of arbitrary query pushforwards.

\subsection{Population validation}

\paragraph{Theorem~\ref{thm:sound}.}
At a query-sufficient design, equality of the complete staged interventional kernels places the candidate and
reference systems in the same complete law fibre. Constancy of the query on that fibre identifies the
query-relevant kernels. Lemma~\ref{lem:qa} then gives the full-factual conclusion, and the singleton
pushforward condition above gives the partial-factual conclusion. The invariance comparison extends the
kernel equality from the staged regimes to every regime used by the query.

The moment-only alternatives require a separate argument. Under identified linear structure, a full-factual
state-functional query is fixed by the factual state and the structural map; an interventional moment query is
fixed when it factors through the identified first two moments. A distributional query additionally requires a
family determined by those moments or equality of the complete interventional kernels. The Gaussian and
centred-exponential example in the theorem has equal first two moments but unequal tail probabilities, proving
that this additional premise cannot be omitted.

\paragraph{Proposition~\ref{prop:selected-modulus}.}
On the compact, closed, Hausdorff definable quotient, define
$g(\theta)=\|\chi(\theta)-\chi(\theta_0)\|_{\mathsf Q}$ and
$h(\theta)=\|R_{\mathcal D}^{\rm mc}(\theta)\|_\oplus$. The condition (Suff-M) gives
$h^{-1}(0)\subseteq g^{-1}(0)$. The definable {\L}ojasiewicz inequality in a common polynomially bounded
o-minimal expansion therefore supplies $g\le C_Lh^s$ for instance-dependent $C_L,s>0$. A finite closed-branch
decomposition uses the minimum exponent and the maximum adjusted constant. The example
$g(t)=t$, $h(t)=e^{-1/t}$ shows why polynomial boundedness is needed. Per-pattern spectral margins and the
remaining compactness conditions make every relevant rational solve continuous; a margin on the observational
matrix alone does not control an unobserved surgery.

\paragraph{Corollary~\ref{cor:query-tolerance}.}
If the query error exceeds $C_LQ_n^s$, the modulus gives
$\|R_{\mathcal D}^{\rm mc}\|_{\oplus}>Q_n$. Since
$Q_n^2=\sum_bq_{b,n}^2$, at least one block has residual norm greater than its own $q_{b,n}$. Acceptance of all
blocks is then contained in acceptance by that one separated block, giving its Type-II bound. This is a
single-block containment argument, not a union bound. The empty-block convention makes the wrong-query event
empty under (Suff-M).

\subsection{Necessity constructions}

\paragraph{Theorem~\ref{thm:nec}.}
For T1, apply a parent-dependent rank-band reflection at a node whose band-driving parent is not its descendant.
For every fixed parent value the map preserves the noise law, so every designed regime has the same law. When
the intervention moves that parent, the factual and counterfactual bands have a positive-measure symmetric
difference and the readout changes. A parent-independent reflection composes with itself and is only a noise
relabeling, which explains the parent-dependence requirement.

For T2, take independent standard Gaussian $U_1,U_2,U_3$ and
$V_3=U_1+U_2+U_3$. Conditioning on $V_3$ gives
$\operatorname{Cov}(U_1,U_2\mid V_3)=-1/3$. Thus the factorization null fails even without a cycle; using the
product variance $4/3$ in place of the true variance $2$ gives the size stated in the theorem. The converse is
the defining tensor-product identity for conditional independence.

For T3, let $Z\sim N(0,1)$, let the surgery indicator be $s\in\{0,1\}$, and use the state space $[-2,2]$ with
\[
 f_0(v,Z)=\tanh Z,
 \qquad
 f_1(v,Z)=v-\frac1{10}\{v^3-3v-\tanh Z\}.
\]
At $s=0$ the equilibrium is uniquely $v=\tanh Z$.  At $s=1$, writing $a=\tanh Z\in(-1,1)$, the equilibrium
equation $v^3-3v-a=0$ has roots
$r_-(a)\in(-2,-1)$, $r_0(a)\in(-1,1)$, and $r_+(a)\in(1,2)$.
Since $\partial_vf_1=1.3-0.3v^2$, the outer roots are stable and the middle root is unstable.  The two models
use, consistently across all their regimes, either the minimum-stable or the maximum-stable SCC-local
selector.  They therefore agree in every design that does not stage $s=1$, but disagree after $\doop(s=1)$.
Indeed $r_+(a)-r_-(a)>17/5$ for all $a\in(-1,1)$: the gap is even, decreases for $a>0$, and at the endpoint
$a=1$ its square is $12-3x^2>741/64>289/25$, where the middle root is $-x$ and $0<x<3/8$.  The example uses
the paper's declared-selection notion of well posedness; it would not satisfy a requirement of raw uniqueness
of every fixed point.

For T4, put
\[
B_0=\begin{pmatrix}0&3/5\\1/2&0\end{pmatrix},\quad
C_0=I-B_0,\quad \Omega_0=I,
\]
and, for the planar rotation $Q_\theta$, define
\[
d_1=\cos\theta-\tfrac12\sin\theta,\quad
d_2=\cos\theta+\tfrac35\sin\theta,\quad
\Lambda_\theta=\operatorname{diag}(d_1^{-1},d_2^{-1}),
\]
\[
C_\theta=\Lambda_\theta Q_\theta^\top C_0,
\qquad B_\theta=I-C_\theta,
\qquad \Omega_\theta=\Lambda_\theta^2.
\]
Whenever $d_1d_2\ne0$,
$C_\theta^{-1}\Omega_\theta C_\theta^{-\top}=C_0^{-1}C_0^{-\top}$, so the passive zero-mean Gaussian law is
unchanged.  At $\theta=.7$ both models are stable, whereas for factual $v=(1,1/2)$, intervention
$\doop(V_0=2)$, and response $Y=V_1$, the counterfactual gap is
\[
B_{\theta,10}-\tfrac12
=\frac{7\tan(.7)}{2\{3\tan(.7)+5\}}>0.
\]
This collision applies only to passive stages or stages lying in the exact stabilizer of the displayed law;
a generic labelled structural probe separates the pair.  The two necessity constructions are independent.

\paragraph{Theorem~\ref{thm:untest} and Corollary~\ref{cor:untest-transport}.}
Choose the rank-space reflection so its upper band boundary separates the factual and counterfactual laws of
the non-descendant parent on a set of positive probability. It preserves every conditional mechanism law and
hence every regime in the finite design, but its two cross-world applications do not cancel on the symmetric
difference of the bands. Multiplication by the nonzero resolvent coefficient gives a positive query gap. Under
symmetric noise the reflection is $u\mapsto-u$ on the band and yields the displayed $2|c||u|$ formula; without
symmetry only positivity of the gap is retained. The cross-domain result applies the same construction to the
target switch-indexed mechanism and uses the target resolvent and target factual law.

\section{Technical arguments for transport under feedback}\label{app:transport-proofs}

\paragraph{Lemma~\ref{lem:auto}.}
In the condensation graph, the ancestor set of $Y\cup Z$ after surgery is closed under parents and under
strongly connected components. Solving components in condensation order shows that its post-surgery law uses
only mechanisms and noise laws in that set. In a linear model, expanding $(I-B_I)^{-1}$ in its convergent
walk series proves the claim under $\rho(B_I)<1$; multiplying by the determinant extends the resulting rational
identity to the entire nonsingular locus.

\paragraph{Theorems~\ref{thm:direct} and \ref{thm:hybrid}.}
Under \Sep, Lemma~\ref{lem:auto} restricts the query law to mechanisms that are invariant and aligned, so the
source and target functionals coincide.  For the hybrid result, write
$\kappa_i^\omega=(f_i^\omega,P_{U_i}^\omega)$ for the complete aligned mechanism--noise pair.  Full target-law
transport requires $\kappa_i^h=\kappa_i^\tau$ for every nonintervened row, identical surgery and clamps,
independent noises, identical variable/noise alignment, the same SCC-local selection semantics, and a
well-posed selected solve in every post-surgery SCC.  Induction along the condensation graph then gives the
same law block by block.  For a query on coordinates $Q$, equality is needed only on the post-surgery ancestral
SCC closure $\operatorname{An}^{\rm SCC}_{G_{I_e}}(Q)$, yielding equality of the $Q$-law.  Equality of only a
specified class of moments transports only queries determined by that class; it does not imply equality of the
complete law unless a separate moment-determinacy premise is supplied.

\paragraph{Lemma~\ref{lem:contam}.}
Fix a connected real-analytic support stratum.  If a single measurable functional $h$ makes
\[
\Delta(\theta)=\int h\,dP^1_{\theta,j}-\int h\,dP^0_{\theta,j}
\]
finite and analytic, and an explicit point on that stratum has $\Delta\ne0$, then equality of the two laws is
contained in the proper analytic zero set $\{\Delta=0\}$.  It consequently has empty interior and chart
Lebesgue measure zero.  The argument must be repeated on every connected stratum; where no separator and
witness are available, the conclusion requires an explicit distributional-faithfulness assumption.

For the linear-intercept specialization, replacing $c_i$ by $c_i+t$ gives
$V^{(t)}-V^{(0)}=t(I-B)^{-1}e_i$ under the common-noise coupling, and hence
\[
E[V_j^{(t)}]-E[V_j^{(0)}]=t[(I-B)^{-1}]_{ji}.
\]
The exact exceptional set is $t=0$ together with the zero set of the corresponding cofactor of $I-B$.
On a connected analytic support stratum, one nonzero cofactor witness makes this a proper analytic set.
Measure-preserving reparameterizations, or strata on which the cofactor vanishes identically, show why no
unqualified global genericity claim is valid.

\paragraph{Theorem~\ref{thm:ruleD}.}
The generalized directed global Markov property of the augmented model gives
$Y\indep\Sw\mid Z,X$ after surgery. Evaluating the conditional law at the source and target point masses of
$\Sw$ yields equality conditional on the full separator $Z\cup X$.  For this source version of the conditional
kernel to define the target integral, the regime-specific target separator law must satisfy
\[
\mu_{\tau,e}^{ZX}\ll\mu_{\pi,e}^{ZX}.
\]
Then, for the common target-a.e. version $K_{\pi,e}$,
\[
P_\tau^e(Y\in A)=\int K_{\pi,e}(A\mid z,x)\,\mu_{\tau,e}^{ZX}(dz,dx).
\]
Marginalizing $X$ requires the target conditional law of $X$ given $Z$.  Reusing the source-marginalized kernel
is valid only when that conditional law agrees in the two domains target-a.e., or when the kernel is
$x$-constant on the relevant target support.  Matched marginals alone do not suffice: source mass on $X=Z$
and target mass on $X=1-Z$ give a direct counterexample.

\paragraph{Theorem~\ref{thm:cf}.}
The hybrid fixes the query-relevant interventional kernels. Lemma~\ref{lem:qa} proves the full-factual claim,
while the complete-fibre singleton proves the partial-factual claim. Below point identification, the answer is
the image of the target legal fibre; the width statements follow from Corollaries~\ref{cor:wcw-upper} and
\ref{cor:wcw-additive}.

\paragraph{Definition~\ref{def:transport-fibre}, Proposition~\ref{prop:pointwise-transport},
Lemma~\ref{lem:exact-curve}, Theorem~\ref{thm:fail}, and Corollary~\ref{cor:collision-templates}.}
The standard-Borel premise makes every $Q_I(\theta)$ a well-defined equivalence class under the common factual
law. Identification is constancy on the actual fibre, so a separated pair is both necessary and sufficient for
pointwise failure. An exact-evidence curve lies in that fibre. In Theorem~\ref{thm:fail}, the affine scalar
functional changes by $\beta t$, so every admissible nonzero point on the declared interval is separated from
the reference. Existing constructions instantiate this argument only after establishing global membership,
exact evidence equality, the full selection/noise requirements, and query motion. This explains why ancestry,
a graph label, or stabilizer dimension alone is insufficient.

\paragraph{Theorem~\ref{thm:acyc}.}
Solving the two-node equilibrium gives the effect $-b/(1-b^2)$. Substitution of $b=.5$ and $.3$ gives
$-2/3$ and $-30/91$. The acyclic graph deletes the feedback walk and returns zero, so it cannot represent this
domain dependence.

\section{Technical arguments for linear target identification}\label{app:linear-proofs}

\paragraph{Lemma~\ref{lem:cff-fibre}.}
The pinned-row identity is
$Z_{\Dset}Y=E_{\Dset}-Z_FC_{F,\cdot}$. Its complete solution is $Y_0+NT$. Applying the affine structural
equalities to $NT$ gives rank $r_G$ and hence dimension $qd-r_G$. Strict stability, invertibility, and
present-edge inequalities do not alter the relative-interior dimension. The complementary-nullity theorem for
the inverse pair $(C,P)$ gives the equality of nullities.  The associated determinant-zero locus is locally
codimension one only at an admissible regular zero, that is, where the derivative of $\det C_{FF}$ is nonzero
along an admissible tangent direction.  A model-compatible witness is obtained from
\[
V=\begin{pmatrix}1&0\\0&1\\1&0\\0&1\end{pmatrix},\qquad
W=\begin{pmatrix}0&1\\1&0\\0&-1\\-1&0\end{pmatrix},\qquad B=VW^\top.
\]
Here $B^2=0$ and a two-by-two free block of $C=I-B$ is
$\left(\begin{smallmatrix}1&-1\\-1&1\end{smallmatrix}\right)$; its determinant has a nonzero admissible
tangent derivative.  At rank deficiency at least two the adjugate vanishes, so the determinant locus may be
singular and no global hypersurface claim follows.

\paragraph{Theorem~\ref{thm:suff}.}
Active probing pins the labelled discrepancy rows of $P=(I-B)^{-1}$. The invariant rows supply the equations
$C_{FF}P_F=E_F-C_{F\Dset}P_\Dset$. Thus $C_{FF}$ nonsingular gives the unique ambient solve, and inversion gives
$B$. On a fixed singular support branch, Lemma~\ref{lem:cff-fibre} supplies the exact remaining affine fibre.
For the normalized rotational calculation on a complete source block of size $m$, let $A$ be the matrix of
$k$ signed, labelled, linearly independent anchor directions.  Equivariance makes the anchor-preserving
stabilizer
\[
\{Q\in SO(m):Q^\top A=A\}\cong SO(m-k).
\]
Its dimension is $\binom{m-k}{2}$, so the anchors remove
$k(2m-k-1)/2$ rotational directions and $m-1$ generic anchors eliminate that continuous orbit.  The generic
rank assertion follows from a nonzero minor at the identity anchor frame and persists on an open dense set of
the fixed-support chart.  This orbit calculation becomes a statement about the complete local data fibre only
under the theorem's local-completeness hypothesis: in a neighborhood of the reference point, every legal
data-equivalent model must lie on this normalized anchor-preserving orbit.  Global identification further
requires the complete legal fibre to have no transverse, disconnected, or discrete branches.

\paragraph{Proposition~\ref{prop:jiw-free-block}.}
For each independent cluster, stack the source, target, shared-baseline, and nuisance influence contributions
before taking the outer product.  The multivariate cluster central limit theorem and delta method give one
joint affine chart for $(\widehat C,\widehat G,\widehat Z)$.  Construct a single $(1-\alpha_S)$ ellipsoid in
that chart; all three radii $\delta_C,\delta_G,\delta_Z$ are projections of this same event, not separately
calibrated marginal intervals.  Their covariance includes every overlap and plug-in cross term named in the
statement.  Weyl's inequality gives
$\sigma_{\min}(C)\ge L:=\sigma_{\min}(\widehat C)-\delta_C$, and the procedure refuses unless $L\ge\eta$.
With $Q=E_F-GZ$,
\[
\delta_Q=\delta_G\|\widehat Z\|
 +(\|\widehat G\|+\delta_G)\delta_Z.
\]
Finally,
$C(\widehat X-X)=(C-\widehat C)\widehat X+(\widehat Q-Q)$ and
$\|\widehat Q-Q\|\le\delta_Q$ give
\[
\|\widehat X-X\|\le
\frac{\delta_Q+\delta_C\|\widehat X\|}{L}.
\]
Thus the unconditional probability of accepting while the displayed bound fails has limsup at most
$\alpha_S$.  A reduced guard using its own $\alpha_C$ is a separate construction and cannot be combined with
this conclusion without a new allocation.  Every listed refusal corresponds to a missing premise. In
particular, an estimated sensor would alter the complete joint influence vector and is not supplied here.

\paragraph{Corollary~\ref{cor:support-aware-counts}, Theorem~\ref{thm:necessity}, and
Corollary~\ref{cor:cheap}.}
Unknown incoming edges leave $d-|\Dset|$ unconstrained directions in an unprobed row. With known incoming edges,
the missing-row completion reduces to the universal-parent condition on the stated no-relay block. In the
shared-sensor covariance regime the complete unpinned block has the orthogonal deficit above. For the lower
bound, every choice of $|\Dset|-2$ probes leaves two rows whose stable rotation remains inside a complete
full-rotational component and preserves the exact designed laws. Choosing a point in the identity component
keeps $\rho(B)<1$; the query-motion premise provides a separated member. In incomplete support the rotation may
leave the class, which is why no uniform lower bound is asserted there. The count comparison in
Corollary~\ref{cor:cheap} is therefore restricted to the ambient class stated in that corollary; support-aware
classes use $K_{\min}^{\rm ID}$.

For the unknown-sensor collision used in the lower-bound discussion, choose distinct indices $r,s$, put
$u=e_s-C_{rs}e_r$, and define
\[
E_t=tu e_r^\top,\quad C_t=C(I+E_t),\quad
P_t=(I+E_t)^{-1}P,\quad H_t=H(I+E_t),\quad \Omega_t=\Omega.
\]
Then $\operatorname{diag}(CE_t)=0$, $H_tP_t=HP$, and
\[
P_t=P-\frac{t}{1-tC_{rs}}(e_s-C_{rs}e_r)e_r^\top P.
\]
For a Gaussian zero-mean source with common covariance, identical stochastic-intervention sources, and every
retained named stage disjoint from $\{r,s\}$, the P1 intervention identity makes the complete retained loading
matrix—and hence every recorded joint Gaussian law—identical.  In the ambient support model the perturbation
is legal for small $t$, and
$e_s^\top P_t=e_s^\top P-\frac{t}{1-tC_{rs}}e_r^\top P$, so the labelled held-out $s$-stage response row moves.
Moreover, $\frac{d}{dt}(B_t)_{sr}|_{0}=-(1-C_{rs}C_{sr})$, so the structural response query moves away from
the exceptional locus. This is a quotient-valid query separation. Fixed known $H$ generically excludes the
collision, and a stage involving $r$ or $s$, a forbidden support entry, a nonidentical intervention source, or
a nonzero source mean not transformed consistently invalidates the construction.

\section{Technical arguments for partial identification and alignment}\label{app:alignment-proofs}

\paragraph{Theorem~\ref{thm:gauge} and Proposition~\ref{prop:sel}.}
By definition, the sharp identified set is the image of the complete legal fibre. On finitely many
semialgebraic strata, the Tarski--Seidenberg theorem makes the scalar image a finite union of points and
intervals. Compactness plus continuity guarantees a compact image and attainment of its extrema. A spectral
margin does not bound transverse completion coordinates, so it cannot provide that conclusion alone. Removing
the stability restriction lets the rotation approach the diagonal-normalization pole.  For example, with
\[
B_0=\begin{pmatrix}0&3/5\\1/2&0\end{pmatrix},\qquad t=\tan\theta,
\]
the normalized rotation gives
\[
B_{12}(t)=\frac{2(5t-3)}{5(t-2)},\quad
B_{21}(t)=\frac{5(2t+1)}{2(3t+5)},\quad
\chi(t)=\frac{13t+10}{2(3t+5)}.
\]
The stable component containing the identity is the open interval
$((1-\sqrt{170})/13,(1+\sqrt{170})/13)$, and its query image is exactly
\[
\left(\frac12-\frac{5\sqrt{170}}{68},
      \frac12+\frac{5\sqrt{170}}{68}\right).
\]
The endpoints are excluded stability-boundary models.  An exact unbounded response-row family is
$B_t=\left(\begin{smallmatrix}0&t\\1/(4t)&0\end{smallmatrix}\right)$,
$H_t=I-B_t$, $\Omega_t=I$: then $H_t(I-B_t)^{-1}=I$, $\rho(B_t)=1/2$, and
$\det(I-B_t)=3/4$, while the labelled response $B_{12}=t$ is unbounded.  This latter calculation is a
response-row illustration, not a claim that every staged law is equal.

\paragraph{Theorem~\ref{thm:wcw}.}
The query diameter is taken over the joint ambiguity set, so it is the sharp primary object. A
parent-dependent measure-preserving reshuffle attains, in the extended-essential-supremum sense, the
scalar-noise width multiplied by the resolvent coefficient when the node is witness-admissible for the two
worlds and the legal ambiguity class contains the corresponding parent-indexed rearrangements. More precisely,
under those conditions, if $U$ is atomless with essential endpoints $a,b$ and $c$ is the relevant coefficient,
then
\[
\sup_{T_0,T_1}\operatorname*{ess\,sup}
 |c|\,|T_1(U)-T_0(U)|=|c|(b-a),
\]
where the supremum is over law-preserving rearrangements and may be infinite.  The antitone quantile coupling
gives the lower bound.  This is not generally a pointwise maximum: for $U\sim\mathrm{Unif}(0,1)$ the width is
one but the endpoint gap is not attained almost surely. Positive-mass endpoint attainment needs compatible
endpoint atoms. Randomization may split existing compatible atomic mass, but cannot create endpoint atoms while
preserving atomless marginals. A parent-independent
involution is a relabeling and contributes zero. Without witness-admissibility, the same essential-range
quantity is only an enlarged-coupling upper bound; it need not equal the actual legal-fibre width. The rotation
term is the same orbit already present in the
within-world identified set, so it is not counted twice.

\paragraph{Corollary~\ref{cor:wcw-upper}.}
Under recombination closure, join any two admissible points by a path that changes each ambiguity block once.
The triangle inequality bounds each step by the corresponding diameter uniformly over the other blocks.
Summing gives the upper bound. The diagonal and staircase examples show respectively why product closure and
the one-change path are required.

\paragraph{Corollary~\ref{cor:wcw-additive}.}
In the decoupled subclass the query is a sum of functions on transverse coordinate blocks, so the supremum and
infimum separate and the diameter is the sum of the three diameters. A shared selection rule removes the fourth,
branch-selection term. A bilinear readout violates this separability and explains why the general result is only
the uniform upper bound.

\paragraph{Corollary~\ref{cor:false} and Lemma~\ref{lem:gaugecov}.}
A second-moment functional is constant on every legal second-moment orbit. If the query is not constant there,
the same decision is assigned to two different query values. More generally, a transported query descends to
the retained object exactly when it is constant on the entire representative fibre; a group orbit suffices
only after transitivity has been established.

\paragraph{Proposition~\ref{prop:align}.}
The first claim follows by retaining every legal member of the labelled-law level set. For a complete source
block of size $m$, $k$ signed, labelled, independent anchor directions form a matrix $A$ satisfying the
equivariance relation $A(T_Q\theta)=Q^\top A(\theta)$. Assume additionally the local faithfulness condition
$a_D(T_Q\theta)=a_D(\theta)$ if and only if $Q^\top A(\theta)=A(\theta)$ for $Q$ near the identity. Hence the residual normalized rotational stabilizer is
\[
\operatorname{Stab}_{SO(m)}(A)\cong SO(m-k),
\]
with dimension $\binom{m-k}{2}$.  Thus $m-1$ generic anchors eliminate the normalized continuous $SO(m)$
orbit.  A special query can require fewer anchors precisely when this residual stabilizer acts trivially on
that query.  These are local orbit statements.  Only the explicitly assumed local-completeness condition
identifies the orbit with the complete nearby data fibre; global point identification also requires the absence
of transverse, discrete, and disconnected branches.  Without equivariance the survivor is a general level set,
and tangent rank alone cannot remove those branches.

\section{Technical arguments for query design and inference}\label{app:design-proofs}

\paragraph{Theorem~\ref{thm:vmin}.}
Query identification is constancy on $\mathcal K_{\mathcal D}$, so minimizing over the environment library gives
$V_{\min}^{\rm qry}$. Constancy differentiates to zero along every feasible curve, proving
$V_{\min}^{\rm FO}\le V_{\min}^{\rm qry}$. A singleton complete fibre identifies every query, proving the
second inequality.  When the legal fibre has finitely many relative-interior strata and, on stratum $s$, the
reference-law equality is exactly $\theta_0+Q_sZ_s$, where $Q_s\in\mathbb R^{n_s\times p_s}$ has full column rank,
while the additional environments impose exactly $M_sz=0$ with $M_s$ having $p_s$ columns, the complete fibre is
\[
\bigcup_s\{\theta_0+Q_sz:z\in Z_s\cap\ker M_s\}.
\]
At a relative-interior point its local dimension is
$p_s-\operatorname{rank}M_s$.  A query is globally identified if and only if it is constant on every connected
component of every stratum and those constants agree across all components and strata.  For an affine query
$L\theta$, within-component constancy is $LQ_s\ker M_s=\{0\}$; the cross-component comparisons remain
necessary.  Thus zero local dimension or full Jacobian rank on one branch does not by itself establish global
identification.

\paragraph{Corollary~\ref{cor:vmin-rowspace}.}
Under the complete affine-fibre premise, the surviving displacements are exactly $Q\ker M_{\mathcal D}$.
Constancy of $c+G\theta$ is therefore $GQ\ker M_{\mathcal D}=0$. The finite-dimensional fundamental theorem of
linear algebra makes this equivalent to
$\operatorname{row}(GQ)\subseteq\operatorname{row}(M_{\mathcal D})$. The pinned-response identity follows by
multiplying $P_0(I-B_0)=I$ and subtracting. None of these equalities characterizes a larger nonlinear or
unprofiled law fibre.

\subsection{Inference proofs}\label{app:inference-proofs}

\paragraph{Theorem~\ref{thm:instr}.}
The reconstructed mean is a smooth function of the sample mean and, when applicable, the sensor estimate. The
multivariate central limit theorem and delta method give the stated influence covariance, including the two
same-sample cross terms. Studentization on the declared estimable range gives the chi-square limit. A plugged-in
sensor without its influence contribution is outside this argument.

\paragraph{Proposition~\ref{prop:mean-concentration}.}
The stated whitened sub-Gaussian condition gives the Euclidean sample-mean bound with radius
$t_{e,n}^m(\eta)$. Under the alternative, the triangle inequality leaves a deviation larger than the null
radius whenever the population gap exceeds the sum of the null and power radii. A sample-split sensor error is
added with the displayed probability allocation.

\paragraph{Proposition~\ref{prop:covariance-adf}.}
Writing the centered quadratic score in Frobenius-isometric symmetric coordinates gives an asymptotically linear
statistic with covariance $W_e$. The influence function includes the centering correction and every
estimated nuisance derivative and overlap block stated in the proposition. In the externally known exact frame,
the full-rank studentized quadratic form is chi-square. Merely sharing an unidentified $H$ across twins is insufficient: the free-block selection and
singular subspaces need not be preserved by the residual gauge.  If a separately proved whitening map is used,
it must be common, identified, positive definite, block preserving, and its estimation influence must enter the
same joint covariance.  The proposition makes no such optional claim.

\paragraph{Proposition~\ref{prop:covariance-concentration}.}
Write
$\widehat\Sigma-\Sigma=n^{-1}\sum_i(W_iW_i^\top-\Sigma)-\bar W\bar W^\top$.
Products of sub-Gaussian coordinates are sub-exponential, so entrywise Bernstein gives the two-branch threshold.
A union bound over the $k_e$ symmetric coordinates and
$\|\mathrm{svec}(A)\|_2\le\sqrt{2k_e}\|A\|_{\max}$ give $q_{e,n}^\Sigma$ and the two displayed sample-size terms. The declared
eigenvalue envelope is needed to make the radius operational.

\paragraph{Proposition~\ref{prop:covariance-degenerate}.}
On a declared fixed-rank stratum, spectral truncation restricts the Wald statistic to the estimable range and
gives the corresponding chi-square limit.  Consistent rank selection requires a fixed population eigengap and
$O_p(n^{-1/2})$ projector error.  The hard-kernel companion is available only when the kernel is exact: sample
centering contributes $O_p(n^{-1})$, whereas a fixed kernel alternative moves the score by $\Theta(1)$, so a
threshold $t_n\downarrow0$ with $nt_n\to\infty$ has pointwise eventual power one and vanishing null rejection.
The hard-clamp formula additionally assumes independent zero-mean coordinates and a proved span equality for
the noise-square directions.  That discrete construction lies outside the paper's standing continuous-noise
class and is presented only as a separately scoped extension. Under a soft probe the actual $W_e$ determines
the rank, and neither the hard-clamp span nor uniform near-singular power may be imported.

\paragraph{Corollary~\ref{cor:aggregate-tolerance} and Proposition~\ref{prop:invariance-ceiling}.}
For a validation-independent fixed candidate, the aggregate result substitutes the mean and covariance radii
above into Corollary~\ref{cor:query-tolerance}, using the fixed block scales and the same direct-sum Euclidean
and $\mathrm{svec}$ norms as the blockwise procedures.  Acceptance is the intersection of blockwise events;
the power guarantee is therefore controlled by the largest separated block Type-II bound, not by a union of
those bounds.

For the Gaussian scale illustration, if $\Sigma_\tau=c\Sigma_\pi$ and the source covariance uses divisor
$n-1$, then
\[
c(n-1)\operatorname{tr}(\Sigma_\tau^{-1}S)\sim\chi^2_{m(n-1)}.
\]
Thus for $D=\operatorname{tr}(\Sigma_\tau^{-1}S)-m$,
\[
E[D]=\frac{m(1-c)}{c},\qquad
\operatorname{Var}(D)=\frac{2m}{c^2(n-1)},\qquad
\mathrm{SNR}^2=\frac{(n-1)m(1-c)^2}{2}.
\]
With divisor $n$, both the pivot and null centering change. The scale statistic is blind to a trace-free change
only in the stated orientation: source covariance $\Omega'=\operatorname{diag}(1+a,1-a,1)$ and comparator $I$.
Reversing the two gives $\operatorname{tr}(\Omega'^{-1})-3=2a^2/(1-a^2)>0$. The full covariance statistic sees
either nonzero change.

Finally, residual-law invariance is necessary for an invariant mechanism but cannot distinguish two
quotient-valid models whose regime-specific exact laws agree.  The passive Gaussian collision above and the
parent-dependent noise rearrangement provide existential pairs for the stated regimes.  Representative-specific
quantities such as entries of an unidentified sensor are not admissible queries.  The ceiling is therefore
regime-specific and existential, not a universal claim about every intervention library.

\small\bibliographystyle{plainnat}\bibliography{ecg-c2-references}
\end{document}